\newcounter{ex}
\theoremstyle{plain}
\newtheorem{theorem}{Theorem}[section]
\newtheorem{lemma}[theorem]{Lemma}
\newtheorem{corollary}[theorem]{Corollary}
\newtheorem{example}[ex]{Example}
\newtheorem*{remark}{Remark}
\theoremstyle{definition}
\newtheorem{definition}{Definition}[section]
\newcommand{\ket}[1]{|#1\rangle}
\newcommand{\bra}[1]{\langle#1|}
\newcommand{\lcm}{\text{lcm}}
\newcommand{\hwgrp}[0]{\overline{\mathcal{P}}_n}
\definecolor{darkgreen}{rgb}{0,0.5,0}
\definecolor{cyan}{rgb}{0,0.5,0.5}
\definecolor{orange}{rgb}{1.0,0.4,0}
\newcommand{\edit}[1]{{\color{black}{#1}}}
\newcommand{\editt}[1]{{\color{black}{#1}}}
\newtheorem*{rep@theorem}{\rep@title}
\newcommand{\newreptheorem}[2]{%
\newenvironment{rep#1}[1]{%
 \def\rep@title{#2 \ref{##1}}%
 \begin{rep@theorem}}%
 {\end{rep@theorem}}}
\date{}
\begin{document}

\title[Structural aspects of the qudit  Pauli group]{The qudit Pauli group: non-commuting pairs, non-commuting sets, and structure theorems}
%

\author{Rahul Sarkar}
\address{Institute for Computational and Mathematical Engineering, Stanford University, Stanford, CA 94305}
\email{rsarkar@stanford.edu}
\thanks{Rahul Sarkar was funded by the Stanford Exploration Project for the duration of this study.}

\author{Theodore J.~Yoder}
\address{IBM T.J. Watson Research Center, Yorktown Heights, NY}
\email{ted.yoder@ibm.com}
\thanks{}

\date{\today}

\maketitle

\begin{abstract}
Qudits with local dimension $d>2$ can have unique structure and uses that qubits ($d=2$) cannot. Qudit Pauli operators provide a very useful basis of the space of qudit states and operators. We study the structure of the qudit Pauli group for any, including composite, $d$ in several ways. \edit{To cover composite values of $d$, we work with modules over commutative rings, which generalize the notion of vector spaces over fields.} For any specified set of commutation relations, we construct a set of qudit Paulis satisfying those relations. We also study the maximum size of sets of Paulis that mutually non-commute and sets that non-commute in pairs. Finally, we give methods to find near minimal generating sets of Pauli subgroups, calculate the sizes of Pauli subgroups, and find bases of logical operators for qudit stabilizer codes. Useful tools in this study are normal forms from linear algebra over commutative rings, including the Smith normal form, alternating Smith normal form, and Howell normal form of matrices. \editt{Possible applications of this work include the construction and analysis of qudit stabilizer codes, entanglement assisted codes, parafermion codes, and fermionic Hamiltonian simulation.}
\end{abstract}

\section{Introduction}
\label{sec:intro} 

Typical models for quantum computation assume that quantum information is encoded into qubits, which are $d=2$ level systems analogous to classical bits. However, systems with $d>2$ levels, called qudits, also have theoretical advantages, including information storage \cite{greentree2004maximizing, grassl2004optimal,rather2022thirty}, computation \cite{nielsen2002universal,moussa2015quantum,bocharov2017factoring}, the fault-tolerant creation of magic states \cite{campbell2012magic,krishna2019towards}, and simpler realizations of topological phases \cite{ellison2022pauli,ellison2022pauli2}. Experimental implementations of qudits exist. For instance qutrits ($d=3$) with entangling operations can be implemented in superconducting \cite{goss2022high,luo2022experimental} and photonic \cite{nisbet2013photonic} systems. Larger, composite values of $d$ exist as well across superconducting \cite{kues2017chip,fischer2022towards}, photonic \cite{wang2018proof}, and molecular systems \cite{moreno2018molecular,chizzini2022quantum}. We assume throughout that $d=p_0^{\alpha_0}p_1^{\alpha_1}\dots p_{m-1}^{\alpha_{m-1}}$ is the prime factorization of a positive integer $d \geq 2$ into unique primes $p_i$.

Central to the understanding of qubits is the Pauli group on $n$ qubits, a subgroup of $\mathrm{U}(2^n)$ generated by products and tensor products of $X=\left(\begin{smallmatrix}0&1\\1&0\end{smallmatrix}\right)$ and $Z=\left(\begin{smallmatrix}1&0\\0&-1\end{smallmatrix}\right)$. The Pauli group has myriad uses in quantum theory, including facilitating the description of stabilizer states, error-correcting codes, and quantum errors \cite{gottesman1997stabilizer}. The qudit Pauli group, or Heisenberg-Weyl group, plays a similar role in understanding qudits \cite{gottesman1998fault}. Here we take
\begin{equation}\label{eq:qudit_paulis}
X=\sum_{j=0}^{d-1}\ket{j+1}\bra{j},\quad Z=\sum_{j=0}^{d-1}\omega^j\ket{j}\bra{j},
\end{equation}
where $\omega=e^{2\pi i/d}$, as generators of the $n$-qudit Pauli group, a subgroup of $\mathrm{U}(d^n)$. All matrices $X^aZ^b$ for $a,b\in\mathbb{Z}_d$ are distinct. If we take the group commutator \edit{$[A,B]:=ABA^{-1}B^{-1}$} of two single-qudit Paulis, we find $[X^aZ^b,X^{a'}Z^{b'}]=\omega^{a'b-b'a}I$, for identity matrix $I$. Thus, while qubit Paulis $P,Q$ only either commute $[P,Q]=I$ or anticommute $[P,Q]=-I$, qudit Paulis can fail to commute in any one of $d-1$ distinct ways, $[P,Q]=\omega^cI$ for any $c=1,\dots,d-1$. This is one reason there is interesting structure in the qudit Pauli group that is absent for qubits.

In this paper we communicate several structural theorems about sets and groups of qudit Pauli operators. We think it is illustrative to present our results in comparison with the corresponding ideas for qubits, which might be familiar to some readers.

{\it Non-commuting pairs.} We say Paulis $s_0,\dots, s_{k-1}$ and $t_0,\dots, t_{k-1}$ are non-commuting pairs if all pairs $(s_i,t_i)$ do not commute with each other, but do commute with all other $s_j$ and $t_j$ for $j\neq i$. For qubits (and in fact any prime $d$), the maximum number of non-commuting pairs on $n$ qudits is simply $k=n$, for instance $(X,Z)$ on one qubit. However, for composite $d$, we can have more. For example, $d=6$ permits $(X^2,Z^2)$ and $(X^3,Z^3)$ as a collection of $k=2$ non-commuting pairs on one qudit.  In Section~\ref{sec:max-pairs}, we will show that the maximum number of non-commuting pairs is $k=mn$, where $m$ is the number of unique primes in the factorization of $d$. \edit{We use this result to answer some further questions in Sections~\ref{sec:achievable-patterns} and \ref{sec:group_theory}. The key tool used to obtain this result is Lemma~\ref{lem:obs-lem}, which also finds use in Section~\ref{sec:achievable-patterns}.}

{\it Non-commuting sets.} A set of Paulis $\{q_0,q_1,\dots,q_{h-1}\}$ is non-commuting if $[q_i,q_j]\neq I$ for every $i\neq j$. On a single qubit, an example largest non-commuting set is $\{Z,X,XZ\}$ and, more generally for prime $d$, a largest set has size $h=d+1$. In contrast, on a single $d=6$ qudit, the largest non-commuting sets have size $h=12$, e.g.~
\begin{equation}
\{Z,X,XZ,XZ^2,XZ^3,XZ^4,XZ^5,X^2Z,X^2Z^3,X^2Z^5,X^3Z,X^3Z^2\}.
\end{equation}
We will show in Section~\ref{ssec:noncomm-set-max-size} that $h=\Psi(d)\ge d+1$ is the maximum size non-commuting set on a single qudit, where $\Psi$ is the Dedekind Psi function from number theory. The situation of non-commuting sets on more than one qudit is complicated, and we discuss some bounds on the size in that setting. We also note here that if one additionally requires the Paulis in the non-commuting set to all have the same group commutator value, the size of the maximum set is $2n+1$, independent of $d$ \cite{gungordu2014parafermion}.

{\it Qudits needed to create commutation patterns.} Suppose $C\in\mathbb{Z}^{l\times l}$ has zero diagonal, $C_{ii}=0$ for all $i$, and is anti-symmetric, $C=-C^T$. These two properties define what is known as an {\it alternating} matrix. What is the minimum number of qudits $n$ such that there exists a list of $n$-qudit Paulis $\{q_0,q_1,\dots,q_{l-1}\}$ with $[q_i,q_j]=\omega^{C_{ij}}I$? In Appendix B of \cite{sarkar2021graph} and in \cite{gunderman2023transforming}, it was shown that the number of qubits required is half the rank of $C$ as a matrix over the field $\mathbb{F}_2$. For instance, if
\begin{equation}
C=\left(\begin{array}{cccc}0&3&0&0\\-3&0&0&0\\0&0&0&5\\0&0&-5&0\end{array}\right),
\end{equation}
then two qubits are necessary and sufficient via the set $\{X\otimes I,Z\otimes I,I\otimes X,I\otimes Z\}$. However, just one $d=15$ qudit suffices: $\{X^3,Z^9,X^5,Z^5\}$. In \edit{Theorem~\ref{thm:min-qubits-comm-matrix}} of Section~\ref{sec:achievable-patterns}, we show that in general, the number of qudits is half the minimal number of vectors generating the column-space of $C$ as a matrix over $\mathbb{Z}_d$, which can differ from other definitions of matrix rank when $d$ is composite. We also give an efficient algorithm to find a satisfying set of Paulis using a matrix decomposition called the alternating Smith normal form \cite{kuperberg2002kasteleyn}. \edit{This approach effectively reduces the problem to the construction of collections of non-commuting pairs, solved in Section~\ref{sec:max-pairs}.}

{\it Minimal and Gram-Schmidt generating sets.} Suppose $A^{\otimes3}$ is shorthand for $A\otimes A\otimes A$, $S=\{X^{\otimes 3},Z^{\otimes 3},\omega(XZ)^{\otimes 3}\}$ is a set of qudit Paulis, and $G=\langle S\rangle$ is the group generated by $S$. What is the smallest generating set of $G$? We answer this question in general in Section~\ref{ssec:minimal-gen-sets} using the Smith normal form of the matrix over $\mathbb{Z}_d$ representing $S$. For $d=2$, $\omega=-1$ and a smallest generating set for this example is $\{X^{\otimes 3},Z^{\otimes 3}\}$. However, for $d=6$, the last generator contributes a phase unobtainable from the first two generators alone and so any minimal generating set is larger, e.g.~$\{X^{\otimes 3},Z^{\otimes 3},\omega I^{\otimes 3}\}$. A Gram-Schmidt generating set for $G$ is one that can be written as some number of non-commuting pairs $(s_i,t_i)$ and additional elements $u_i$ that commute with all elements of $G$. If $G$ is the centralizer of a qudit stabilizer group, then the non-commuting pairs in Gram-Schmidt generating set provide a basis for the logical operators of the stabilizer code, as is done for qubits in \cite{wilde2009logical}. For instance, $\{X^{\otimes 3},Z^{\otimes 3},\omega I^{\otimes 3}\}$ is also a Gram-Schmidt generating set for the example $G$ with $d=6$, where the first two elements are a non-commuting pair. Our construction of Gram-Schmidt generating sets in Section~\ref{ssec:canonical-form-gen-set} guarantees the minimal number of non-commuting pairs, and again makes use of the alternating Smith normal form.

{\it Subgroups generated by maximum non-commuting pairs.} For qubits (and other qudits with prime $d$), maximum collections of non-commuting pairs generate the full Pauli group. This is clearly the case, for instance, for the maximum collection $(X,Z)$ on a single prime-$d$ qudit. However, for composite $d$ this is no longer necessarily the case. For instance, for $d=12$, $(X^3,Z^6)$ and $(X^4,Z^4)$ generate only $\langle \omega I, X,Z^2\rangle$, a proper subgroup of the full single-qudit Pauli group, despite being a maximum size collection of non-commuting pairs. We show in Section~\ref{ssec:subgroups-noncomm-pairs} that this can only happen when $d$ is not square-free. We also go further and find the size of groups generated by arbitrary generating sets as well.

\edit{

\subsection{Related work}

In the remainder of this section, we discuss prior works and similar results in the existing literature, and mention applications and contributions of our current results.

One of the important uses of qubit based quantum computers is the simulation of Hamiltonians of quantum systems, such as atoms or molecules. The way this is achieved is by mapping the algebra of fermions or bosons, depending on whether one wants to simulate a fermionic or bosonic system, to the algebra of qubits. Popular encoding techniques to achieve such a transformation are the Jordan-Wigner transform \cite{jordan1993paulische}, the Bravyi-Kitaev encoding \cite{bravyi2002fermionic}, the Verstraete-Cirac mapping \cite{verstraete2005mapping}, Fenwick trees \cite{havlivcek2017operator}, and ternary trees \cite{jiang2020optimal}, but other options also exist in the literature \cite{bravyi2017tapering,setia2019superfast,setia2020reducing,seeley2012bravyi,steudtner2017lowering}. One desirable property during this mapping process is to use as few qubits as possible; in fact this has been the focus of the methods described in \cite{bravyi2017tapering,setia2019superfast,setia2020reducing,steudtner2017lowering}. Equivalently, this problem is the same as finding a set of Pauli operators using the minimum number of qubits, where the Paulis must satisfy a certain set of commutation relations --- for example in the case of fermionic simulation, the Paulis that the fermionic modes are mapped to must anticommute with one another. This exact question for the qubit case has been answered in \cite{gunderman2023transforming}, and the result there generalizes to all qudits of prime dimension. Here, we cover the case of qudits of composite dimension. 

The question of finding Paulis using a minimum number of qubits that satisfy a prescribed set of commutation relations also comes up in the context of entanglement assisted quantum error-correcting codes (EAQECC) \cite{brun2006correcting,hsieh2008entanglement,wilde2008optimal}, and quantum convolutional codes \cite{houshmand2012minimal}. A typical EAQECC is described as $[[n,k,d;c]]$, where $n$ qubits are used to encode $k$ logical qubits with code distande $d$, and using $c$ ebits (or extended qubits). The setup for EAQECC is as follows: we initially have a set of Paulis $\{p_0, p_1,\dots,p_{k-1}\}$ on $n$ qubits which do not necessarily commute, and then one seeks to create stabilizer codes out of them by using a set of Paulis $\{q_0, q_1,\dots,q_{k-1}\}$ supported on $c$ extended qubits, such that the set of Paulis $\{p_0 \otimes q_0, p_1 \otimes q_1,\dots,p_{k-1} \otimes q_{k-1}\}$ on $n+c$ qubits form a valid set of stabilizer generators (meaning that they must mutually commute). Clearly, if one wants to be efficient with respect to the number of physical qubits used to encode a fixed number $k$ of logical qubits, one would want to minimize the number $c$ of extended qubits used in the process (we assume $n$ is fixed). This leads to the task of finding these Paulis $\{q_0, q_1,\dots,q_{k-1}\}$ such that $[p_i,p_j]=[q_i,q_j]$, for all $0 \leq i,j \leq k-1$, while at the same time minimizing $c$. Now, one may use the same procedure to construct EAQECC over qudits, and this has been previously considered in \cite[Remark~1]{wilde2008optimal}, and the result there again applies to prime dimensional qudits only. In the case of quantum convolutional codes, an exactly analogous thing happens --- there one seeks to find Paulis supported on a minimum number $m$ of memory qubits that satisfy a memory commutativity matrix specified by the input-output commutation relations \cite[Theorem~2]{houshmand2012minimal}. In order to generalize these convolutional codes to the qudit case, one needs to be able to answer the same question for any qudit dimension $d$. Our results in Section~\ref{sec:achievable-patterns} achieve exactly this.

Yet another instance where the problem of finding Paulis with specified commutation relations comes up is in the setting of parafermion codes. Parafermion codes \cite{gungordu2014parafermion} are the higher dimensional generalization of Majorana codes \cite{kitaev2001unpaired,vijay2017quantum}, just as qudit stabilizer codes generalize qubit stabilizer codes. A key part of the translation between Majorana codes and qubit codes is the construction of sets of qubit Paulis with certain commutation relations, matching those of the underlying Majorana operators \cite{sarkar2021graph}. We anticipate a similar translation between parafermion codes and qudit stabilizer codes to involve constructions of qudit Paulis with specified commutation relations that we provide in Section~\ref{sec:achievable-patterns}. It is to be noted that much of the results in Sections~\ref{sec:max-pairs} and~\ref{ssec:noncomm-set-max-size} set up the necessary machinery to be able to resolve this central question of Section~\ref{sec:achievable-patterns}.

In \cite{gheorghiu2014standard}, Georghiu provides a construction of the standard form of a qudit stabilizer group $S$, i.e.~an abelian subgroup of the qudit Pauli group. This is similar to Gottesman’s standard form \cite{gottesman1997stabilizer} for prime $d$-dimensional qudits and likewise makes use of Clifford transformations of the group. Our constructions of minimal and Gram-Schmidt generating sets apply to any Pauli subgroup and do not involve applying Cliffords. In the qubit case, such a Gram-Schmidt generating set is useful for improving simulations of stabilizer circuits \cite{aaronson2004improved}. We can also efficiently calculate the size of a qudit stabilizer group and thus the dimension of the stabilized codespace using the results in Section~\ref{ssec:subgroups-noncomm-pairs}. Qudit stabilizer codes have also been studied previously in \cite{gunderman2023stabilizer,lei2023qudit} in the context of quantum codes with exotic local dimensions, qudit surface codes, and qudit hypermap codes, and some of our results in Section~\ref{sec:group_theory} could be applicable to those works also.
}








\section{Preliminaries}
\label{sec:prelim}

In this section, we introduce some background material, some definitions and notation to be used throughout in the paper. We also review some necessary facts from ring theory and establish a few basic results that are helpful later on. Throughout the paper, entries of matrices and vectors (both row and column) will follow zero-based indexing.

\subsection{Qudit Pauli group}
\label{ssec:qudit-pauli-group}

Consider a collection of $n$ qudits, each with dimension $d$. Thus each qudit corresponds to a complex $d$-dimensional Hilbert space $\mathbb{C}^d$ with a basis of orthonormal states:  $\ket{0},\ket{1},\dots,\ket{d-1}$. The Pauli operators on one qudit are the invertible operators $X$ and $Z$ from Eq.~\eqref{eq:qudit_paulis} which we will think of concretely as being given by their $d \times d$ matrix representations with respect to the chosen basis. Thus $X,Z \in \mathbb{C}^{d \times d}$, and notice that $X^d = Z^d = (-1)^{d-1}(XZ)^d = I$, where $I$ is the $d \times d$ identity matrix. We will always use $I$ to denote a context-appropriately-sized identity matrix and specify its shape only when there is a chance for confusion. No smaller number $0<a<d$ results in $X^a$, $Z^a$, or $(XZ)^a$ even being proportional to $I$.

To define Pauli operators on $n$ qudits, start with the disjoint union space $\mathcal{Z} := \bigsqcup_{n=1}^{\infty} \mathbb{Z}_d^{2n}$, and the map $P: \mathcal{Z} \rightarrow \bigsqcup_{n=1}^{\infty} \mathbb{C}^{d^n \times d^n}$ by
\begin{equation}
P(v):=X^{v_0}Z^{v_{n}}\otimes X^{v_1}Z^{v_{n+1}}\otimes\dots\otimes X^{v_{n-1}}Z^{v_{2n-1}} \in \mathbb{C}^{d^n \times d^n}, \;\; v\in\mathbb{Z}_d^{2n}.
\end{equation}
Then the set of $n$-qudit Pauli operators is $\mathcal{P}_{n} :=\{P(v):v\in\mathbb{Z}_d^{2n}\} \subseteq \mathbb{C}^{d^n \times d^n}$. We call an $n$-qudit Pauli $P(v)$ $X$-type if $v_{n}=v_{n+1}=\dots=v_{2n-1}=0$, and $Z$-type if $v_0=v_1=\dots=v_{n-1}=0$. These are abbreviated as $X(u)$ and $Z(u)$ respectively, where $u\in\mathbb{Z}_d^n$ specifies just the nonzero entries of $v$.

The set $\mathcal{P}_{n}$ does not form a group. However, the set $\overline{\mathcal{P}}_n := \{\omega^j P(v) : v \in \mathbb{Z}_d^{2n}, j \in \mathbb{Z}_d\}$ does form a group with the operation of matrix multiplication, called the \textit{Heisenberg-Weyl Pauli} group. Defining $K=\{\omega^j I:j=0,1,\dots,d-1\}$, we may identify $\mathcal{P}_n$ in one-to-one fashion with the quotient group $\overline{\mathcal{P}}_n/K$ in the obvious way by ignoring phase factors.

Many of our results arise from interest in commutativity relations of elements of $\overline{\mathcal{P}}_n$. For this, we start by noticing that $XZ = \omega^{-1}ZX$. Defining the group commutator of two invertible matrices $[A,B] := ABA^{-1}B^{-1}$, we then have $[X,Z]=\omega^{-1}I$. This implies that for any two elements $\omega^j P(u)$ and $\omega^k P(v)$ of $\overline{\mathcal{P}}_n$
\begin{equation}
\label{eq:Lambda-def}
[\omega^j P(u), \omega^k P(v)] = [P(u),P(v)]=\omega^{u^T\Lambda v}I,\quad \Lambda=\left(\begin{array}{cc}0&-I\\I&0\end{array}\right),
\end{equation}
where $\Lambda$ is written in four $n\times n$ blocks. The first equality of this equation suggests that it suffices to work with the set $\mathcal{P}_n$ for studying commutation relations, and this is largely what we will do outside of Section~\ref{sec:group_theory}, where some more notation will be introduced just for the material in that section. To isolate the phase, we also define the notations
\begin{equation}\label{eq:commutator}
\llbracket P(u),P(v)\rrbracket := u^T\Lambda v\in\mathbb{Z},\quad \llbracket P(u),P(v)\rrbracket_d := u^T\Lambda v \mod{d} \in\mathbb{Z}_d.
\end{equation}
Two Pauli operators $P(u), P(v) \in \mathcal{P}_n$ are said to commute if and only if $\llbracket P(u),P(v)\rrbracket_d = 0$, because that implies $P(u)P(v) = P(v)P(u)$.

We now introduce some key definitions that underlies much of our results. First, in Section~\ref{sec:max-pairs}, we determine the maximum size of collections of non-commuting pairs.

\begin{definition}
\label{def:noncomm-pairs}
A collection of $k$ \textit{non-commuting pairs} on $n$ qudits consists of two ordered sets $S=\{s_0,s_1,\dots,s_{k-1}\}$, and $T=\{t_0,t_1,\dots,t_{k-1}\}$, where $S,T\subseteq \mathcal{P}_{n}$, for all $i,j\in\{0,1,\dots,k-1\}$, $\llbracket s_i,s_j\rrbracket_d=0$ and $\llbracket t_i,t_j\rrbracket_d=0$, and moreover $\llbracket s_i,t_j\rrbracket_d=0$ if and only if $i\neq j$. A collection of $k$ \textit{non-commuting CSS pairs} is a collection of non-commuting pairs where every element of $S$ is $X$-type and every element of $T$ is $Z$-type.
\end{definition}

Thus, in a collection of $k$ non-commuting pairs, there are a total of $2k$ Paulis, and they all commute except for the $k$ pairs $(s_i,t_i)$. An easy consequence of this definition is that all elements of $S$ and $T$ are necessarily distinct from one another and not equal to $I$.

Later, in Section~\ref{sec:achievable-patterns}, we will also be concerned with what values the commutators of a collection of non-commuting pairs can have. 
\begin{definition}
\label{def:achievable-relations}
Suppose we are given $n$ qudits, and a $k$-tuple of numbers $f := (f_0,\dots,f_{k-1}) \in (\mathbb{Z}_d \setminus \{0\})^{k}$. We say that $f$ is an \textit{achievable non-commuting pair relation} on $n$ qudits if there exist non-commuting pairs $S=\{s_0,s_1,\dots,s_{k-1}\}$ and $T=\{t_0,t_1,\dots,t_{k-1}\}$ of size $k$, such that $\llbracket s_i,t_i\rrbracket_d = f_i$, for all $i=0,\dots,k-1$. We say that $S, T$ \textit{achieves} $f$.
\end{definition}





In Section~\ref{ssec:noncomm-set-max-size}, we count the elements of non-commuting sets.

\begin{definition}
\label{def:noncommuting-set}
A non-empty subset $S \subseteq \mathcal{P}_n$ is called a \textit{non-commuting set} on $n$ qudits if and only if $\llbracket p,q\rrbracket_d \neq 0$ whenever $p,q \in S$ are distinct.
\end{definition}

It should be clear from this definition that a non-commuting subset $S \subseteq \mathcal{P}_n$ cannot contain $I$ and all elements of $S$ must be distinct.

\subsection{Modules over commutative rings}
\label{ssec:modules-rings}

Our next goal is to provide a short review of some basic results in linear algebra over the commutative ring $\mathbb{Z}_d$ with $d \geq 2$, for the reader who is unfamiliar with these notions. The main difficulties in working with $\mathbb{Z}_d$ and matrices over $\mathbb{Z}_d$ arise when $d$ is composite, because then and only then is $\mathbb{Z}_d$ a ring and not a field. We introduce some ring-theoretic terminology here that will find use in our proofs, and some well-known canonical matrix decompositions such as the Smith normal form and the Howell normal form, which for certain applications generalize the utility of the singular value decomposition and reduced row echelon form, respectively, to certain classes of commutative rings. Separately, we dedicate a section (Appendix~\ref{app:alternating-smith-normal-form}) to the alternating Smith normal form, which is central to this paper and provides a symmetry-respecting Smith normal form for matrices with alternating symmetry, to be defined later. The related material is mostly borrowed from \cite[Chapters~II, III, X, XIII, XV]{lang2012algebra}, and \cite[Chapters~I-V, XIV,XV]{brown1993matrices}.

\subsubsection{Basic definitions}
\label{sssec:basic-def}

Generically, we denote a ring by $R$, and assume, unless stated otherwise, it is non-zero, commutative, and contains a multiplicative identity $1$ and additive identity $0$. Largely, we work over the ring of integers $\mathbb{Z}$ or the ring of integers modulo $d$, $\mathbb{Z}_d$. If $a,b \in R$ such that $a = bc$ for some $c \in R$, we say $b$ \textit{divides} $a$, and we write $b \mid a$. An element $x \in R$ is called \textit{invertible} (or \textit{unit}) if it has a multiplicative inverse. The subset of units of a ring form a multiplicative group called the \textit{group of invertible elements}. Ring $R$ is called a \textit{field} if every non-zero element is a unit. $R$ is called an \textit{integral domain} if $ab = 0$ for $a,b \in R$, implies either $a=0$ or $b=0$. All finite integral domains, such as $\mathbb{Z}_d$ for prime $d$, are fields by Wedderburn's little theorem \cite{kaczynski1964another}, but this is not so for infinite rings, like $\mathbb{Z}$. For a ring $R$ that is not an integral domain, an element $0 \neq x \in R$ is called a \textit{zero divisor} if $xy=0$ for some $R \ni y \neq 0$. The set of zero divisors of $R$ is denoted $\mathcal{Z}(R)$.

An ideal $J$ is an additive subgroup of $R$ that is also closed under multiplication by $R$, i.e.~$J=RJ:=\{rj:j\in J,r\in R\}$. An ideal $J \subseteq R$ is called a \textit{principal ideal} if it is generated by a single element, i.e. $J = xR := \{xy : y \in R\}$, for some $x \in R$. A commutative ring $R$ is called a \textit{principal ideal ring} (or PIR), if all its ideals are principal ideals.  It is well-known that $\mathbb{Z}_d$ is a PIR \cite[Chapter~2.2, Ex.~8]{ash2013basic}. An integral domain that is also a PIR is called a \textit{principal ideal domain} (or PID). For example $\mathbb{Z}$ is a PID, as it can be shown that any ideal of $\mathbb{Z}$ is of the form $x\mathbb{Z}$ (hence a principal ideal), for some $x \in \mathbb{Z}$.

Given a ring $R$, a \textit{left module} over $R$, or simply a left $R$-module $M$ is an abelian group together with an operation $\cdot : R \times M \rightarrow M$, such that for all $a,b \in R$ and $x,y \in M$ we have (i) $(a + b) \cdot x = a \cdot x + b \cdot x$, (ii) $a \cdot (x + y) = a \cdot x + a \cdot y$, (iii) $(ab) \cdot x = a \cdot (b \cdot x)$, and (iv) $1 \cdot x = x$. One can analogously define a \textit{right module} over $R$, but any right $R$-module can be viewed as a left $R$-module and vice-versa, and for us this distinction will not be needed. Thus when the ring $R$ is understood, we will simply say ``a module $M$''. A ring $R$ is also a $R$-module, with the $\cdot$ operation being the same as ring multiplication. The most important modules we will have to deal with are the $\mathbb{Z}_d$-module $\mathbb{Z}_d^k$ and the $\mathbb{Z}$-module $\mathbb{Z}^k$, i.e.~the set of all $k$-tuples of elements of the ring $\mathbb{Z}_d$ and $\mathbb{Z}$ respectively, where the operation $\cdot$ is that of ring multiplication applied to each component of the tuple. Given a $R$-module $M$, a subset $S \subseteq M$ is called a \textit{generating set} (over $R$) of $M$ if any $y \in M$ can be expressed as $y = \sum_{i=1}^{k} a_i x_i$ for some finite $k$, $a_1,\dots,a_k \in R$, and $x_1,\dots,x_k \in S$. We say that $M$ is \textit{finitely generated} if it has a finite generating set. A non-empty subset $S \subseteq M$ is called \textit{linearly independent} (over $R$) if and only if the following condition holds for every finite positive integer $k$, $a_1,\dots,a_k \in R$, and distinct elements $x_1,\dots,x_k \in S$:
\begin{equation}
    \sum_{i=1}^{k} a_i \cdot x_i = 0 \implies a_i = 0 \;\; \text{for every } i=1,\dots,k.
\end{equation}
A linearly independent subset $S \subseteq M$ that is also a generating set of $M$, is called a \textit{basis} of $M$. We say that a $R$-module $M$ is a \textit{free $R$-module} if $M$ has a basis, or is the zero module. For example, it is easy to show that both $\mathbb{Z}_d^k$ and $\mathbb{Z}^k$ are finitely generated and free: in both cases the set $S = \{e_1,e_2,\dots,e_k\}$ is a basis for these modules, where $(e_j)_i = 1$ if $i=j$, else $0$.

\subsubsection{Minimal generating sets}
\label{sssec:min-gensets}

A commutative ring $R$ with a multiplicative identity element is an \textit{invariant basis number ring} (see \cite[Chapter~4, Definition~2.8]{hungerford2012algebra} and \cite[Section~1A]{lam2012lectures} for a discussion on invariant basis number rings). This means that any finitely generated free $R$-module $M$ has a unique finite dimension given by the size of any (hence all) basis of $M$, which we will call the \textit{rank} of $M$ where, by convention, the zero module has zero rank. For a proof of this statement, the reader is referred to \cite[Corollary~5.13]{brown1993matrices}, and the discussions thereafter. For example, the modules $\mathbb{Z}_d^k$ and $\mathbb{Z}^k$ discussed above both have rank $k$. Given a $R$-module $M$, a subset $M' \subseteq M$ is called a \textit{submodule} if it is also a $R$-module. A key fact is that a submodule of a free module need not be free; for example, consider the submodule $\{0, 2\}$ of $\mathbb{Z}_4$ which does not have a basis. Thus, apriori, the rank of the submodule $M'$ is not defined. 

In order to be able to extend the notion of rank to submodules of a finitely generated free $R$-module, we need the notion of minimal generators. Given a finitely generated $R$-module $M$, a generating set $S \subseteq M$ is called a \textit{minimal generating set} of $M$ if there does not exist another generating set $S' \subseteq M$ such that $|S'| < |S|$. The quantity $|S|$ is called the \textit{minimal number of generators} of $M$, and we will denote it as $\Theta(M)$. For example, for the submodule $\{0,2\}$ of $\mathbb{Z}_4$, the generating set $\{2\}$ is a minimal generating set, and so $\Theta (\{0,2\})=1$. If $M$ is free, it is a well-known result that $\Theta (M)$ equals the rank of $M$, since $R$ is commutative. So $\Theta(M)$ is indeed a generalization of rank to non-free modules.

Our main interest in the minimal number of generators is motivated by the need to define a quantity analogous to the column rank of a matrix with entries in a field, for the situation we encounter in this paper where the entries of the matrix belong to a commutative ring $R$ (in particular $\mathbb{Z}$ or $\mathbb{Z}_d$). Suppose we have a matrix $C \in R^{k \times t}$, and we label each of its $t$ columns as $c_0,c_1,\dots,c_{t-1}$. Each column is an element of the $R$-module $R^{k}$, and together they generate a submodule of $R^{k}$, namely $M_{C} := \left\{\sum_{i=0}^{t-1} a_i c_i : a_i \in R \; \forall \; i \right\}$. Now suppose $A \in R^{k \times k}$ and $B \in R^{t \times t}$ are invertible matrices, and recall that a square matrix with entries in $R$ is invertible if and only if the determinant of the matrix is a unit in $R$ \cite[Corollary~2.21]{brown1993matrices}. Let $\bar{C} := ACB \in R^{k \times t}$, let its columns be $\bar{c}_0,\bar{c}_1,\dots,\bar{c}_{t-1}$, and let $M_{\bar{C}}$ be the submodule of $R^k$ generated by the columns of $\bar{C}$. The important property of the minimal number of generators that we will need is the following lemma, which is well-known (although we also provide a proof in Appendix~\ref{app:proofs_sec2}):

\begin{lemma}
\label{lem:min-genarators-modules}
The minimal number of generators of the submodules $M_{C}$ and $M_{\bar{C}}$ of $R^k$ are equal.
\end{lemma}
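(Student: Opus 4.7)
The plan is to argue in two steps, peeling off the right multiplication by $B$ and the left multiplication by $A$ separately, and to use the fact that isomorphic $R$-modules have the same minimal number of generators.

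First I would handle the right multiplication. Writing $\bar{C} = A(CB)$, I claim that $M_{CB} = M_C$ as \emph{the same} submodule of $R^k$. Indeed, each column of $CB$ is an $R$-linear combination of the columns of $C$ (with coefficients read off the corresponding column of $B$), so $M_{CB} \subseteq M_C$. Since $B$ is invertible over $R$, we also have $C = (CB)B^{-1}$, and the same argument applied to $B^{-1}$ gives $M_C \subseteq M_{CB}$. Hence $M_C = M_{CB}$ on the nose, so in particular $\Theta(M_C) = \Theta(M_{CB})$.

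Next I would handle the left multiplication by $A$. The map $\varphi_A : R^k \to R^k$ defined by $x \mapsto Ax$ is $R$-linear, and because $A$ is invertible over $R$ its inverse $\varphi_{A^{-1}}$ is a two-sided inverse $R$-module homomorphism; hence $\varphi_A$ is an $R$-module automorphism of $R^k$. Restricting $\varphi_A$ to $M_{CB}$ yields an $R$-module isomorphism onto its image, and that image is precisely the submodule generated by the columns of $A(CB) = \bar{C}$, namely $M_{\bar{C}}$. Thus $M_{CB} \cong M_{\bar{C}}$ as $R$-modules.

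Finally, I would invoke the general fact that if $M \cong N$ as $R$-modules then $\Theta(M) = \Theta(N)$: any generating set of $M$ is carried by the isomorphism to a generating set of $N$ of the same size, and vice versa, so the minima coincide. Combining with the first step gives
\begin{equation}
\Theta(M_C) \;=\; \Theta(M_{CB}) \;=\; \Theta(M_{\bar{C}}),
\end{equation}
which is the claim. There is really no serious obstacle here; the only thing to be careful about is that invertibility of $A$ and $B$ is invertibility over $R$ (equivalently, unit determinant), so that $A^{-1}$ and $B^{-1}$ have entries in $R$ and the module-theoretic arguments above do not leak outside $R$.
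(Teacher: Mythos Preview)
Your proof is correct and follows essentially the same two-step structure as the paper: first showing $M_{CB}=M_C$ by the identical double-inclusion argument, then handling left multiplication by $A$. The only cosmetic difference is that for the second step the paper pushes a minimal generating set forward by hand to get the inequality $\Theta(M_{\bar{C}})\le\Theta(M_{\hat{C}})$ and then reverses via $A^{-1}$, whereas you package the same idea as ``$\varphi_A$ restricts to an $R$-module isomorphism $M_{CB}\to M_{\bar{C}}$, and $\Theta$ is an isomorphism invariant''; these are the same argument in slightly different language.
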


We also note an important special case in the next lemma, still assuming that $R$ is a commutative ring with a multiplicative identity, where the minimal number of generators for a matrix with special structure is known. We thank Luc Guyot for an outline of this proof in \cite{guyot}. See Appendix~\ref{app:proofs_sec2} for the proof.

\begin{lemma}
\label{lem:min-number-gens-special-matrix}
Let $C \in R^{k \times t}$ be a matrix such that each row and column has at most one non-zero element, and suppose one of those non-zero elements is divisible by all the others. Then the minimal number of generators of the $R$-module $M_{C}$, generated by the columns of $C$, is equal to the number of non-zero elements of the matrix $C$.
\end{lemma}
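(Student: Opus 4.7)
The plan is to establish the two inequalities $\Theta(M_C) \le r$ and $\Theta(M_C) \ge r$ separately, where $r$ is the number of nonzero entries of $C$. The upper bound is immediate: the $r$ columns of $C$ containing those nonzero entries form a generating set of $M_C$ by definition.

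For the lower bound, the key observation is that the constraint on $C$---at most one nonzero entry per row and per column---forces the nonzero columns of $C$ to be of the form $d_1 e_{j_1}, \ldots, d_r e_{j_r}$ for pairwise distinct indices $j_1,\dots,j_r$, where $d_1,\ldots,d_r$ are the nonzero entries and the $e_j$ are standard basis vectors. Thus $M_C$ decomposes as an internal direct sum $\bigoplus_{i=1}^{r} d_i R \cdot e_{j_i}$, whose $i$-th summand is $R$-isomorphic to $R/\mathrm{Ann}(d_i)$ where $\mathrm{Ann}(d_i) := \{a \in R : a d_i = 0\}$. Letting $d_*$ be the distinguished entry with $d_i \mid d_*$ for every $i$ and writing $d_* = c_i d_i$, any $a \in \mathrm{Ann}(d_i)$ satisfies $a d_* = c_i (a d_i) = 0$, so $\mathrm{Ann}(d_i) \subseteq \mathrm{Ann}(d_*)$ for every $i$. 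Because $d_* \neq 0$, we have $1 \notin \mathrm{Ann}(d_*)$, making $\mathrm{Ann}(d_*)$ a proper ideal of $R$, hence contained in some maximal ideal $\mathfrak{m}$.

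Setting $\mathbb{F} := R/\mathfrak{m}$, a field, I would pass to the quotient $\bar{M} := M_C / \mathfrak{m} M_C$, which carries a natural $\mathbb{F}$-vector space structure since $\mathfrak{m}$ acts trivially on it. Using $\mathrm{Ann}(d_i) \subseteq \mathfrak{m}$, each summand reduces as $(R/\mathrm{Ann}(d_i))/\mathfrak{m}(R/\mathrm{Ann}(d_i)) \cong R/(\mathfrak{m} + \mathrm{Ann}(d_i)) = R/\mathfrak{m} = \mathbb{F}$, so $\bar{M} \cong \mathbb{F}^r$. Any $R$-generating set of $M_C$ projects onto a spanning set of $\bar{M}$ as an $\mathbb{F}$-vector space, so its cardinality is at least $\dim_{\mathbb{F}} \bar{M} = r$, which yields $\Theta(M_C) \ge r$.

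The main subtleties I anticipate are entirely ring-theoretic: the annihilator containment $\mathrm{Ann}(d_i) \subseteq \mathrm{Ann}(d_*)$ is the single point where the divisibility hypothesis enters, and the existence of a maximal ideal above a proper ideal requires Zorn's lemma in full generality (though it is elementary for $R = \mathbb{Z}$ or $R = \mathbb{Z}_d$, which are the rings of interest here). Without the divisibility hypothesis the lemma fails, as the example $R = \mathbb{Z}_6$ with $C = \mathrm{diag}(2,3)$ shows: there $M_C$ is in fact cyclic, generated by $(2,3)^T$, so $\Theta(M_C) = 1 < 2$.
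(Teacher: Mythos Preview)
Your proof is correct and follows essentially the same approach as the paper: both arguments identify the annihilator ideals $\mathrm{Ann}(d_i)$, use the divisibility hypothesis to show they are all contained in $\mathrm{Ann}(d_*)$, pick a maximal ideal $\mathfrak{m}$ above it, and reduce to an $r$-dimensional vector space over $R/\mathfrak{m}$. Your use of the quotient $M_C/\mathfrak{m}M_C$ is simply a cleaner packaging of the paper's explicit surjection $\kappa: M_C \to (R/\mathfrak{m})^r$, whose kernel is exactly $\mathfrak{m}M_C$.
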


\subsubsection{Smith normal form}
\label{sssec:smith-normal-forms}

Let $R$ be a commutative ring with multiplicative identity. Following \cite[Chapter~15]{brown1993matrices}, $R$ is called an \textit{elementary divisor ring} if for every $k, \ell \geq 1$ and for every $C \in R^{k \times \ell}$, there exists invertible matrices $B \in R^{k \times k}$ and $\tilde{B} \in R^{\ell \times \ell}$, such that (i) $A := B C \tilde{B}$ is a diagonal matrix, and (ii) $a_0 \mid a_1 \mid \dots \mid a_{r-1}$, where $r= \min \{k,\ell\}$, and $a_i := A_{ii}$ for every $i$. The diagonal matrix $A$ is called the \textit{Smith normal form} (SNF) of the matrix $C$, and we say that $C$ admits a \textit{diagonal reduction}. The non-zero elements of the diagonal matrix $A$ are called the \textit{invariant factors} of $C$.

We recall the following well-known result about the Smith normal form of a matrix with entries from a principal ideal ring $R$:

\begin{theorem}[{\cite[Theorems~15.9, 15.24]{brown1993matrices}}]
\label{thm:smith-normal-form}
If a commutative ring $R$ is a principal ideal ring, then for all $k, \ell \geq 1$, every matrix $C \in R^{k \times \ell}$ admits a diagonal reduction $A := B C \tilde{B}$, where $B \in R^{k \times k}$ and $\tilde{B} \in R^{\ell \times \ell}$ are invertible matrices, and $A$ is the Smith normal form of $C$. The Smith normal form $A$ is unique up to multiplication of the diagonal elements of $A$ by units of $R$.
\end{theorem}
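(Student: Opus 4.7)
The plan is to prove existence of the diagonal reduction by an inductive procedure based on invertible row and column operations, and uniqueness via determinantal ideals, both of which are standard for PIDs and extend to the PIR setting with one subtle modification.

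For existence, I would induct on $\min(k, \ell)$. In the inductive step, let $J \subseteq R$ denote the ideal generated by all entries of $C$; since $R$ is a PIR, there is an $a_0 \in R$ with $J = (a_0)$. The goal is to construct invertible $B \in R^{k \times k}$ and $\tilde{B} \in R^{\ell \times \ell}$ so that $(BC\tilde{B})_{00}$ is an associate of $a_0$ and all other entries of row $0$ and column $0$ vanish. The core subroutine is pair clearing: given $\alpha, \beta \in R$ with $(\alpha) + (\beta) = (g)$, produce an invertible $U \in R^{2 \times 2}$ satisfying $U (\alpha, \beta)^T = (g, 0)^T$. Over a PID, Bezout's identity yields $x, y \in R$ with $\alpha x + \beta y = g$, and then $U = \bigl(\begin{smallmatrix} x & y \\ -\beta/g & \alpha/g \end{smallmatrix}\bigr)$ has determinant $1$. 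Iterating this along the first row and column, then recursing on the $(k-1) \times (\ell-1)$ lower-right block, gives the SNF. The divisibility chain $a_0 \mid a_1 \mid \cdots$ follows because $a_0$ divides every entry of $C$, and this property persists under all subsequent invertible transformations.

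For uniqueness, I would use the $i$-th determinantal ideal $D_i(C)$ generated by all $i \times i$ minors of $C$. A Cauchy--Binet-type expansion shows $D_i(BC\tilde{B}) = D_i(C)$ for invertible $B, \tilde{B}$, so $D_i(C) = D_i(A) = (a_0 a_1 \cdots a_{i-1})$. Hence each ideal $(a_0 \cdots a_{i-1})$ is an invariant of $C$; taking successive quotients along the divisibility chain determines $(a_i)$, which pins $a_i$ down up to a unit.

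The main obstacle is the pair clearing subroutine when $R$ is not an integral domain, since then $g$ may be a zero divisor and the ``quotients'' $\alpha/g, \beta/g$ are only defined modulo $\mathrm{Ann}(g)$, so the naive PID construction need not yield an invertible matrix. The standard remedy, which I would adopt following Brown, is to decompose the PIR as a finite direct product of PIDs and local principal ideal rings via its structure theorem (for $R = \mathbb{Z}_d$ this is just the Chinese remainder theorem giving $\mathbb{Z}_d \cong \prod_i \mathbb{Z}_{p_i^{\alpha_i}}$), carry out the pair clearing in each factor where divisibility on ideals is a linear order and invertible $2 \times 2$ completions exist directly, and reassemble via CRT. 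This is the one place where the argument departs from the familiar PID case.
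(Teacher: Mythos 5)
The paper does not actually prove Theorem~\ref{thm:smith-normal-form}; it is cited verbatim from Brown's \emph{Matrices over Commutative Rings} (Theorems 15.9 and 15.24), so there is no in-paper proof to compare against. Evaluating your proposal on its own merits, the existence half is essentially the standard argument, with one step to tighten: iterating pair clearing along row $0$ and column $0$ only yields a generator of the ideal of the entries in that row and column, which may be strictly larger than the ideal $J$ of all entries; you still need the classical outer loop (if some entry $C_{ij}$ with $i,j>0$ is not divisible by the current $(0,0)$ entry, add row $i$ to row $0$ and repeat, terminating by the ascending chain condition since a PIR is Noetherian). With that in place, and the structure-theorem/CRT reduction to PIDs and SPIRs to handle pair clearing in the presence of zero divisors, the existence argument is sound.

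The uniqueness half has a genuine gap: over a ring with zero divisors, the determinantal ideals $D_i$ do \emph{not} determine the invariant factors up to units. Concretely, in $R=\mathbb{Z}_{16}$ both $\mathrm{diag}(4,4)$ and $\mathrm{diag}(4,8)$ are in Smith normal form (each satisfies the divisibility chain), both have two nonzero diagonal entries, and both have $D_1=(4)$ and $D_2=(0)$; yet $4$ and $8$ are not associates in $\mathbb{Z}_{16}$, and the two matrices are not equivalent (their column submodules have $16$ and $8$ elements respectively). The failure is exactly at the ``successive quotients'' step: once the partial product $a_0\cdots a_{i-1}$ is a zero divisor, $D_{i+1}$ collapses to the zero ideal and the quotient no longer pins down $(a_i)$. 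The fix — which is in effect what Brown's Theorem~15.24 does — is to prove uniqueness by passing to the cokernel module $R^k/\mathrm{im}(C)\cong\bigoplus_i R/(a_i)$ and invoking the uniqueness part of the structure theorem for finitely generated modules over a PIR, which itself is proved via the CRT decomposition together with a separate argument in the SPIR factors (e.g.\ comparing lengths or annihilator ideals of successive quotients), rather than by determinantal ideals alone.
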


The following result about the SNF of a matrix and its relation to the minimal number of generators of the module generated by its columns, will be useful in Section~\ref{sec:group_theory}, and follows as a direct consequence of Lemma~\ref{lem:min-genarators-modules} and Lemma~\ref{lem:min-number-gens-special-matrix}:

\begin{lemma}
\label{lem:snf-nonzeros}
Let $R$ be a commutative principal ideal ring. Suppose $C \in R^{k \times \ell}$ has a Smith normal form $A = B C \tilde{B}$, for invertible matrices $B,\tilde{B}$ and $A$ a diagonal matrix of appropriate shape. Let $M_C$ be the submodule generated by the columns of $C$. Then $\Theta(M_C)$ equals the number of non-zero elements of $A$.
\end{lemma}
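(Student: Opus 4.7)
The plan is to directly combine the two earlier lemmas: Lemma~\ref{lem:min-genarators-modules} to replace $C$ with its Smith normal form $A$ without changing $\Theta$, and Lemma~\ref{lem:min-number-gens-special-matrix} to read off $\Theta(M_A)$ from the number of nonzero entries.

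Concretely, first I would apply Lemma~\ref{lem:min-genarators-modules} with $A = BC\tilde{B}$. Since $B$ and $\tilde{B}$ are invertible matrices over $R$ of the appropriate sizes, the hypotheses of that lemma are met, giving $\Theta(M_C) = \Theta(M_A)$. This reduces the problem to computing $\Theta(M_A)$ for the diagonal matrix $A$.

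Next I would verify that $A$ satisfies the hypotheses of Lemma~\ref{lem:min-number-gens-special-matrix}. Being diagonal, each row and column of $A$ has at most one nonzero element. By the definition of Smith normal form, the diagonal entries $a_0,a_1,\dots,a_{r-1}$ (with $r=\min\{k,\ell\}$) form a divisibility chain $a_0 \mid a_1 \mid \dots \mid a_{r-1}$. The key small observation is that $0 \mid x$ in $R$ forces $x = 0$, so once some $a_i = 0$ every later entry is also $0$; hence the nonzero entries are exactly an initial segment $a_0,a_1,\dots,a_s$ of the diagonal, and the divisibility chain ensures that $a_s$ is divisible by each of $a_0,\dots,a_{s-1}$. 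This is precisely the structural condition required by Lemma~\ref{lem:min-number-gens-special-matrix}, whose conclusion then states that $\Theta(M_A)$ equals the number of nonzero entries of $A$, namely $s+1$.

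Chaining the two equalities gives $\Theta(M_C) = \Theta(M_A) = s+1$, which is the claim. There is no real obstacle in this argument; the only small subtlety worth flagging explicitly is the one above, that the zero entries of $A$ must come after all the nonzero entries, so that the ``one nonzero element divisible by all the others'' hypothesis of Lemma~\ref{lem:min-number-gens-special-matrix} follows directly from the SNF divisibility chain rather than requiring any additional argument.
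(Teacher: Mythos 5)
Your proof is correct and takes exactly the route the paper intends: the paper states that this lemma "follows as a direct consequence of Lemma~\ref{lem:min-genarators-modules} and Lemma~\ref{lem:min-number-gens-special-matrix}," and your chaining of those two lemmas, including the observation that the divisibility chain forces zeros to occur only after all nonzero diagonal entries, is the intended argument.
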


Choosing integer $n$ so that $k,\ell,r=O(n)$, finding the Smith normal form of a matrix takes $\tilde{O}(n^{\theta+1})$ time \cite{storjohann2000algorithms} where $\tilde{O}$ hides log factors and assuming $2<\theta\le3$ is the exponent of matrix multiplication.


\subsubsection{Howell normal form}
\label{sssec:howell-normal-forms}
 
We now introduce the Howell normal form of a matrix with entries in the ring $\mathbb{Z}_d$, which will be useful for us in Section~\ref{sec:group_theory}. The material here is borrowed from \cite{howell1986spans,webster2022xp}. Although we only discuss the $\mathbb{Z}_d$ case here, the Howell normal form is also valid for a matrix over a principal ideal ring, and the reader is referred to \cite{storjohann2000algorithms,fieker2014computing} for details. Also note that we present the Howell normal form in terms of column spans of a matrix, instead of row spans as done in prior literature (of course they are equivalent).

A matrix $H \in \mathbb{Z}_d^{k \times \ell}$ is said to be in \textit{reduced column echelon} form if $H$ satisfies the following properties:
\begin{enumerate}[(i)]
    \item If $H$ has $r$ non-zero columns, then its first $r$ columns are non-zero.
    \item For $0 \leq i \leq r-1$, let $j_i$ be the row index of the first non-zero entry of column $i$. Then $j_0 < j_1 < \dots < j_{r-1}$.
    \item For $0 \leq i \leq r-1$, the matrix entry $H_{j_i i}$ is a divisor of $d$ over integers.
    \item For each $0 \leq t < i \leq r-1$, we have that $0 \leq H_{j_i t} < H_{j_i i}$.
\end{enumerate}
If the matrix $H$ only satisfied properties (i) and (ii) above, then it can be made to satisfy properties (iii) and (iv) also by post-multiplying $H$ on the right by an invertible matrix over $\mathbb{Z}_d$. We say that $H$ is in \textit{Howell normal form} (HNF), if in addition to properties (i)-(iv), the matrix $H$ satisfies the following additional property:
\begin{enumerate}[(v)]
    \item Let $M_H$ be the submodule of $\mathbb{Z}_d^k$ generated by the columns of $H$. For each $0 \leq i \leq r-1$, consider the submodule $\overline{M}_i := \{v \in M_H: v_t = 0, \; \forall \; 0 \leq t < j_i\}$. Then $\overline{M}_i$ is generated by columns $i$ to $r-1$ of $H$, for every $0 \leq i \leq r-1$.
\end{enumerate}

Let us now state the main result about the Howell normal form. Given $A \in \mathbb{Z}_d^{k \times \ell}$, we construct $\overline{A} := [A \;\; 0]$, by adding $k$ zero columns to $A$. Then there exists an invertible matrix $L \in \mathbb{Z}_d^{(\ell+k) \times (\ell+k)}$, such that $\overline{A}L=H$, where $H\in\mathbb{Z}_d^{k\times(k+\ell)}$ is a matrix in Howell normal form. Moreover, the matrix $H$ is uniquely determined. 

As part of the calculation of the Howell normal form, some algorithms (e.g.~\cite{storjohann2000algorithms}) include computation of a matrix $G\in\mathbb{Z}_d^{(k+\ell)\times(k+\ell)}$, whose columns generate the kernel of $H$, $\text{Ker}(H)=\{x\in\mathbb{Z}_d^{\ell}:Hx=0\}$. So $HG=0$. Moreover, the columns of $LG$ provides a generating set for the kernel of $A$. Choosing integer $n$ so that $k,\ell,r=O(n)$, finding both $G$ and $H$ takes $\tilde{O}(n^{\theta+1})$ time.

With the Howell normal form $H$ and kernel $G$, we can test membership in the column space of $A$, or equivalently solve linear systems, $\overline{A}x=b$, where we are given $b\in\mathbb{Z}_d^k$ and want to determine whether $x\in\mathbb{Z}^{k+\ell}$ exists. Note that by invertibility of $L$, this is the same as finding a $y=L^{-1}x$ such that $Hy=b$. Also, if $y$ is a solution then so is $y+Gv$ for any $v\in\mathbb{Z}_d^{\ell}$. Solving the inhomogenous equation $Hy=b$ is made easy by properties (i-v). In principle, one can observe, \cite{storjohann2000algorithms}
\begin{equation}
\bigg(\underbrace{\begin{array}{c|c}1&0\\\hline b&H\end{array}}_{H_b}\bigg)\bigg(\underbrace{\begin{array}{c|c}1&0\\\hline -y&I\end{array}}_{L_b}\bigg)=\bigg(\begin{array}{c|c}1&0\\\hline b'&H\end{array}\bigg)
\end{equation}
where the righthand-side is the Howell normal form of $H_b$ and the invertible matrix $L_b$ relating them necessarily has the indicated form. Now, $b'=0$ if and only if $Hy=b$. In practice, one does not need to run the Howell normal form algorithm on $H_b$, but can instead achieve the same result by solving $Hy=b$ one entry at a time using the fact $H$ is in reduced row echelon form and has the Howell property (v).
\section{Maximum number of non-commuting pairs of Paulis}
\label{sec:max-pairs}

In this section, we give a precise count of the maximum number of non-commuting pairs that one can achieve with $n$ qudits. This generalizes the well-known result in the qubit ($d=2$) case, where this count is equal to the number of qubits \cite{gottesman1997stabilizer} \editt{and prepares for our construction in Section~\ref{sec:achievable-patterns} of Pauli sets satisfying given commutation relations using the minimum number of qudits. }

It is easy to establish the lower bound $nm$ \editt{on the maximum number of non-commuting pairs} as illustrated by the following example.

\begin{example}
\label{ex:one-qudit-example}
Define the sets $S=\{s_i=X^{d/p_i^{\alpha_i}}:i\in\{0,1,\dots,m-1\}\}$ and $T=\{t_i=Z^{d/p_i^{\alpha_i}}:i\in\{0,1,\dots,m-1\}\}$. Then $S, T$ is a collection of $m$ non-commuting CSS pairs on a single qudit. On $n$ qudits, this example can be applied individually to each qudit to get a collection of non-commuting CSS pairs of size $nm$.
\end{example}

We now show that $nm$ is also the upper bound. The key result that underlies the proof is the following number-theoretic obstruction:

\begin{lemma}[Obstruction lemma]
\label{lem:obs-lem}
Suppose $\ell = p^\gamma$, for some prime number $p$, and positive integer $\gamma \geq 1$. For some $k \geq 2$, let $A$ be a $k \times k$ matrix with entries in $\mathbb{Z}$. Then the following conditions cannot hold simultaneously
\begin{enumerate}[(i)]
    \item $\det(A) = 0$,
    \item $A$ contains exactly $k$ entries that are not divisible by $\ell$, such that each row and column of $A$ contains exactly one such element.
\end{enumerate}
\end{lemma}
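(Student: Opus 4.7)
The plan is to use $p$-adic valuations to argue that one term in the Leibniz expansion of $\det(A)$ dominates all the others modulo a suitable power of $p$, so that $\det(A)$ cannot vanish.

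By hypothesis (ii), the non-$\ell$-divisible entries form a permutation pattern: there is a unique permutation $\sigma\in S_k$ such that $A_{i,\sigma(i)}$ is not divisible by $\ell=p^\gamma$ for every $i$, while every entry $A_{i,j}$ with $j\neq\sigma(i)$ is divisible by $\ell$. Let $v_p$ denote the $p$-adic valuation on $\mathbb{Z}$ (with $v_p(0)=\infty$) and set $a_i := v_p(A_{i,\sigma(i)})$. The first key observation is that $a_i\le \gamma-1$ for every $i$, so in particular each $a_i$ is finite.

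Next, expand the determinant via Leibniz:
\begin{equation*}
\det(A)=\sum_{\tau\in S_k}\mathrm{sgn}(\tau)\prod_{i=0}^{k-1}A_{i,\tau(i)}.
\end{equation*}
Set $s := \sum_i a_i$. I claim that the $\sigma$-term has $p$-adic valuation exactly $s$, while every other term has $p$-adic valuation strictly larger than $s$. For the $\sigma$-term this is immediate from the definition of $a_i$. For any other $\tau\neq\sigma$, let $D(\tau):=\{i:\tau(i)\neq\sigma(i)\}$; since $\sigma$ and $\tau$ are distinct permutations we have $|D(\tau)|\ge 2$. For $i\in D(\tau)$ the entry $A_{i,\tau(i)}$ is not a $\sigma$-entry and is therefore divisible by $p^\gamma$, while for $i\notin D(\tau)$ we have $\tau(i)=\sigma(i)$ and so $v_p(A_{i,\tau(i)})=a_i$. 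Hence
\begin{equation*}
v_p\!\left(\prod_i A_{i,\tau(i)}\right)\ge \sum_{i\in D(\tau)}\gamma+\sum_{i\notin D(\tau)}a_i = s+\sum_{i\in D(\tau)}(\gamma-a_i)\ge s+|D(\tau)|>s,
\end{equation*}
using $\gamma-a_i\ge 1$.

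Putting these together, write $\det(A)=p^s\bigl(\mathrm{sgn}(\sigma)\prod_i(A_{i,\sigma(i)}/p^{a_i})+p\cdot R\bigr)$ for some $R\in\mathbb{Z}$. Each factor $A_{i,\sigma(i)}/p^{a_i}$ is a nonzero integer not divisible by $p$, so reducing the bracketed expression modulo $p$ gives a nonzero class in $\mathbb{Z}/p\mathbb{Z}$. Consequently $\det(A)/p^s$ is a nonzero integer, contradicting condition (i). The only potentially delicate step is verifying that every $\tau\neq\sigma$ strictly exceeds the $\sigma$-term in $p$-adic valuation even when the $a_i$ are close to $\gamma-1$; the inequality $|D(\tau)|\ge 2$, forced by $\sigma$ and $\tau$ both being permutations, is exactly what makes the argument go through.
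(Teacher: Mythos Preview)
Your proof is correct and takes a genuinely different route from the paper's. The paper argues by infinite descent on a nontrivial integer null vector of $A$: from $\det(A)=0$ it extracts $x_0,\dots,x_{k-1}\in\mathbb{Z}$, not all zero, with $\sum_i x_i c_i=0$; then, row by row, condition~(ii) forces $p\mid x_i$ for every $i$, and dividing through by $p$ one may iterate indefinitely, which is impossible. Your argument instead attacks the Leibniz expansion directly, showing via $p$-adic valuations that the $\sigma$-term is the unique term of minimal valuation and hence cannot be cancelled. Your approach is arguably more transparent about \emph{why} the determinant is nonzero (it even pins down $v_p(\det A)$ exactly), while the paper's descent avoids any explicit mention of valuations or the Leibniz formula and would read naturally to someone unfamiliar with $p$-adics. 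One small remark: your closing comment that $|D(\tau)|\ge 2$ ``is exactly what makes the argument go through'' slightly oversells the point---since $\gamma-a_i\ge 1$, already $|D(\tau)|\ge 1$ yields the strict inequality you need; the fact that permutations force $|D(\tau)|\ge 2$ is true but not essential here.
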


\begin{proof}
For contradiction, assume that such a matrix $A$ exists satisfying both conditions (i) and (ii). Let us denote the columns of $A$ as $c_0,\dots,c_{k-1}$. We claim that there exist integers $x_0,\dots,x_{k-1}$, not all zeros, such that $\sum_{i=0}^{k-1} x_i c_i = 0$, which we prove at the end. Now pick any row $j$. This row contains an element $A_{jq}$ which is not divisible by $\ell$, while all other elements of this row are divisible by $\ell$. From the condition $\sum_{i=0}^{k-1} x_i c_i = 0$, we thus we have 
\begin{equation}
\label{eq:mat-fact-1}
    x_q A_{jq} = - \sum_{\substack{i=0 \\ i\neq q}}^{k-1} x_i A_{ji}.
\end{equation}
The right hand side of Eq.~\eqref{eq:mat-fact-1} is divisible by $\ell$, and $A_{jq}$ is not divisible by $\ell$, so we deduce that $x_q$ is divisible by $p$. By repeating this argument for each row, and using condition (ii), we deduce that $x_i$ is divisible by $p$ for each $i=0,\dots,k-1$. Defining $x'_i = x_i / p$ for each $i$, we now obtain $\sum_{i=0}^{k-1} x'_i c_i = 0$. We can keep repeating the argument, giving an infinite sequence of integers
\begin{equation}
    x_i, \frac{x_i}{p}, \frac{x_i}{p^2}, \frac{x_i}{p^3},\dots, \;\;\; \text{ for every } i = 0,\dots,k-1.
\end{equation}
But this implies $x_i = 0$ for every $i$, giving a contradiction.

We now prove the claim made in the previous paragraph. We may regard $A$ as a matrix over the set of rational numbers $\mathbb{Q}$, and then by condition (i) we have that $\det(A)=0$ over $\mathbb{Q}$. Thus the columns $c_0,\dots,c_{k-1}$ are linearly dependent as column vectors over the field $\mathbb{Q}$, which implies there exists $y_0,\dots,y_{k-1} \in \mathbb{Q}$, not all zeros, such that $\sum_{i=0}^{k-1} y_i c_i = 0$. Multiplying through by the least common denominator of the $y_i$ immediately proves the claim.
\end{proof}

We can now prove the following theorem:
\begin{theorem}[\edit{Largest size non-commuting pairs}]
\label{thm:largest-noncomm-pairs}
The largest size of a collection of non-commuting pairs on $n$ qudits is $nm$.
\end{theorem}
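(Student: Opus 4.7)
The lower bound $k\geq nm$ is immediate from Example~\ref{ex:one-qudit-example}, so the plan is to show $k\leq nm$ by contradiction via the Obstruction Lemma. I would assume $S,T$ realize $k\geq nm+1$ non-commuting pairs, fix integer lifts $\widetilde{S},\widetilde{T}\in\mathbb{Z}^{2n\times k}$ of the associated vectors in $\mathbb{Z}_d^{2n}$, and form the $2k\times 2k$ integer matrix
\begin{equation*}
\widetilde{R}:=[\widetilde{S}\,|\,\widetilde{T}]^T\,\Lambda\,[\widetilde{S}\,|\,\widetilde{T}]=\begin{pmatrix}\widetilde{S}^T\Lambda\widetilde{S}&\widetilde{S}^T\Lambda\widetilde{T}\\ \widetilde{T}^T\Lambda\widetilde{S}&\widetilde{T}^T\Lambda\widetilde{T}\end{pmatrix}.
\end{equation*}
Because $\widetilde R=X^T\Lambda X$ for $X=[\widetilde{S}\,|\,\widetilde{T}]\in\mathbb{Z}^{2n\times 2k}$, the rank of $\widetilde R$ over $\mathbb{Q}$ is at most $2n$, so every $(2n+1)$-square minor vanishes. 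The commutation hypotheses on $S$ and $T$ give $\widetilde R$ a clean block structure: the $SS$ and $TT$ blocks are entirely divisible by $d$ with zero diagonal, the $ST$ block has diagonal entries $\widetilde{c}_j$ (integer lifts of $\llbracket s_j,t_j\rrbracket_d\not\equiv 0\pmod{d}$) and off-diagonal entries divisible by $d$, and the $TS$ block is its negative transpose.

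The next step is a pigeonhole argument to locate a prime power $\ell=p_i^{\alpha_i}$ and a square submatrix $A$ of $\widetilde R$ ready for Lemma~\ref{lem:obs-lem}. For each $j$, since $d=\prod_{i=0}^{m-1} p_i^{\alpha_i}$ and $d\nmid\widetilde{c}_j$, unique factorization forces at least one $i$ with $p_i^{\alpha_i}\nmid\widetilde{c}_j$. Letting $k_i:=\#\{j:p_i^{\alpha_i}\nmid\widetilde{c}_j\}$, we get $\sum_{i=0}^{m-1}k_i\geq k\geq nm+1$, so some $k_i\geq n+1$. I would fix that $i$, let $J_i$ index the corresponding $k_i$ pairs, and take $A$ to be the principal submatrix of $\widetilde R$ on the $2k_i$ rows and columns $J_i\cup(k+J_i)$.

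The final step is to verify the hypotheses of Lemma~\ref{lem:obs-lem} with $\ell=p_i^{\alpha_i}$. By the block analysis, every entry of $A$ is divisible by $\ell$ (since $\ell\mid d$) except at the $2k_i$ positions $(a,k_i+a)$ and $(k_i+a,a)$, where it equals $\pm\widetilde{c}_{j_a}$ and is not divisible by $\ell$; thus each row and each column of $A$ contains exactly one entry not divisible by $\ell$. Since $2k_i\geq 2(n+1)>2n$, the rank bound on $\widetilde R$ forces $\det(A)=0$. This realizes the configuration forbidden by Lemma~\ref{lem:obs-lem}, contradicting our assumption and giving $k\leq nm$. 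I expect the most delicate point to be the choice of $\ell$: taking $\ell=p_i^{\alpha_i}$ rather than the weaker $\ell=p_i$ is what simultaneously makes every off-block and intra-block off-diagonal entry of $A$ divisible by $\ell$ while keeping the pigeonhole inequality sharp enough to yield $k_i\geq n+1$ as soon as $k\geq nm+1$.
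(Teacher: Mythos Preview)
Your proposal is correct and follows essentially the same approach as the paper: both assume $k>nm$, use pigeonhole on the prime powers $p_i^{\alpha_i}$ to find an index set of size at least $n+1$ where the commutators $\llbracket s_j,t_j\rrbracket$ are not divisible by $p_i^{\alpha_i}$, form the corresponding integer commutator matrix, observe its determinant vanishes because the underlying vectors live in a $2n$-dimensional space, and invoke the Obstruction Lemma. The only cosmetic differences are that you keep all $k_i\ge n+1$ indices rather than trimming to exactly $n+1$, and you phrase the vanishing determinant via the rank bound $\mathrm{rank}_{\mathbb{Q}}\widetilde R\le 2n$ rather than an explicit linear-dependence argument over $\mathbb{R}$.
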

\begin{proof}
Suppose by way of contradiction that we have a collection of $k>nm$ non-commuting pairs $S=\{s_0,\dots,s_{k-1}\}$ and $T=\{t_0,\dots,t_{k-1}\}$. Then for every $j$, $\llbracket s_j,t_j\rrbracket$ is not divisible by $d$. By the pigeonhole principle, there is some prime $p$ (from the factorization of $d$) and a set $J \subseteq \{0,1,\dots,k-1\}$ of $n+1$ indices so that for each $j\in J$, $\llbracket s_j,t_j\rrbracket$ is not divisible by $p^\alpha$, where $\alpha$ is the largest integer such that $p^\alpha$ divides $d$. Without loss of generality, one may assume that $J=\{0,1,\dots,n\}$.

Form the matrices $M^{(1,1)}_{ij}=\llbracket s_i,s_j\rrbracket\edit{=0}$, $M^{(1,2)}_{ij}=\llbracket s_i,t_j\rrbracket$, $M^{(2,1)}_{ij}=\llbracket t_i,s_j\rrbracket$, and $M^{(2,2)}_{ij}=\llbracket t_i,t_j\rrbracket\edit{=0}$ for $i,j\in J$. Write $M=\left(\begin{smallmatrix}M^{(1,1)}&M^{(1,2)}\\M^{(2,1)}&M^{(2,2)}\end{smallmatrix}\right)\in\mathbb{Z}^{2(n+1)\times2(n+1)}$. We note that every row and column of $M$ contains exactly one element not divisible by $p^\alpha$ (i.e.~the diagonal elements of $M^{(1,2)}$ and $M^{(2,1)}$).

However, we also claim $\det(M)=0$. Thus, Lemma \ref{lem:obs-lem} gives our contradiction. To prove that the determinant is zero, we take vectors $u_0, \dots, u_{2n+1} \in\mathbb{Z}^{2n}_d$, so that $s_i=P(u_i)$ and $t_i=P(u_{i+n+1})$, for every $i=0,\dots,n$. Then $M_{ij}=u_i^T\Lambda u_j \in \mathbb{Z}$, where $\Lambda$ is defined in Eq.~\eqref{eq:Lambda-def}. We now lift $M$ and the vectors $u_i$ to the field of real numbers, i.e. we may treat $M \in \mathbb{R}^{2(n+1) \times 2(n+1)}$, and $u_i \in \mathbb{R}^{2n}$ for every $i$. Because there are $2(n+1)$ vectors $u_i$ for $i=0,\dots,2n+1$, it now follows that there are some real numbers $x_0,\dots,x_{2n+1}$, such that $\sum_{i=0}^{2n+1} x_i u_i = 0$. From this it follows that the columns of $M$ (and the rows) are also linearly dependent and $\det(M)=0$ over the reals, and hence also over $\mathbb{Z}$.
\end{proof}

Note a consequence of the theorem above:
\begin{corollary}
Let $S,T$ be a collection of non-commuting pairs of size $nm$ on $n$ qudits. Let $G = \{r \in \mathcal{P}_n : \llbracket r,r'\rrbracket_d = 0, \;\forall \; r' \in S \cup T \}$. Then $G$ does not contain a non-commuting pair of Paulis.
\end{corollary}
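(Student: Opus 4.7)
The plan is to argue by contradiction using Theorem~\ref{thm:largest-noncomm-pairs} directly, which says that no collection of non-commuting pairs on $n$ qudits can exceed size $nm$. Suppose, toward a contradiction, that $G$ contains a non-commuting pair, that is, elements $s,t \in G$ with $\llbracket s,t\rrbracket_d \neq 0$. The goal is to append $s$ and $t$ to the existing collection $S,T$ to produce a collection of $nm+1$ non-commuting pairs, which is forbidden.

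Concretely, I would set $S' = \{s_0,\dots,s_{nm-1},s\}$ and $T' = \{t_0,\dots,t_{nm-1},t\}$, and then verify the three conditions of Definition~\ref{def:noncomm-pairs}. First, the $s_i$ pairwise commute by hypothesis, and $s$ commutes with every $s_i \in S \cup T$ by definition of $G$; analogously for $T'$. Second, for indices $i,j < nm$, the relation $\llbracket s_i, t_j\rrbracket_d = 0 \Leftrightarrow i\neq j$ is inherited from the non-commuting pair structure on $S,T$. Third, for the new index, $\llbracket s_i, t\rrbracket_d = 0$ and $\llbracket s, t_j\rrbracket_d = 0$ for every $i,j < nm$ because $s,t \in G$, while $\llbracket s,t\rrbracket_d \neq 0$ by assumption on the new pair. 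This exhibits a collection of $nm+1$ non-commuting pairs, contradicting Theorem~\ref{thm:largest-noncomm-pairs}.

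There is essentially no obstacle here; the corollary is a straightforward consequence of the maximality established in the preceding theorem. The only thing to be careful about is the bookkeeping of the commutator conditions in Definition~\ref{def:noncomm-pairs}, so as to make certain that the appended pair $(s,t)$ genuinely slots into the existing collection without violating any of the required commutations or non-commutations, all of which follow immediately from $s,t \in G$ and the non-triviality of $\llbracket s,t\rrbracket_d$.
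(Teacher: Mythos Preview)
Your proof is correct and follows essentially the same approach as the paper: assume $G$ contains a non-commuting pair $(s,t)$, append it to $S,T$ to obtain a collection of $nm+1$ non-commuting pairs, and invoke Theorem~\ref{thm:largest-noncomm-pairs} for the contradiction. The paper's version is simply terser, omitting the explicit verification of the commutator conditions that you spell out.
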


\begin{proof}
If $G$ contained such a non-commuting pair $(s,t)$, then the ordered sets $S' := S \cup \{s\}$ and $T':=T \cup \{t\}$ would be a non-commuting pair of size $nm + 1$, which contradicts Theorem~\ref{thm:largest-noncomm-pairs}.
\end{proof}

\section{Maximum size of a non-commuting set}
\label{ssec:noncomm-set-max-size}

It is well-known in the qubit case $(d=2)$, that the maximum size of an anticommuting set on $n$ qubits is $2n+1$ (see \cite[Lemma~8]{sarkar2021sets}, \cite[Appendix~G]{bonet2020nearly}, \cite[Theorem~1]{hrubevs2016families}). In \editt{Section~\ref{sssec:one-qudit-case}}, we generalize this result to a single qudit of arbitrary dimension $d$, and in Section~\ref{sssec:general-bounds} we provide bounds for the multi-qudit case. \editt{The case of constructing Pauli sets satisfying the commutation relations of parafermion operators \cite{gungordu2014parafermion} is a special case of the results in this section, which we specifically solve later in Corollary~\ref{cor:non-comm-all-the-same}.}


\subsection{Non-commuting sets on a single qudit}
\label{sssec:one-qudit-case}

In this section, we evaluate $\gcd$ of elements of $\mathbb{Z}$ (or $\mathbb{Z}_d$ considered as elements of $\mathbb{Z}$) over $\mathbb{Z}$ and define it to be positive to ensure that it is unique. Also whenever we write $a/b$ for $a,b \in \mathbb{Z}_d$ (or $\mathbb{Z}$), we will always mean division over integers (in particular this must mean that $b$ divides $a$ over integers). Our goal in this section is to prove the following theorem:

\begin{theorem}[\edit{Largest size non-commuting set on 1 qudit}]
\label{thm:largest-noncomm-set-one-qudit}
Let $d = p_0^{\alpha_0}p_1^{\alpha_1}\dots p_{m-1}^{\alpha_{m-1}}$ be the prime factorization of $d \geq 2$. Then the size of a largest non-commuting subset of $\overline{\mathcal{P}}_1$ on one $d$-dimensional qudit is given by the Dedekind psi function $\Psi(d) := d \prod_{j=0}^{m-1}(1+1/p_j)$.
\end{theorem}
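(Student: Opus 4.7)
The strategy is two-sided: construct an explicit non-commuting set of size $\Psi(d)$ and match it by a Lagrangian-covering upper bound. Since $[\omega^j P, \omega^k P] = I$ for any Pauli $P$, different phases of the same underlying element always commute, so it suffices to bound the size inside $\mathcal{P}_1 \cong \mathbb{Z}_d^2$ via $X^{v_0}Z^{v_1} \leftrightarrow v = (v_0, v_1)$. By Eq.~\eqref{eq:Lambda-def}, two such Paulis commute iff $u_1 v_0 - u_0 v_1 \equiv 0 \pmod d$. Call $u \in \mathbb{Z}_d^2$ \emph{primitive} if $\gcd(u_0, u_1, d) = 1$, declare $u \sim u'$ when $u' = cu$ with $c \in \mathbb{Z}_d^*$, and note that a standard inclusion--exclusion gives $|P^1(\mathbb{Z}_d)| = \Psi(d)$, where $P^1(\mathbb{Z}_d)$ denotes the quotient of the primitive vectors by $\sim$.

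\emph{Lower bound.} Pick one primitive representative from each class of $P^1(\mathbb{Z}_d)$, giving $\Psi(d)$ Paulis. To verify pairwise non-commutation, suppose $u_0 u'_1 \equiv u_1 u'_0 \pmod d$ for primitive $u, u'$. By Bezout applied to $\gcd(u_0, u_1, d) = 1$, there exist integers with $\alpha u_0 + \beta u_1 \equiv 1 \pmod d$; substituting yields $u'_i \equiv u_i(\alpha u'_0 + \beta u'_1) \pmod d$ for both $i$, so $u' = \lambda u$ with $\lambda := \alpha u'_0 + \beta u'_1$. Any common divisor of $\lambda$ and $d$ would divide $\gcd(u'_0, u'_1, d) = 1$, so $\lambda$ is a unit, $u \sim u'$, contradicting distinctness of classes.

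\emph{Upper bound.} For each primitive $u$, set $L_u := \{cu : c \in \mathbb{Z}_d\}$. A short Bezout argument (similar to the one above) shows the annihilator of $u$ is trivial so $|L_u| = d$, any two elements of $L_u$ commute, and $L_u = L_{u'}$ iff $u \sim u'$; there are thus exactly $\Psi(d)$ distinct Lagrangians. The crux is that these Lagrangians cover $\mathbb{Z}_d^2 \setminus \{0\}$. I would invoke the Smith normal form (Theorem~\ref{thm:smith-normal-form}) on the $1\times 2$ matrix $v \neq 0$: there is an invertible $Q \in \mathbb{Z}_d^{2\times 2}$ (absorbing the $1\times 1$ unit factor) with $vQ = (g, 0)$ for $g$ a unit multiple of $\gcd(v_0, v_1, d)$. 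Then $v = g \cdot u$, where $u$ is the first row of $Q^{-1}$; since any row of an invertible matrix over $\mathbb{Z}_d$ is primitive (the matching column of $Q$ furnishes the Bezout certificate), $v \in L_u$. Given a non-commuting set $S$, assigning each $v \in S$ to some $[u_v] \in P^1(\mathbb{Z}_d)$ with $v \in L_{u_v}$ is injective---two distinct elements sharing a Lagrangian would commute---so $|S| \leq \Psi(d)$.

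The principal obstacle is producing the primitive $u$ in the covering step. The naive attempt $g = \gcd(v_0, v_1, d)$ with $u = v/g$ computed over $\mathbb{Z}$ fails when $g$ and $d/g$ share prime factors: for instance $v = (8, 8) \in \mathbb{Z}_{12}^2$ has $g = 4$ but $v/g = (2, 2)$ with $\gcd(2, 2, 12) = 2$, whereas a valid choice is $u = (5, 5)$ since $4 \cdot (5, 5) \equiv (8, 8) \pmod{12}$. The SNF sidesteps this cleanly by allowing $u$ to agree with $v/g$ only modulo $d/g$ and selecting a lift coprime to the shared primes---precisely the kind of multiplicative-ring subtlety the paper's SNF framework is designed to handle.
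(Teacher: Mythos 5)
Your proof is correct, and while it lands on the same underlying object---the projective line $P^1(\mathbb{Z}_d)$, whose cardinality $\Psi(d)$ is the answer---it is organized quite differently from the paper's. The paper frames the problem as a max-clique computation and proves three graph lemmas: (i) any clique can be replaced by one supported on primitive vectors $W$, (ii) the non-edge relation restricted to $W$ is transitive, and (iii) non-adjacent primitives are unit multiples, so equivalence classes have size $\phi(d)$; it then multiplicatively computes $|W| = J_2(d)$ and divides by $\phi(d)$. Your two-sided argument replaces the clique-reduction and transitivity steps by a cleaner covering argument: every nonzero $v$ lies in a Lagrangian $L_u$ with $u$ primitive (your SNF step), two elements of the same Lagrangian commute, so a non-commuting set injects into $P^1(\mathbb{Z}_d)$; the lower bound comes from picking one representative per class and showing pairwise non-commutation via Bezout. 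Notably, your Bezout derivation of ``non-commutation $\Rightarrow$ unit multiple'' for primitives is more direct than the paper's SNF-based argument for the forward direction of Lemma~\ref{lem:G_properties}(iii). On the other side, you assert $|P^1(\mathbb{Z}_d)| = \Psi(d)$ as a standard fact, whereas the paper verifies it from scratch via multiplicativity of $H$; if you want the proof fully self-contained you should include that count (or the equivalent $J_2(d)/\phi(d)=\Psi(d)$ computation). Also worth noting: your cautionary $d=12$, $v=(8,8)$ example correctly identifies a pitfall in the naive ``divide by $\gcd$'' reduction to a primitive vector; the SNF detour in your covering step (or equivalently dividing by $\gcd(v_0,v_1)$ over $\mathbb{Z}$ rather than $\gcd(v_0,v_1,d)$) handles it properly.
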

The Dedekind psi function is bounded by $\Psi(d)\ge d+1$, with equality when $d$ is prime, and $\Psi(d)<e^\gamma d\log\log(d)$ for all $d>30$ \cite{sole2010extreme}. 

Recall from the paragraph below Definition~\ref{def:noncommuting-set} that it suffices to find the size of a largest non-commuting subset of \edit{$\mathcal{P}_1$}. The basic strategy of the proof is now to reduce this problem to finding a maximum clique\footnote{In an undirected graph $G(V,E)$, a clique is a subset of vertices with the property that any two distinct vertices have an edge connecting them.} in a graph. To achieve this, we start by noticing that if $\left(\begin{smallmatrix}a \\ b\end{smallmatrix}\right), \left(\begin{smallmatrix}s \\ t\end{smallmatrix}\right) \in \mathbb{Z}_d^2$, then for the Paulis $P(\left(\begin{smallmatrix}a \\ b\end{smallmatrix}\right))$ and $P(\left(\begin{smallmatrix}s \\ t\end{smallmatrix}\right))$ on $1$ qudit represented by these vectors, Eq.~\eqref{eq:commutator} simplifies to
\begin{equation}
\llbracket P(\left(\begin{smallmatrix}a \\ b\end{smallmatrix}\right)),P(\left(\begin{smallmatrix}s \\ t\end{smallmatrix}\right))\rrbracket_d = -\det\left(\begin{array}{cc}a&b\\s&t\end{array}\right) \; \bmod{d}.
\end{equation}
As a consequence, finding the largest set of non-commuting Paulis on 1 qudit is equivalent to finding a maximum clique $C_m$ in the undirected graph $G=(V,E)$ with vertices $V=(\mathbb{Z}_d\times\mathbb{Z}_d) \setminus \{(0,0)\}$ and edge $((a,b),(s,t))\in E$ if and only if $\det\left(\begin{smallmatrix}a&b\\s&t\end{smallmatrix}\right) \not \equiv 0 \mod{d}$. This commutation graph (or its complement) has been studied before in relation to projective geometry in e.g.~\cite{planat2008pauli,planat2011pauli,havlicek2007projectiveB} and mutually unbiased bases \cite{planat2007qudits}. We will show in several steps below that $|C_m| = \Psi(d)$, thus proving Theorem~\ref{thm:largest-noncomm-set-one-qudit}.

The first step is to define a special subset of vertices $W\subseteq V$, where $(a,b)\in W$ if and only if $\gcd(a,b,d)=1$. \edit{We prove the following Lemma in Appendix~\ref{app:proofs_sec4}.}

\begin{lemma}\label{lem:G_properties}
The graph $G=(V,E)$ has the following properties.
\begin{enumerate}[(i)]
\item If $C\subseteq V$ is a clique in $G$, then there is a clique $C'\subseteq W$ of the same size, $|C|=|C'|$.  
\item If $v_0,v_2\in V$, $v_1\in W$, $(v_0,v_1)\not\in E$, and $(v_1,v_2)\not\in E$, then $(v_0, 
v_2)\not\in E$.
\item If $(a,b)$ and $(s,t)$ are in $W$, then $((a,b),(s,t))\not\in E$ if and only if there exists $u\in\mathbb{Z}$ with $\gcd(u,d)=1$ such that $(a,b)=(us,ut)$, the right hand side evaluated modulo $d$.
\end{enumerate}
\end{lemma}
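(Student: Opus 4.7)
My plan is to address each of the three properties in turn. For (i) I will construct a ``normalization'' map $\phi: V \to W$ sending each vector to the primitive vector in $\mathbb{Z}_d^2$ of which it is a scalar multiple. For (ii) and (iii), the Chinese Remainder Theorem (CRT) will reduce the problem to the case where $d$ is a prime power $p_i^{\alpha_i}$; there, membership of a vertex in $W$ translates to at least one of its coordinates being a unit, which allows explicit inversion.

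For (i), view $(a,b) \in V$ as integers in $\{0,1,\dots,d-1\}$, let $D := \gcd(a,b) \geq 1$, and define $\phi(a,b) := (a/D, b/D) \bmod d$. Since $\gcd(a/D, b/D) = 1$ over $\mathbb{Z}$, certainly $\gcd(a/D, b/D, d) = 1$, so $\phi(a,b) \in W$. The key identity $(a,b) \equiv D \cdot \phi(a,b) \pmod{d}$ gives $a_1 b_2 - b_1 a_2 = D_1 D_2 (a_1' b_2' - b_1' a_2')$ over $\mathbb{Z}$ for $v_1, v_2 \in V$, where $\phi(v_i) = (a_i', b_i')$. For a clique $C \subseteq V$, this shows that (a) $\phi$ is edge-preserving, and (b) $\phi$ is injective on $C$, because $\phi(v_1) = \phi(v_2)$ forces $a_1 b_2 - b_1 a_2 \equiv 0 \pmod d$ and hence $v_1 = v_2$ in a clique. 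Thus $C' := \phi(C) \subseteq W$ is a clique with $|C'| = |C|$.

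For (ii), by CRT the congruence $a_0 b_2 - b_0 a_2 \equiv 0 \pmod d$ can be checked modulo each $p_i^{\alpha_i}$ separately. Fix such a prime power. Since $v_1 = (a_1, b_1) \in W$, we have $\gcd(a_1, b_1, p_i^{\alpha_i}) = 1$, so by symmetry assume $a_1$ is a unit mod $p_i^{\alpha_i}$. The non-edges $(v_0, v_1), (v_1, v_2) \notin E$ give $b_0 \equiv a_0 b_1 a_1^{-1}$ and $b_2 \equiv a_2 b_1 a_1^{-1} \pmod{p_i^{\alpha_i}}$, and substitution yields $a_0 b_2 - b_0 a_2 \equiv 0 \pmod{p_i^{\alpha_i}}$. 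For (iii), $(\Leftarrow)$ is immediate. For $(\Rightarrow)$, fix $p_i^{\alpha_i}$ and, WLOG, assume $s$ is a unit mod $p_i^{\alpha_i}$. Set $u_i := a s^{-1} \bmod p_i^{\alpha_i}$; then $u_i s \equiv a$, and using $at \equiv bs$, $u_i t \equiv a s^{-1} t \equiv b \pmod{p_i^{\alpha_i}}$. If $u_i$ were not a unit, $p_i$ would divide both $u_i s = a$ and $u_i t = b$, contradicting $(a,b) \in W$. Assembling the $u_i$ via CRT produces a unit $u \in \mathbb{Z}_d$ with $(a,b) \equiv u(s,t) \pmod d$.

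The main subtlety is in (i): the seemingly natural choice of dividing by $\gcd(a,b,d)$ rather than $\gcd(a,b)$ does not always land in $W$. For example, with $d = 6$ and $(a,b) = (4,4)$, we have $\gcd(4,4,6) = 2$ but $(4/2, 4/2) = (2,2) \notin W$ since $\gcd(2,2,6) = 2$. Dividing by the integer gcd avoids this because coprime integers remain coprime with any modulus. Once this is sorted out, the remaining arguments are routine, and the CRT handles (ii) and (iii) cleanly.
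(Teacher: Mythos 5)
Your proof is correct, and for parts (ii) and (iii) it takes a genuinely different route from the paper. The paper proves (ii) by using Bezout's identity $xa_1+yb_1+zd=1$ (available because $v_1\in W$) to construct a $3\times3$ integer matrix whose determinant vanishes mod $d$ by inspection of the last column, and then cofactor-expands to extract $\det\left(\begin{smallmatrix}a_0&b_0\\a_2&b_2\end{smallmatrix}\right)\equiv 0$. For (iii) the paper computes the Smith normal form of $\left(\begin{smallmatrix}a&b\\s&t\end{smallmatrix}\right)$, argues the second invariant factor vanishes mod $d$ because $\gcd(a,b,d)=1$, and then reads the resulting linear relation $S_{10}(a,b)+S_{11}(s,t)\equiv(0,0)$ to extract a unit $u$ after showing $S_{10},S_{11}$ are units. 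Your CRT reduction to prime powers, where $W$-membership guarantees some coordinate is invertible mod $p_i^{\alpha_i}$ and the relation can be solved explicitly, is more elementary and avoids the SNF machinery entirely; the paper's approach stays intrinsic to $\mathbb{Z}_d$ rather than going through the factorization of $d$.

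For part (i), your concern about $\gcd(a,b,d)$ is justified, and in fact the paper's proof, as literally written, has this gap: it sets $v'=v/\gcd(a,b,d)$ and asserts $v'\in W$, which your example $d=6$, $v=(4,4)$ shows is false (you get $(2,2)$, and $\gcd(2,2,6)=2$). The paper's replacement argument is salvageable by iterating the normalization, but your version --- divide by the full integer $\gcd(a,b)$, which makes the two coordinates coprime over $\mathbb{Z}$ and hence coprime to any modulus --- lands in $W$ in one step. Your edge-preservation and clique-injectivity arguments via the identity $a_1b_2-b_1a_2=D_1D_2(a_1'b_2'-b_1'a_2')$ over $\mathbb{Z}$ are sound, so part (i) is both correct and, arguably, cleaner than the paper's.
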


Lemma \ref{lem:G_properties}(i) implies there is a maximum clique that is just a subset of $W$. Part (ii) implies transitivity of an equivalence relation $u\sim v$ for $u,v\in W$, where $u$ and $v$ are said to be equivalent if $(u,v)\not\in E$. Part (iii) says equivalence classes of $W$ are all the same size, exactly the size of the group of units of $\mathbb{Z}_d$, which is $\phi(d)=d\prod_{j=0}^{m-1}(1-1/p_j)$, the Euler's totient function. Combining these facts, the size of a maximum clique of $G$ is $|C_m|=|W|/\phi(d)$. Thus the remaining step is to count $|W|$.

In the remainder of the proof of Theorem~\ref{thm:largest-noncomm-set-one-qudit}, we will adopt the following notation: any product of the form $\prod_{p \mid r}$, where $r$ is a positive integer, will mean that the product is taken over all unique primes $p$ that divide $r$. To count $|W|$, we let $(a,b)\in W$ and allow $a\in\mathbb{Z}_d$ to be arbitrary. Now $b\in\mathbb{Z}_d$ is restricted by the choice of $a$. Namely, if $\gcd(a,d)>1$, $b$ cannot be $0$ or share any prime factors with $\gcd(a,d)$. Each prime factor $p$ of $\gcd(a,d)$ reduces the allowable set of $b$ by a factor of $(p-1)/p=(1-1/p)$. We thus have the following equality over rationals:
\begin{equation}
\frac1d|W|=\sum_{a\in\mathbb{Z}_d}\prod_{p\mid\gcd(a,d)}\left(1-\frac1p\right)=: H(d).
\end{equation}

It turns out $H$ is multiplicative\edit{, which we show in Appendix~\ref{app:proofs_sec4}.}
\begin{lemma}\label{lem:H_is_multiplicative}
If $d_0,d_1\in\mathbb{Z}$ are relatively prime, then $H(d_0d_1)=H(d_0)H(d_1)$.
\end{lemma}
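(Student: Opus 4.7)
The plan is to use the Chinese Remainder Theorem, which supplies a ring isomorphism $\mathbb{Z}_{d_0 d_1} \cong \mathbb{Z}_{d_0} \times \mathbb{Z}_{d_1}$ whenever $\gcd(d_0, d_1) = 1$. Concretely, the map $a \mapsto (a \bmod d_0, a \bmod d_1) =: (a_0, a_1)$ is a bijection between $\mathbb{Z}_{d_0 d_1}$ and $\mathbb{Z}_{d_0} \times \mathbb{Z}_{d_1}$. This will let me reindex the sum defining $H(d_0 d_1)$ as a double sum over $a_0$ and $a_1$, and the goal reduces to showing that the summand factors as a product of the corresponding summands for $d_0$ and $d_1$.

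The key computational step is to verify that the set of primes dividing $\gcd(a, d_0 d_1)$ splits as a disjoint union of those dividing $\gcd(a_0, d_0)$ and those dividing $\gcd(a_1, d_1)$. Since $\gcd(d_0, d_1) = 1$, every prime $p$ with $p \mid d_0 d_1$ divides exactly one of $d_0, d_1$. If, say, $p \mid d_0$, then $p \mid a$ is equivalent to $p \mid a_0$ because $a \equiv a_0 \pmod{d_0}$ and $p \mid d_0$. Therefore $p \mid \gcd(a, d_0 d_1)$ iff $p \mid \gcd(a_0, d_0)$, and the analogous statement holds for primes dividing $d_1$. The disjointness of prime factor sets for $d_0$ and $d_1$ then gives
\begin{equation*}
\prod_{p \mid \gcd(a, d_0 d_1)} \left(1 - \frac{1}{p}\right) = \prod_{p \mid \gcd(a_0, d_0)} \left(1 - \frac{1}{p}\right) \cdot \prod_{p \mid \gcd(a_1, d_1)} \left(1 - \frac{1}{p}\right).
\end{equation*}

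Plugging this factorization back into the defining sum and using the bijection from CRT yields
\begin{equation*}
H(d_0 d_1) = \sum_{a_0 \in \mathbb{Z}_{d_0}} \sum_{a_1 \in \mathbb{Z}_{d_1}} \prod_{p \mid \gcd(a_0,d_0)} \left(1 - \tfrac{1}{p}\right) \prod_{p \mid \gcd(a_1,d_1)} \left(1 - \tfrac{1}{p}\right) = H(d_0)\, H(d_1),
\end{equation*}
after separating the two independent sums. There is no serious obstacle here; the only subtle point is the careful handling of the prime-factor bookkeeping in the middle step, which I would state as a brief sublemma to keep the main calculation clean.
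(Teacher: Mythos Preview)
Your proof is correct and takes essentially the same approach as the paper: both use the Chinese Remainder Theorem bijection $\mathbb{Z}_{d_0 d_1} \cong \mathbb{Z}_{d_0} \times \mathbb{Z}_{d_1}$ to reindex the sum, then factor the summand by splitting the primes dividing $\gcd(a, d_0 d_1)$ according to whether they divide $d_0$ or $d_1$. The only cosmetic difference is that the paper writes out the CRT inverse explicitly via B\'ezout coefficients, whereas you invoke the bijection abstractly.
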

Therefore, it is easy to evaluate $H$ for prime powers: $H(p^k)=p^k(1-1/p^2)$. Consequently, $H(d)=d\prod_{p\mid d}(1-1/p^2)$. We note for curiosity's sake that $|W|=dH(d)=J_2(d)$ is the second of Jordan's totient functions \cite{dickson1919history}. With $H(d)$ in hand, we now conclude the calculation:
\begin{equation}
|C_m|=|W|/\phi(d)=dH(d)/\phi(d)=d\prod_{p\mid d}(1+1/p)=\Psi(d).
\end{equation}
This completes the proof of Theorem~\ref{thm:largest-noncomm-set-one-qudit}.

\subsection{\texorpdfstring{Bounds on the size of non-commuting sets on $n$ qudits}{}}
\label{sssec:general-bounds}

Suppose we have a non-commuting set $S=\{s_i:i=0,\dots,|S|-1\}$ on $n$ qudits and another non-commuting set $S'=\{s'_i:i=0,\dots,|S'|-1\}$ on $n'$ qudits. Then,
\begin{equation}
S\circ S':=\{s_i\otimes I:i=1,\dots,|S|-1\}\cup\{s_0\otimes s'_i:i=0,\dots,|S'|-1\}
\end{equation}
is a non-commuting set of size $|S|+|S'|-1$ on $n+n'$ qudits. We call this the Jordan-Wigner composition of sets $S$ and $S'$, in analogy to the Jordan-Wigner encoding of fermions into qubits \cite{jordan1993paulische}. The Jordan-Wigner composition gives a simple lower-bound on the size of non-commuting sets on $n$ qudits.

\begin{corollary}\label{cor:noncomm_lower_bound}
The size of a maximum non-commuting set on $n$ qudits is at least $(\Psi(d)-1)n+1$.
\end{corollary}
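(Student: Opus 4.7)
The plan is to prove this lower bound by induction on $n$, using the Jordan-Wigner composition introduced in the paragraph preceding the corollary, together with the single-qudit bound of Theorem~\ref{thm:largest-noncomm-set-one-qudit}.

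For the base case $n=1$, Theorem~\ref{thm:largest-noncomm-set-one-qudit} supplies a non-commuting set $S_1 \subseteq \mathcal{P}_1$ with $|S_1|=\Psi(d) = (\Psi(d)-1)\cdot 1 + 1$, as required. For the inductive step, assume we have a non-commuting set $T_n$ on $n$ qudits with $|T_n| = (\Psi(d)-1)n + 1$. Taking a maximum single-qudit non-commuting set $S_1$ of size $\Psi(d)$ and forming the Jordan-Wigner composition $T_n \circ S_1$ yields, by the stated property of the construction, a non-commuting set on $n+1$ qudits of size
\begin{equation}
|T_n| + |S_1| - 1 = \bigl((\Psi(d)-1)n + 1\bigr) + \Psi(d) - 1 = (\Psi(d)-1)(n+1) + 1,
\end{equation}
completing the induction.

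The only step that merits a moment of care is the assertion that $T_n \circ S_1$ really is non-commuting, which is claimed just above the corollary. A direct check, if one wanted to inline it, splits into three cases: two elements of the form $s_i \otimes I$ with $i \neq 0$ have commutator $\llbracket s_i, s_j\rrbracket_d \neq 0$ since $T_n$ is non-commuting; two elements of the form $s_0 \otimes s'_i$ have commutator $\llbracket s_0, s_0\rrbracket_d + \llbracket s'_i, s'_j\rrbracket_d = \llbracket s'_i, s'_j\rrbracket_d \neq 0$; and a cross pair $s_i \otimes I$ (with $i \geq 1$) versus $s_0 \otimes s'_j$ has commutator $\llbracket s_i, s_0\rrbracket_d \neq 0$. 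Since the corollary only asks for a lower bound, no matching upper-bound argument is needed, and there is no real obstacle — the content is essentially packaged in the single-qudit theorem and the tensor-product trick.
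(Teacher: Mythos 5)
Your proof is correct and follows essentially the same route as the paper: both rely on iterating the Jordan-Wigner composition with single-qudit maximum sets of size $\Psi(d)$ from Theorem~\ref{thm:largest-noncomm-set-one-qudit}; you phrase it as an induction on $n$, whereas the paper states the $t$-fold composition size $\sum_i |S_i| - (t-1)$ directly, but these are the same argument. Your inline verification that $T_n \circ S_1$ is non-commuting is a correct expansion of the claim the paper asserts without proof.
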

\begin{proof}
The Jordan-Wigner composition of $t$ non-commuting sets $S_i$, $i=0,\dots,t-1$, gives a non-commuting set of size $\sum_{i=0}^{t-1}|S_i|-(t-1)$. With $n$ qudits, each supporting a maximum size non-commuting set $S_i$ with size $|S_i|=\Psi(d)$ (Theorem~\ref{thm:largest-noncomm-set-one-qudit}), this gives a lower bound of $(\Psi(d)-1)n+1$ on the size of the maximum non-commuting set.
\end{proof}

If larger non-commuting sets are found in special cases, the lower bound can be improved. For instance, for $d=3$, maximum non-commuting sets on $n=1$ and $n=2$ qutrits have sizes $4$ and $7$ (the case $n=2$ was found via exhaustive computer search), respectively, matching the $dn+1$ lower bound that Corollary~\ref{cor:noncomm_lower_bound} implies. But for $n=3$, we have a computer-verified, size-13, maximum non-commuting set, where the Paulis are $P(c)$ for each column $c$ of the matrix
\begin{equation}
\left(\begin{array}{ccccccccccccc}
0&0&0&0&0&0&0&1&1&1&1&1&1\\
0&0&0&0&1&1&1&0&0&0&1&1&2\\
0&1&1&1&1&2&2&1&1&2&1&2&1\\
0&0&0&1&0&1&1&0&2&1&0&1&0\\
0&0&1&0&0&1&2&0&1&2&1&0&1\\
1&0&1&2&1&2&2&2&2&1&1&1&1
\end{array}\right).
\end{equation}
The Jordan-Wigner composition therefore implies a lower-bound on non-commuting sets of qutrit Paulis of $12\lfloor n/3\rfloor+3(n\text{ mod }3)+1$.

Similarly, for $d=4$ and $n=2$, a maximum non-commuting set has size $20$. This implies the lower bound of $19\lfloor n/2\rfloor+5(n\text{ mod }2)+1$ for the maximum size of non-commuting sets on $n$ qudits with $d=4$.
\section{\edit{Achievable commutation relations}}
\label{sec:achievable-patterns}

We established in Section~\ref{sec:max-pairs} that the maximum number of non-commuting pairs that can exist on $n$ qudits is $nm$. Thus, if $(f_0,\dots,f_{k-1}) \in (\mathbb{Z}_d\setminus\{0\})^{k}$ is an achievable non-commuting pair relation on $n$ qudits, we must have $1 \leq k \leq nm$. Our first goal in this section is to further characterize exactly what tuples $(f_0,\dots,f_{k-1})$ are achievable non-commuting pair relations on $n$ qudits. Then, we will use this characterization to construct Pauli multisets with given commutation relations.

\subsection{Achievable tuples}\label{ssec:achievable_tuples}

Note that if $k \leq n$, then every $f = (f_0,\dots,f_{k-1}) \in (\mathbb{Z}_d\setminus\{0\})^{k}$ is an achievable non-commuting pair relation on $n$ qudits. \edit{On one qubit, $\llbracket Z^{f_i},X\rrbracket_d=f_i$, and so we can use a separate qubit to achieve each of the $k$ non-commuting pair relations. See also Lemma~\ref{lem:helper-lem-1}(iii) for a more formal argument.}

Our first result obtains a lower bound on the minimum number of qudits needed to achieve a given non-commuting pair relation $(f_0,\dots,f_{k-1}) \in (\mathbb{Z}_d\setminus\{0\})^{k}$, that improves on the lower bound $\lceil k/m \rceil$ \edit{implied by Theorem~\ref{thm:largest-noncomm-pairs}}.

\begin{lemma}
\label{lem:lower-bound}
Let $S=\{s_0,s_1,\dots,s_{k-1}\}$ and $T=\{t_0,t_1,\dots,t_{k-1}\}$ be a collection of non-commuting pairs of size $k$ on $n$ qudits, and for every $i$, let $f_i := \llbracket s_i,t_i\rrbracket$. Define the multiset $F_j := \{f_i : f_i \not\equiv 0 \pmod{p_j^{\alpha_j}}, \; i \in \{0,\dots,k-1\}\}$, where $d = p_0^{\alpha_0}p_1^{\alpha_1}\dots p_{m-1}^{\alpha_{m-1}}$. Then $|F_j| \leq n$, for every $j = 0,\dots,m-1$, and thus $n \geq \max_j |F_j|$.
\end{lemma}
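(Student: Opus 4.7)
The plan is to recycle the determinant/obstruction argument from the proof of Theorem~\ref{thm:largest-noncomm-pairs}, but apply it prime-by-prime rather than after a pigeonhole step. The statement ``$|F_j|\le n$'' is in fact the prime-by-prime refinement that was implicitly used (via pigeonhole on the $m$ primes) to derive the global bound $k\le nm$, so the proof should look structurally the same, just without the pigeonhole at the front.

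Concretely, fix a prime $p_j$ and set $\ell := p_j^{\alpha_j}$. I would argue by contradiction: suppose $|F_j| \ge n+1$, and choose indices $i_0,i_1,\dots,i_n \in \{0,1,\dots,k-1\}$ (WLOG $\{0,1,\dots,n\}$) such that $f_{i_r} \not\equiv 0 \pmod{\ell}$ for each $r$. Form the $2(n+1)\times 2(n+1)$ block integer matrix
\[
M = \begin{pmatrix} M^{(1,1)} & M^{(1,2)} \\ M^{(2,1)} & M^{(2,2)} \end{pmatrix}, \quad
M^{(1,1)}_{ab} = \llbracket s_a,s_b\rrbracket,\; M^{(1,2)}_{ab} = \llbracket s_a,t_b\rrbracket,\; M^{(2,1)}_{ab} = \llbracket t_a,s_b\rrbracket,\; M^{(2,2)}_{ab} = \llbracket t_a,t_b\rrbracket,
\]
for $a,b \in \{0,\dots,n\}$. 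By Definition~\ref{def:noncomm-pairs}, every entry of $M^{(1,1)}$ and $M^{(2,2)}$ vanishes modulo $d$, hence modulo $\ell$; and $M^{(1,2)}_{ab}, M^{(2,1)}_{ab} \equiv 0 \pmod{d}$ for $a\ne b$, hence mod $\ell$. The only entries of $M$ not divisible by $\ell$ are the $2(n+1)$ diagonal entries of the off-diagonal blocks (these are $\pm f_{i_r} \not\equiv 0 \pmod{\ell}$ by choice of $J$). So condition (ii) of Lemma~\ref{lem:obs-lem} is satisfied with matrix size $2(n+1) \ge 2$.

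To verify condition (i), $\det(M) = 0$, I would argue exactly as in the proof of Theorem~\ref{thm:largest-noncomm-pairs}: write $s_i = P(u_i)$, $t_i = P(u_{i+n+1})$ for vectors $u_0,\dots,u_{2n+1} \in \mathbb{Z}_d^{2n}$, lift them to $\mathbb{Z}^{2n}\subseteq\mathbb{R}^{2n}$, and observe that $M_{ab} = u_a^T \Lambda u_b$ as integers. Since we have $2(n+1)$ vectors in the $2n$-dimensional real vector space $\mathbb{R}^{2n}$, they are linearly dependent, giving real coefficients $x_0,\dots,x_{2n+1}$, not all zero, with $\sum_a x_a u_a = 0$; this forces $\sum_a x_a M_{ab} = 0$ for every $b$, i.e.\ the rows of $M$ are linearly dependent over $\mathbb{R}$, so $\det(M)=0$ as an integer.

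Now apply Lemma~\ref{lem:obs-lem} with $\ell = p_j^{\alpha_j}$ to reach a contradiction, proving $|F_j| \le n$. Taking the maximum over $j$ yields $n \ge \max_j |F_j|$. No part of this should be hard; the only thing to be slightly careful about is that Lemma~\ref{lem:obs-lem} requires matrix size $k\ge 2$, which is automatic here since $2(n+1)\ge 4$. The result is essentially a direct strengthening of Theorem~\ref{thm:largest-noncomm-pairs} — and conversely Theorem~\ref{thm:largest-noncomm-pairs} follows from this lemma via pigeonhole, consistent with the paper's framing.
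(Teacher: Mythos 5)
Your proof is correct and follows the same route as the paper's: reduce to the obstruction lemma (Lemma~\ref{lem:obs-lem}) applied to the prime power $p_j^{\alpha_j}$, using the same $2(n+1)\times 2(n+1)$ commutator matrix $M$ and the same linear-dependence argument for $\det(M)=0$ as in the proof of Theorem~\ref{thm:largest-noncomm-pairs}. The only cosmetic difference is that you select exactly $n+1$ indices from $F_j$ whereas the paper uses all $|F_j|$ of them; both work.
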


\begin{proof}
If $k \leq n$, then there is nothing to prove. So we assume $k > n$, and let $j$ be arbitrary. For the sake of contradiction, suppose that $\ell := |F_j| > n$, and without loss of generality, one may then further assume that $F_j = \{f_0,f_1,\dots,f_{\ell - 1}\}$. Define $J:=\{0,\dots,\ell-1\}$, and then define the matrix $M \in \mathbb{Z}^{2\ell \times 2 \ell}$, as in the proof of Theorem~\ref{thm:largest-noncomm-pairs}. Then by exactly the same argument there, and using $\ell > n$, we conclude that (i) $\det(M)=0$, and (ii) every row and column of $M$ has exactly one element not divisible by $p_j^{\alpha_j}$. But this is impossible by Lemma~\ref{lem:obs-lem}, which gives us the desired contradiction.
\end{proof}

\begin{remark}
In Lemma~\ref{lem:lower-bound}, one can replace $f_i := \llbracket s_i,t_i\rrbracket$ by $f_i := \llbracket s_i,t_i\rrbracket_d$, and the conclusion still holds. This is because, for every $i$ and $j$, we have $\llbracket s_i,t_i\rrbracket \equiv 0 \pmod{p_j^{\alpha_j}}$ if and only if $\llbracket s_i,t_i\rrbracket_d \equiv 0 \pmod{p_j^{\alpha_j}}$.
\end{remark}

\edit{Remarkably, $n=\max_j|F_j|$ is also a sufficient number of qubits to achieve the non-commuting pair relation $(f_0,\dots,f_{k-1})$. We prove this below, but before that, we present an example of the construction.}

\edit{\begin{example}
Suppose $d=30=2\times3\times5$ (so $m=3$ factors), $n=3$, and $f=(2,5,6,11,15)$. Then, $F_0=\{5,11,15\}$ (elements of $f$ not divisible by 2), $F_1=\{2,5,11\}$ (elements not divisible by 3), and $F_2=\{2,6,11\}$ (elements not divisible by 5). We fill a $|f|\times m$ matrix $Q$ whose elements are from $\mathbb{Z}_{n+1}$ subject to the constraints (1) $Q_{ij}=0$ if and only if $f_i\not\in F_j$ and (2) all non-zero entries in each column are unique. Because $n=\max_j|F_j|$, constraint (2) is satisfiable. For instance,
\begin{equation}
Q=\left(\begin{array}{ccc}0&1&1\\1&2&0\\0&0&2\\2&3&3\\3&0&0\end{array}\right).
\end{equation}
Rows $0,\dots,4$ of $Q$ are used to construct $p_0,\dots,p_{4}$, $X$-type Paulis, and $q_0,\dots,q_{4}$, $Z$-type Paulis, that form non-commuting pairs achieving $f$. To construct $p_i$, a non-zero entry $Q_{ij}$ says that the prime factor labeled $j$ should be excluded from the power of $X$ on qubit $Q_{ij}$. We do the same for $q_i$, but with powers of $Z$ and additional factors of $-\beta_i\in\mathbb{Z}_d$ in the exponents. For this example,
\begin{align}
\begin{array}{ll}
p_0=X^{30/(3\times5)}\otimes X^{30}\otimes X^{30}=X^2\otimes I\otimes I,\quad& q_0=Z^{-2\beta_0}\otimes I\otimes I\\
p_1=X^{30/2}\otimes X^{30/3}\otimes X^{30}=X^{15}\otimes X^{10}\otimes I,\quad& q_1=Z^{-15\beta_1}\otimes Z^{-10\beta_1}\otimes I\\
p_2=X^{30}\otimes X^{30/5}\otimes X^{30}=I\otimes X^{6}\otimes I,\quad& q_2=I\otimes Z^{-6\beta_2}\otimes I\\
p_3=X^{30}\otimes X^{30/2}\otimes X^{30/(3\times5)}=I\otimes X^{15}\otimes X^2,\quad& q_3=I\otimes Z^{-15\beta_3}\otimes Z^{-2\beta_3}\\
p_4=X^{30}\otimes X^{30}\otimes X^{30/2}=I\otimes I\otimes X^{15},\quad& q_4=I\otimes I\otimes Z^{-15\beta_4}.
\end{array}
\end{align}
Note that $\llbracket p_i,q_j\rrbracket_d=0$ for $i\neq j$ regardless of the values of the $\beta_i$. The $\beta_i$ must be chosen so that $\llbracket p_i,q_i\rrbracket_d=f_i$. Namely, $\beta_0=8$, $\beta_1=5$, $\beta_2=1$, $\beta_3=29$, and $\beta_4=1$. 
\end{example}}

\edit{It was not a coincidence we could find values of $\beta_i$ to make the desired non-commutation relations in the previous example. Lemma~\ref{lem:helper-lem-2} in the appendix gives the necessary statement in general. This also enables the following constructive proof that $n=\max_j|F_j|$ qubits are sufficient in general to achieve a non-commuting pair relation.}





\edit{\begin{theorem}[Minimum qudits achieving a non-commuting pair relation]
\label{thm:css-construction-hard-case}
Suppose that we are given $f := (f_0,\dots,f_{k-1}) \in (\mathbb{Z}_d\setminus\{0\})^{k}$. For every $j=0,\dots,m-1$, define the multisets $F_j := \{f_i : f_i \not\equiv 0 \pmod{p_j^{\alpha_j}}, \; i \in \{0,\dots,k-1\}\}$. Then the minimum number of qudits needed for which $f$ is an achievable non-commuting pair relation is $\max_j |F_j|$. The non-commuting pairs generating $f$ can be chosen to be CSS.
\end{theorem}}

\begin{proof}
\edit{By Lemma~\ref{lem:lower-bound} $n\ge\max_j|F_j|$ qudits are necessary to achieve $f$. To show that $\max_j|F_j|$ qudits are sufficient,} consider a matrix $Q\in\mathbb{Z}_{n+1}^{k\times m}$, where $Q_{ij}\neq0$ if and only if $f_i\in F_j$. \edit{Because $n=\max_j|F_j|$, by choosing non-zero entries from $\mathbb{Z}_{n+1} \setminus \{0\}$},  we can arrange that within each column of $Q$, every non-zero entry is unique. We interpret these non-zero entries as qudit labels for qudits $1,2,\dots,n$. 

For each row $i=0,\dots,k-1$, we will now show how to construct a pair of Paulis $X(u_i)$ and $Z(v_i)$ that are supported only on qudits indicated in that row and $\llbracket X(u_i),Z(v_i)\rrbracket_d=f_i$. Then the ordered sets $\{X(u_0),X(u_1),\dots,X(u_{k-1})\}$ and $\{Z(v_0),Z(v_1),\dots,Z(v_{k-1})\}$ are CSS non-commuting pairs generating $f$. Define the set of indices in row $i$ where $Q$ takes value $h$: $S_{ih} :=\{j:Q_{ij}=h\}$ and $\overline S_{ih}:=\{j:Q_{ij}\neq h\}$. By construction, $p_j^{\alpha_j}$ divides $f_i$ for all $j\in S_{i0}$, and so $f_i=\gamma\prod_{j\in S_{i0}}p_j^{\alpha_j}$ for some integer $\gamma$. 

Set $(u_i)_{h-1}=\prod_{j\in\overline S_{ih}}p_j^{\alpha_j}$ and $v_i = -\beta u_i$, where $\beta$ is some integer (that depends on $i$) to be determined. So, we have
\begin{align}
\llbracket X(u_i),Z(v_i)\rrbracket=\beta\sum_{h\in\mathbb{Z}_{n+1} \setminus \{0\}} (\prod_{j\in\overline S_{ih}}p_j^{2\alpha_j} ) = \beta \prod_{j\in S_{i0}}p_j^{2\alpha_j} \sum_{h\in\mathbb{Z}_{n+1} \setminus \{0\}}(\prod_{j\in\overline S_{ih}\setminus S_{i0}}p_j^{2\alpha_j}).
\end{align}
Choose $\beta=\gamma\beta'$ for another integer $\beta'$. Apply Lemma~\ref{lem:helper-lem-2} with $d'=f_i$ and
\begin{equation}\label{eq:d''}
d''=(\prod_{j\in S_{i0}}p_j^{\alpha_j}) \sum_{h\in\mathbb{Z}_{n+1} \setminus \{0\}}(\prod_{j\in\overline S_{ih}\setminus S_{i0}}p_j^{2\alpha_j}),
\end{equation}
where we note that for any $j\not\in S_{i0}$ (i.e.~exactly the set of indices for which $p_j^{\alpha_j}$ does not divide $f_i$), $p_j$ does not divide $d''$. To elaborate,  $p_j$ divides all but one term in the sum in Eq.~\eqref{eq:d''}. Therefore, Lemma~\ref{lem:helper-lem-2} implies there is some integer $\beta'$ so that $\llbracket X(u_i),Z(v_i)\rrbracket=\beta'd'd''\equiv f_i\text{\space(mod $d$)}$.

Lastly, we show that $\llbracket X(u_i),Z(v_j)\rrbracket_d=0$ for $i\neq j$.
\begin{equation}
\llbracket X(u_i),Z(v_j)\rrbracket=\beta\sum_{h\in\mathbb{Z}_{n+1} \setminus \{0\}}\prod_{l\in\overline S_{ih}}p_l^{2\alpha_l}\prod_{k\in\overline S_{jh}}p_k^{2\alpha_k}.
\end{equation}
Because within each column of $Q$ the entries are unique, $\overline S_{ih}\cup\overline S_{jh}=\{0,1,\dots,m-1\}$. Thus, $d$ divides $\llbracket X(u_i),Z(v_j)\rrbracket$ and $\llbracket X(u_i),Z(v_j)\rrbracket_d=0$.
\end{proof}



\edit{Achievable non-commuting pair relations $f$ of maximum size, i.e.~$|f|=nm$, can be characterized more directly. We do so in Appendix~\ref{ssec:case-max-number-pairs}.}

\subsection{Qudits needed to achieve a matrix of commutation relations}
\label{ssec:comm-matrix-relations}

If $R$ is a principal ideal ring, and $C \in R^{k \times k}$ is a matrix satisfying $C_{ii} = 0$ and $C_{ij} = -C_{ji}$, for all $i,j=0,1,\dots,k-1$, then such a matrix is called an \textit{alternating} matrix over $R$. Suppose we are given an alternating matrix
$C\in\mathbb{Z}_d^{k\times k}$ and wish to find $P\in\mathbb{Z}_d^{k\times 2n}$ such that 
\begin{equation}
P\Lambda P^T = C.
\end{equation}
Here, rows of $P$ can be interpreted as $n$-qudit Paulis possessing the commutation relations specified by $C$. The goal of this section is to answer the following question: what is the minimum number of qudits $n$ for which such a $P$ can be found?

The following lemma, proved constructively in appendix~\ref{app:alternating-smith-normal-form}, is useful in answering this question.


\begin{lemma}[Alternating Smith Normal Form]\label{lem:zdasnf}
Given an alternating matrix $A\in\mathbb{Z}_d^{k\times k}$, there are matrices $L,B\in\mathbb{Z}_d^{k\times k}$, where $B$ is alternating with at most one non-zero entry per row and column and $L$ is invertible, such that $A=LBL^T$. We may further arrange $B$ so that it is non-zero only in the top-left $2r \times 2r$ block which has the form $\bigoplus_{i=1}^r\left(\begin{smallmatrix}0&\beta_i\\-\beta_i&0\end{smallmatrix}\right)$ for integer $r=\Theta(M_A)/2$, where $M_A$ is the $\mathbb{Z}_d$-submodule generated by the columns of $A$, and each $\beta_i\in\mathbb{Z}_d$ non-zero, satisfying $\beta_i\mid\beta_{i+1}$ for all $i < r$. Also, for all $i = 1,\dots, r$, $\beta_i$ is uniquely determined up to multiplication by a unit by the formula $\beta_i =d_{2i} / d_{2i-1} \bmod d$ (or, alternatively, the formula $\beta_i =d_{2i-1}/ d_{2i-2} \bmod{d}$), where $d_j$ is the greatest common divisor of all $j\times j$ minors of $A$ (and $d_0:=1$).
\end{lemma}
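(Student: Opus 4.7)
The plan is to prove existence of the decomposition by strong induction on $k$, and to derive uniqueness of the $\beta_i$ from the invariance of the gcd-of-minors $d_j$ under congruence. The base case $k \leq 1$ is immediate since the only alternating matrix of that size is $0$.

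For the inductive step with $k \geq 2$ and $A \neq 0$, the first goal is to drive a pivot to position $(0,1)$ whose value $\beta$ divides every other super-diagonal entry in $\mathbb{Z}_d$. Because $\mathbb{Z}_d$ is a principal ideal ring, the ideal $\sum_{i<j}(A_{ij}) \subseteq \mathbb{Z}_d$ is principal, say $(\beta)$, and $\beta$ is a $\mathbb{Z}_d$-linear combination of super-diagonal entries. I would realize this by repeatedly applying symmetric elementary congruences $A \to EAE^T$, where $E \in \mathrm{GL}_k(\mathbb{Z}_d)$ is either a simultaneous row/column permutation or a two-row Bezout-style mixing acting identically on the corresponding two columns. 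Each such $E$ preserves the alternating property, and after finitely many steps the super-diagonal gcd can be concentrated in a single entry, which a permutation then moves to position $(0,1)$.

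Once $\beta = A_{01}$ divides every other entry of $A$, clearing proceeds as usual: for each $j \geq 2$, subtracting $(A_{0j}/\beta)$ times column $1$ from column $j$ combined with the symmetric row operation zeros out both $A_{0j}$ and $A_{j0}$; an analogous operation using column $0$ clears row and column $1$ outside of $A_{01},A_{10}$. The resulting block form $A = \bigl(\begin{smallmatrix} 0 & \beta & 0 \\ -\beta & 0 & 0 \\ 0 & 0 & A' \end{smallmatrix}\bigr)$ with $A' \in \mathbb{Z}_d^{(k-2)\times(k-2)}$ alternating invites the induction hypothesis. Crucially, every entry of $A'$ lies in $(\beta)$, because clearing produces only $\mathbb{Z}_d$-linear combinations of pre-existing entries, all of which were in $(\beta)$ by choice of pivot. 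Hence the next pivot $\beta_2$ satisfies $\beta \mid \beta_2$, and iterating yields the full chain $\beta_1 \mid \cdots \mid \beta_r$. The count $r = \Theta(M_A)/2$ follows because the ordinary Smith normal form of $B$ is $\mathrm{diag}(\beta_1,\beta_1,\ldots,\beta_r,\beta_r,0,\ldots,0)$ with exactly $2r$ nonzero entries, so Lemma~\ref{lem:snf-nonzeros} combined with Lemma~\ref{lem:min-genarators-modules} (applied to the change of basis $B = L^{-1} A (L^T)^{-1}$) gives $\Theta(M_A) = 2r$.

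For uniqueness via minors, the key is that $d_j(A)$ is invariant up to units under any congruence $A \mapsto LAL^T$: a Cauchy-Binet expansion writes the minors of $LAL^T$ as $\mathbb{Z}_d$-linear combinations of minors of $A$, and the reverse inclusion via $L^{-1}$ yields equality of the generated ideals. Thus $d_j(A) = d_j(B)$ up to units. A direct calculation on the canonical $B$ using its block structure gives $d_{2i}(B) = \prod_{j=1}^i \beta_j^2$ and $d_{2i-1}(B) = \beta_i \prod_{j=1}^{i-1} \beta_j^2$, where the divisibility chain makes the listed products the gcd-minimal contributions; both claimed ratio formulas then follow. The main obstacle I anticipate is the Bezout-mixing step over $\mathbb{Z}_d$ when $d$ is composite: unlike over $\mathbb{Z}$, producing an invertible $2\times 2$ matrix over $\mathbb{Z}_d$ with a prescribed first row satisfying an ideal-generating identity is not automatic when zero divisors intervene. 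The cleanest workaround is either to lift $A$ to an integer alternating matrix, invoke the PID case of \cite{kuperberg2002kasteleyn}, and reduce mod $d$; or to use the Howell-form machinery of Section~\ref{sssec:howell-normal-forms} to produce the required invertible extensions directly over $\mathbb{Z}_d$.
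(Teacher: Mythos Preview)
Your argument is essentially correct and your identified workaround---lift to $\mathbb{Z}$, apply the PID case, reduce mod $d$---is precisely the route the paper takes. The paper does not attempt the induction directly over $\mathbb{Z}_d$; instead it first proves the ASNF over $\mathbb{Z}$ (Lemma~\ref{lem:asnf}) via an iterative pivoting algorithm (alternately clearing a row and a column until the pivot stabilizes, then a second pass that replaces pairs $(\gamma_i,\gamma_j)$ by $(\gcd,\mathrm{lcm})$ to enforce divisibility), and then simply reduces all three matrices $\bar A=\bar L\bar B\bar L^T$ modulo $d$, using $\det\bar L=\pm1$ to conclude $L$ remains invertible in $\mathbb{Z}_d$. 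Your inductive scheme is a legitimate alternative and has the advantage of producing the divisibility chain $\beta_1\mid\beta_2\mid\cdots$ in a single pass (since every entry of $A'$ already lies in $(\beta_1)$), whereas the paper needs a separate $\gcd/\mathrm{lcm}$ sweep. On the other hand, your ``concentrate the gcd at $(0,1)$'' step hides the same ping-pong between clearing row $0$ and clearing row $1$ that the paper makes explicit, and its termination ultimately rests on the same Noetherian argument; you should say so rather than ``after finitely many steps.'' Two small cautions: first, the Bezout obstacle you flag is not actually fatal over $\mathbb{Z}_d$, since every PIR is a Hermite ring (the paper notes this in Theorem~\ref{thm:asnf}), so the required invertible $2\times2$ matrices exist directly; second, the lemma's formula computes the $d_j$ as \emph{integer} gcds of \emph{integer} minors and only reduces mod $d$ at the end (see the remark following the lemma), so your Cauchy--Binet argument over $\mathbb{Z}_d$ proves the uniqueness claim but does not literally recover the stated formula without an extra word on why the two gcds agree up to units.
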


\begin{remark}
Note that in the formulas for $\beta_i$ in the above lemma, both the minors and the greatest common divisor are first evaluated over integers, and then the division is also performed over integers. Finally the modulo $d$ operation gives back an element of $\mathbb{Z}_d$. Similar to the remark following Lemma~\ref{lem:asnf}, one should note that the smallest integer $j$ for which $d_j / d_{j-1} \equiv 0 \pmod{d}$ is $2r+1$, and thus an odd integer (the fact that this integer is odd for any $d$ is itself quite an interesting fact). In fact, if $k$ is odd, then such an odd integer $j$ must exist as $d_k = 0$. We can also easily deduce that $d_{j'} \neq 0$ for all $j' < j$. Furthermore, it follows from Lemma~\ref{lem:asnf} that for all $j' \geq j$ such that $d_{j'-1} \neq 0$, we also have $d_{j'} / d_{j'-1} \equiv 0 \pmod{d}$.
\end{remark}

Apply the Lemma to $C$, finding invertible $L$ and alternating $B$ such that $C=LBL^T$. Then defining $Q := L^{-1}P \in\mathbb{Z}_d^{k\times 2n}$ we have
\begin{equation}
(L^{-1}P)\Lambda(L^{-1}P)^T=Q\Lambda Q^T=B,
\end{equation}
and let $r = \Theta(M_C)/2$ with $M_C$ denoting the submodule generated by the columns of $C$. Now, since $B$ is alternating with at most one non-zero entry per row and column, the Paulis represented by the first $2r$ rows of $Q$ are simply non-commuting pairs (the last $k-2r$ rows can be chosen to be all 0s, representing identity Paulis). In Theorem~\ref{thm:css-construction-hard-case}, we concluded that the necessary and sufficient number of qudits needed to achieve a set of non-commuting relations $\{\beta_1,\beta_2,\dots,\beta_r\}$ is $\max_j|F_j|$, where $F_j=\{\beta_i:\beta_i\not\equiv 0\pmod{p_j^{\alpha_j}},i\in\{1,2,\dots,r\}\}$. Now there exists some $p_j$ so that $\beta_r\not\equiv0\pmod{p_j^{\alpha_j}}$, because otherwise we would have $\beta_r = 0$. Since $\beta_i|\beta_{i+1}$ for all $i<r$, if $\beta_r\not\equiv0\pmod{p_j^{\alpha_j}}$, then $\beta_i\not\equiv0\pmod{p_j^{\alpha_j}}$ for all $i$. Thus, $\max_j|F_j|=r$ qudits are necessary and sufficient to construct $Q$ and also $P=LQ$. 

The above establishes the following theorem:
\begin{theorem}[\edit{Minimum qudits achieving commutation relations}]
\label{thm:min-qubits-comm-matrix}
If $C\in\mathbb{Z}_d^{k\times k}$ is an alternating matrix, and $Q\in\mathbb{Z}_d^{k\times 2n}$ satisfies $Q\Lambda Q^T=C$, then $n\ge\Theta(M_C)/2$, where $M_C$ is the submodule generated by the columns of $C$. Moreover, there exists a matrix $P\in\mathbb{Z}_d^{k\times\Theta(M_C)}$, such that $C=P\Lambda P^T$. Rows of $P$ indicate $k$ $(\Theta(M_C)/2)$-qudit Paulis possessing the commutations relations specified by $C$.
\end{theorem}

\edit{For some particular forms of alternating matrix $C\in\mathbb{Z}_d^{k\times k}$, one can compute $\Theta(M_C)$ analytically.} We mentioned one such case already before in Lemma~\ref{lem:min-number-gens-special-matrix}. Another case happens when all entries in the upper triangular part of $C$, except the diagonal, are equal. That is, suppose $C$ is alternating with $C_{ij}=t \in \mathbb{Z}_d \setminus \{0\}$, for all $j > i$. Then applying Lemma~\ref{lem:zdasnf} again gives matrices $L,B \in \mathbb{Z}_d^{k\times k}$, with $L$ invertible and $B$ alternating and of the form given by the lemma. Let us calculate the quantities $d_0, d_1, \dots, d_k$ as defined in Lemma~\ref{lem:zdasnf}. First suppose $t=1$. Then we have:
\begin{enumerate}[(i)]
    \item $d_0 = 1$ by definition.
    \item For every even $j$, the determinant (evaluated over integers) of the top-left $j \times j$ block of the matrix $C$ is $1$, which is easily verified by bringing $C$ to upper-triangular form using row (or column) operations. In particular, if $k$ is even, then $d_k=\det(C)=1$.
    \item If $k$ is odd, $d_k=\det(C)=0$ as $C$ is an alternating matrix. Also by (ii) we get $d_{k-1}=1$, as at least one $(k-1) \times (k-1)$ minor is $1$.
\end{enumerate}
Then by the chain of divisibilities condition mentioned in the remark following Lemma~\ref{lem:asnf}, we conclude that for $t=1$, we have $d_j = 1$ for all $j$ if $k$ is even, while if $k$ is odd, then we have $d_k = 0$ and $d_j = 1$ for all $j < k$. Now for arbitrary $t \in \mathbb{Z}_d \setminus \{0\}$, we simply note that $d_j$ equals $t^j$ times the value of $d_j$ for the case $t=1$. Combining these facts, and using the formulas in Lemma~\ref{lem:zdasnf}, we obtain the following:
\begin{enumerate}[(i)]
    \item If $k$ is even, then $\beta_i = t$ for all $1 \leq i \leq k/2$. Thus $\Theta(M_C) = k$.
    \item If $k$ is odd, then $\beta_i = t$ for all $1 \leq i \leq \lfloor k/2 \rfloor$. Thus $\Theta(M_C) = k-1$.
\end{enumerate}

From these observations, we immediately obtain the following corollary as a direct consequence of Theorem~\ref{thm:min-qubits-comm-matrix}:
\begin{corollary}
\label{cor:non-comm-all-the-same}
    Let $t \in \mathbb{Z}_d \setminus \{0\}$. Then the largest size of a non-commuting set $S$ on $n$-qudits, such that $\llbracket p, q \rrbracket_d = \pm t$, for every distinct $p,q \in S$, is $2n+1$.
\end{corollary}

\edit{
A construction of such a non-commuting set of qudit Paulis can be found in \cite{gungordu2014parafermion}. To describe it, we let $X_i$ (resp.~$Z_i$) denote the single-qudit Pauli $X$ (resp.~$Z$) on the qudit $i$. For $j=1,2,\dots,n$ let
\begin{align}
\gamma_{2j-1}=Z_j\prod_{i=1}^{j-1}X_i,\quad \gamma_{2j}=\omega^{(d+1)/2}X_jZ_j\prod_{i=1}^{j-1}X_i.
\end{align}
Also, let $\gamma_0=\prod_{j=1}^{n}\gamma_{2j-1}\gamma_{2j}^\dag$. Then $\llbracket\gamma_i,\gamma_j\rrbracket_d=\omega$ for all $i<j$, and $S=\{\gamma_0,\gamma_1,\dots,\gamma_{2n}\}$ forms the claimed non-commuting set of size $2n+1$.
}
\section{Some group theoretic results}
\label{sec:group_theory}

In this section, we depart from the previous sections where we studied elements of $\mathcal{P}_n$, and instead we take up the study of the Heisenberg-Weyl Pauli group $\overline{\mathcal{P}}_n$ for a $d$-dimensional qudit, without ignoring the phases. \editt{We begin this section by establishing 
the notion of equivalent generating sets in Theorem~\ref{thm:equiv-gen}. Using this, we give in Section~\ref{ssec:minimal-gen-sets} a characterization of a near minimal generating set of any subgroup of $\overline{\mathcal{P}}_n$ and also an algorithm to compute such a generating set. We also provide an (inefficient) algorithm to compute a minimal generating set of any subgroup of $\overline{\mathcal{P}}_n$ from a near minimal generating set. Next, in Section~\ref{ssec:canonical-form-gen-set}, we provide a way to compute a Gram-Schmidt generating set of any subgroup of $\overline{\mathcal{P}}_n$. This generalizes the well-known stabilizer-destabilizer decomposition \cite{aaronson2004improved} of the qubit Pauli group to the qudit case. Finally, in Section~\ref{ssec:subgroups-noncomm-pairs} we develop some results to compute the size of any subgroup of $\overline{\mathcal{P}}_n$. This generalizes similar results in \cite{gheorghiu2014standard}, where only the special case of stabilizer subgroups was studied. Of particular note in this subsection is the square-free theorem (Theorem~\ref{thm:square-free-max-pairs}), which says that maximal sets of non-commuting pairs generate the qudit Pauli group when $d$ is square-free.}

Let us first introduce some notation that will help the discussion. All products in this section will be ordered, unless mentioned otherwise. Recall from Section~\ref{ssec:qudit-pauli-group} that any element $p \in \hwgrp$ has the form $\omega^j P(v)$ for some $v \in \mathbb{Z}_d^{2n}$ and $j \in \mathbb{Z}_d$, where $\omega = e^{2 \pi i/d}$, and for a given $p$, the corresponding values of $j$ and $v$ are uniquely determined. Thus we can equivalently represent $p$ by the tuple $(j, v)$, and the map $p \mapsto (j,v)$ sets up a bijection $\hwgrp \rightarrow \mathbb{Z}_d \times \mathbb{Z}_d^{2n}$. For any $p \in \hwgrp$, we define $p^0 := I$. Representing a Pauli as an element of $\mathbb{Z}_d \times \mathbb{Z}_d^{2n}$, we define the projection maps $\pi_1$ and $\pi_2$ onto the first and second factors respectively, i.e. $\pi_1((j,v)) = j$ and $\pi_2((j,v))=v$, for a Pauli represented by the tuple $(j,v)$. Similarly, for an ordered multiset $S:=\{q_0,q_1,\dots,q_{k-1}\}$, we use the notation $\pi_1(S):=\{\pi_1(q_0),\pi_1(q_1),\dots,\pi_1(q_{k-1})\}$, and $\pi_2(S):=\{\pi_2(q_0),\pi_2(q_1),\dots,\pi_2(q_{k-1})\}$. We may unambiguously associate $\pi_2(S)$ with a $2n \times k$ matrix with elements in $\mathbb{Z}_d$, where the $j^{\text{th}}$ column is $\pi_2(q_j)$, and we will also denote the matrix by $\pi_2(S)$ when there is no chance for confusion.

Now suppose we have two Paulis $p,q \in \hwgrp$ represented by $(j,v)$ and $(k,w)$ respectively. Then it is an easy exercise to check that $pq$ is represented by $(\ell, v + w)$, where $\ell = j + k + v^T \left( \begin{smallmatrix}
    0 & 0 \\ I & 0
\end{smallmatrix}\right)  w$, evaluated modulo $d$, where $I$ here is an $n \times n$ identity matrix. From this, one can also easily show that for $t \geq 1$, $p^t$ is represented by $(\ell,tv)$, where $\ell = tj + \frac{t(t-1)}{2} 
 v^T \left( \begin{smallmatrix}
    0 & 0 \\ I & 0
\end{smallmatrix}\right)  v$, evaluated modulo $d$, and thus computing both $p+q$ and $p^t$ takes $O(n)$ operations (for $t$ bounded). The following well-known lemma is easy to establish (see also \cite{planat2011pauli,planat2007qudits}):

\begin{lemma}
\label{lem:max-degree}
Let $p=\omega^jP(v) \in \hwgrp$. If $d$ is odd, then $p^d = I$. If $d$ is even, then $p^d = \pm I$, with $p^d = I$ if and only if $v^T \left( \begin{smallmatrix}
    0 & 0 \\ I & 0
\end{smallmatrix}\right)  v$ is even.
\end{lemma}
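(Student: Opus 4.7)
The plan is to directly apply the closed-form expression for $p^t$ derived in the paragraph immediately preceding the lemma: if $p$ is represented by $(j,v)$, then $p^t$ is represented by $(\ell, tv)$ with $\ell \equiv tj + \tfrac{t(t-1)}{2} v^T M v \pmod{d}$, where $M = \left(\begin{smallmatrix} 0 & 0 \\ I & 0 \end{smallmatrix}\right)$. Setting $t = d$, the second component is $dv \equiv 0 \pmod{d}$, so $p^d$ is already guaranteed to be of the form $\omega^\ell I$ for some $\ell \in \mathbb{Z}_d$. The term $dj$ vanishes modulo $d$, so the whole question reduces to evaluating
\begin{equation*}
\ell \equiv \tfrac{d(d-1)}{2}\, v^T M v \pmod{d}.
\end{equation*}

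For the odd-$d$ case, I would observe that $(d-1)/2$ is an integer, so $\tfrac{d(d-1)}{2} = d \cdot \tfrac{d-1}{2} \equiv 0 \pmod{d}$, regardless of the value of $v^T M v$. Hence $\ell \equiv 0$ and $p^d = I$.

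For the even-$d$ case, $d/2$ is an integer and $(d-1)$ is odd, so $\tfrac{d(d-1)}{2} = \tfrac{d}{2}(d-1) \equiv -\tfrac{d}{2} \equiv \tfrac{d}{2} \pmod{d}$, giving $\ell \equiv \tfrac{d}{2}\, v^T M v \pmod{d}$. Splitting on the parity of $v^T M v$: if it is even, write it as $2k$ and get $\ell \equiv dk \equiv 0 \pmod{d}$, so $p^d = I$; if it is odd, write it as $2k+1$ and get $\ell \equiv \tfrac{d}{2} \pmod{d}$, so $p^d = \omega^{d/2} I = e^{\pi i} I = -I$. Combining both parity subcases yields exactly the claimed biconditional.

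There is no real obstacle here — the lemma is essentially a direct corollary of the $p^t$ formula and a careful parity-tracking modulo $d$. The only mild subtlety worth flagging is that $\tfrac{d(d-1)}{2}$ must be interpreted as an integer (since one of $d,d-1$ is even) before reducing modulo $d$; once that is noted, the two cases follow mechanically.
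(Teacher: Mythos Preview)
Your proposal is correct and essentially identical to the paper's own proof: both substitute $t=d$ into the formula for $p^t$, observe that the second component vanishes modulo $d$, and then split on the parity of $d$ (and, in the even case, of $v^T M v$) to evaluate the phase. The only difference is cosmetic—you spell out the reduction $\tfrac{d(d-1)}{2}\equiv \tfrac{d}{2}\pmod d$ slightly more explicitly than the paper does.
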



Thus in the case of odd $d$, the maximum possible order of any Pauli is $d$, while in the case of even $d$, the maximum possible order of a Pauli is $2d$ (for example, on $1$ qudit, the Pauli $XZ$ has order $2d$ when $d$ is even). 
 
Next we would like to figure out a way to transform a generating set of a subgroup of $\hwgrp$ to another generating set of the same group with the same number of generators. Before we present the main theorem on this, let us consider the case of one generator. Suppose $p \in \hwgrp$ generates a group $G$. For some unit $s \in \mathbb{Z}_d$, let us define $q := p^s$. We can then show that $q$ also generates $G$. To see this let $t \in \mathbb{Z}_d$ be the unit such that $st=1$ in $\mathbb{Z}_d$, i.e. treating $s,t$ as elements of $\mathbb{Z}$ we have $st = kd + 1$ for some non-negative integer $k$. Then $q^t = (p^s)^t = p^{st} = p^{kd+1} = p^{kd}p$. Now if $d$ is odd, then $p^{kd}=I$, and if $d$ is even, then $p^{kd} = \pm I$ by Lemma~\ref{lem:max-degree}, and thus $q^t = \pm p$. If $q^t =p$, then $q$ generates $G$. The case $q^t = -p$ is more interesting. This is precisely the case when $p^d = -I$ and $d$ is even, and thus $q^{t(d+1)}=q^{td} q^t = p$, which again shows that $q$ generates $G$. Thus we have proved that $\langle p \rangle = \langle q \rangle = G$. Theorem~\ref{thm:equiv-gen} generalizes this observation to multiple generators. We will also need Lemma~\ref{lem:helper-lem-phase-mult}, whose proof is easy and is left to the reader.

\textbf{Notation:} Recall that $[p,q]$ is the group commutator of $p,q \in \hwgrp$. Let $S \subseteq \hwgrp$ be a set or multiset. We will denote $I_S := \{p \in \langle S \rangle : \pi_2(p) = 0\}$ to be the set of all elements of $\langle S \rangle$ that are proportional to $I$. We will also denote $J_S := \{[p,p'] : p,p' \in S\} \cup \{p^d :p \in S\}$. Then $I_S$ is a subgroup and moreover we have $\langle J_S \rangle \subseteq I_S \subseteq \langle S \rangle$.

\begin{lemma}
\label{lem:helper-lem-phase-mult}
Suppose $S := \{q_0,q_1,\dots,q_{k-1}\} \subseteq \hwgrp$ is a multiset. Then
\begin{enumerate}[(i)]
    \item $\langle J_S \rangle \subseteq \langle S \rangle$.
    \item Let $T \subseteq \langle S \rangle$. Then $\langle J_T \rangle \subseteq \langle J_S \rangle$. Moreover, if $\langle T \rangle = \langle S \rangle$, then $\langle J_T \rangle = \langle J_S \rangle$.
    \item If $S' \subseteq S$ is chosen such that $q \in S \setminus S'$ implies $\pi_2(q)=0$, then $\langle J_S \rangle = \langle J_{S'} \rangle$.
    \item Let $T:= \{q'_0,q'_1,\dots,q'_{k-1}\} \subseteq \hwgrp$ be a different multiset such that $q'_j = q_j \omega^{\beta_j}$, for some $\beta_j \in \mathbb{Z}_d$ for each $j$. Then $\langle J_T \rangle = \langle J_S \rangle$.
\end{enumerate}
\end{lemma}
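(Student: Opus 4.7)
The central observation is that $J_S \subseteq K = \{\omega^j I : j \in \mathbb{Z}_d\}$: every element of $J_S$ is proportional to the identity. Indeed, for any two Paulis $p, p' \in \hwgrp$, the group commutator satisfies $[p,p'] = \omega^{\pi_2(p)^T \Lambda \pi_2(p')} I$ by Eq.~\eqref{eq:Lambda-def}, and Lemma~\ref{lem:max-degree} tells us $p^d = \pm I$. Since $K$ is the center of $\hwgrp$, the subgroup $\langle J_S \rangle \subseteq K$ is central. I will use this centrality throughout.

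For part (i), $J_S \subseteq \langle S \rangle$ by direct construction of its elements from $S$, and closure under the group operation gives $\langle J_S \rangle \subseteq \langle S \rangle$. For part (iii), each $q \in S \setminus S'$ has $\pi_2(q) = 0$, so $q = \omega^j I$ is central; thus $[q, q'] = I$ for every $q' \in S$ and $q^d = \omega^{jd} I = I$. So $J_S$ and $J_{S'}$ differ only by copies of $I$ and generate the same subgroup. For part (iv), multiplying a generator by a phase leaves both commutators and $d$-th powers unchanged: $[q_i \omega^{\beta_i}, q_j \omega^{\beta_j}] = [q_i, q_j]$ since phases are central, and $(q_j \omega^{\beta_j})^d = q_j^d \omega^{d\beta_j} = q_j^d$ since $\omega^d = 1$. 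Hence $J_T = J_S$.

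The main work is in part (ii), and the key step is to pass to the quotient group $\hwgrp / \langle J_S \rangle$. Because $\langle J_S \rangle$ is a central subgroup containing every $[q_i, q_j]$ and every $q_i^d$, the images $\bar q_i$ of the generators in $\hwgrp / \langle J_S \rangle$ pairwise commute and satisfy $\bar q_i^d = 1$. Consequently the subgroup $\overline{\langle S \rangle}$ is abelian and every element there has order dividing $d$. Thus any $p \in \langle S \rangle$ admits a representation $\bar p = \prod_i \bar q_i^{a_i}$ with $a_i \in \mathbb{Z}_d$, and for any $p, p' \in \langle S \rangle$ we get $[\bar p, \bar p'] = 1$ and $\bar p^{\, d} = 1$ in the quotient. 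Lifting back, $[p, p'] \in \langle J_S \rangle$ and $p^d \in \langle J_S \rangle$. Applied to $T \subseteq \langle S \rangle$, this yields $J_T \subseteq \langle J_S \rangle$ and hence $\langle J_T \rangle \subseteq \langle J_S \rangle$. The converse inclusion under $\langle T \rangle = \langle S \rangle$ follows by symmetry.

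I do not anticipate a true obstacle: the only mildly delicate point is justifying the identity $\bar p = \prod_i \bar q_i^{a_i}$ in the abelian quotient, which requires that the relations making $\overline{\langle S \rangle}$ abelian and $d$-torsion hold in the quotient — and both hold precisely because the offending commutators and $d$-th powers have been collapsed into $\langle J_S \rangle$. The rest is bookkeeping.
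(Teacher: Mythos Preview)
The paper explicitly leaves this lemma to the reader (``whose proof is easy and is left to the reader''), so there is no paper proof to compare against. Your argument is correct in all four parts.

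Your handling of part (ii) via the quotient $\hwgrp/\langle J_S\rangle$ is clean: since $\langle J_S\rangle$ is central it is normal, and killing the commutators $[q_i,q_j]$ and the $d$-th powers $q_i^d$ forces the image of $\langle S\rangle$ to be abelian with every generator of order dividing $d$, hence every element of order dividing $d$. The lifting step is then immediate. One small remark: you do not actually need the explicit representation $\bar p=\prod_i\bar q_i^{a_i}$; it suffices that the image of $\langle S\rangle$ is abelian (from commuting generators) and $d$-torsion (from $d$-torsion generators in an abelian group). An alternative, equally routine, approach would be to expand $[p,p']$ and $p^d$ directly using repeated applications of the identity $ab=[a,b]ba$ together with centrality of commutators, collecting the resulting phase factors into $\langle J_S\rangle$; your quotient argument packages this bookkeeping more elegantly.
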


\begin{theorem}[\edit{Equivalent generating sets}]
\label{thm:equiv-gen}
Suppose $S := \{q_0,q_1,\dots q_{k-1}\}$ is an ordered multiset of elements of $\hwgrp$. Let $A \in \mathbb{Z}_d^{k \times k}$ be an invertible matrix, and consider the ordered multiset $T := \{q'_0,q'_1,\dots,q'_{k-1}\} \subseteq \hwgrp$, where we define $q'_i := \prod_{j=0}^{k-1} q_j^{A_{ij}}$. Then $\langle S \rangle = \langle T \rangle$.
\end{theorem}

\begin{proof}
This follows a similar line as the one generator example above. Keeping track of the phases that is facilitated by Lemma~\ref{lem:helper-lem-phase-mult}. A full proof is given in Appendix~\ref{app:proofs_sec6}.
\end{proof}


\edit{Our next goal is to provide a complete characterization of the subgroup of phases $I_S$, given a generating set $S \subseteq \hwgrp$. For instance, for any $p,q\in S$, both $[p,q]$ and $p^d$ are proportional to identity and thus members of $I_S$. There may be additional products of generators that are proportional to identity as well, and these can be read off the kernel of $\pi_2(S)$. That these are all the elements of $I_S$ is the content of the next lemma.

Recall that given a matrix $A \in \mathbb{Z}_d^{k \times \ell}$, the kernel of $A$ is a submodule of $\mathbb{Z}_d^{\ell}$ defined as $\ker(A):= \{v \in \mathbb{Z}_d^{\ell}: Av=0\}$.
}

\begin{lemma}
\label{lem:KS-generators}
Suppose $S := \{q_0,q_1,\dots,q_{k-1}\} \subseteq \hwgrp$ is an ordered multiset. Consider the matrix $\pi_2(S) \in \mathbb{Z}_d^{2n \times k}$, and let $\overline{K} \in \mathbb{Z}_d^{k \times \ell}$ be such that the columns of $\overline{K}$ is a generating set for $\ker(\pi_2(S))$. Then we have the following:
\begin{enumerate}[(i)]
    \item Let $v \in \ker(\pi_2(S))$, and define $q' := \prod_{j=0}^{k-1}q_j^{v_j}$. Then $\pi_2(q')=0$.
    \item Define the set $N_S := \left \{ \prod_{j=0}^{k-1}q_j^{v_j} : v \in \ker(\pi_2(S)) \right\}$. Then $I_S = \langle N_S, J_S \rangle$.
    \item Define the multiset $\overline{K}_S := \left \{\prod_{j=0}^{k-1} q_j^{\overline{K}_{ji}} : i=0,1,\dots,\ell-1 \right\}$. Then $I_S = \langle \overline{K}_S, J_S \rangle$.
\end{enumerate}
\end{lemma}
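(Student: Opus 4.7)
The plan is to handle the three parts in sequence, relying only on the multiplicative structure of $\hwgrp$ and on the fact that every element of $\langle J_S \rangle$ is central (being proportional to $I$). Part (i) is a direct calculation: the multiplication rule recalled in the excerpt gives $\pi_2(pq) = \pi_2(p) + \pi_2(q) \bmod d$, and therefore $\pi_2(q^t) = t\,\pi_2(q) \bmod d$ for every integer $t$. Applied to $q' = \prod_{j=0}^{k-1} q_j^{v_j}$, this yields $\pi_2(q') = \sum_j v_j \,\pi_2(q_j) = \pi_2(S)\,v = 0$ by hypothesis on $v$.

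For part (ii), the inclusion $\langle N_S, J_S\rangle \subseteq I_S$ is immediate from part (i) together with Eq.~\eqref{eq:commutator} and Lemma~\ref{lem:max-degree}, since each generator of $J_S$ is proportional to $I$. For the reverse direction, take $r \in I_S \subseteq \langle S\rangle$ and write it as an arbitrary word $r = q_{i_1}^{\epsilon_1} \cdots q_{i_m}^{\epsilon_m}$ with $\epsilon_\ell = \pm 1$. Every group commutator $[q_a,q_b]$ is central, so repeatedly swapping adjacent factors to collect powers of each $q_j$ introduces, at each swap, an element of $J_S$ that can be pushed to the left without further correction. Inverses are handled via $q_j^{-1} = q_j^{d-1}\,(q_j^d)^{-1}$, also absorbing into $\langle J_S\rangle$. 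The outcome is a normal-form expression $r = \lambda \prod_{j=0}^{k-1} q_j^{b_j}$ with $\lambda \in \langle J_S\rangle$ and $b_j \in \{0,\dots,d-1\}$. Applying $\pi_2$ to both sides and using $\pi_2(r) = \pi_2(\lambda) = 0$ gives $\pi_2(S)\,b \equiv 0 \pmod{d}$, so $b \in \ker(\pi_2(S))$. Hence $\prod_j q_j^{b_j} \in N_S$, and $r \in \langle N_S, J_S\rangle$.

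Part (iii) reduces to showing $\langle \overline{K}_S, J_S\rangle = \langle N_S, J_S\rangle$. The inclusion $\overline{K}_S \subseteq N_S$ is immediate from part (i) applied to the columns of $\overline{K}$. For the other direction, pick any $v \in \ker(\pi_2(S))$ and write $v = \sum_i \mu_i \overline{K}_{\cdot i}$ in $\mathbb{Z}_d^k$. Form the product $\prod_i \bigl(\prod_j q_j^{\overline{K}_{ji}}\bigr)^{\mu_i}$ and, by the same centrality-based rearrangement as in (ii), rewrite it as $\lambda \prod_j q_j^{c_j}$ for some $\lambda \in \langle J_S\rangle$, where $c_j := \sum_i \mu_i \overline{K}_{ji}$ is interpreted as an integer. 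Since $c_j \equiv v_j \pmod d$, the difference $c_j - v_j$ is a multiple of $d$; absorbing the resulting central factors $q_j^d$ into $\lambda$ gives $\prod_j q_j^{v_j} \in \langle \overline{K}_S, J_S\rangle$, so $N_S \subseteq \langle \overline{K}_S, J_S\rangle$.

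The only step that requires any real care is the rearrangement in part (ii): one must be sure that sorting a word in the $q_j$'s never spawns phases lying outside $\langle J_S\rangle$, and that reducing each $q_j^{a_j}$ modulo $d$ and eliminating inverses does not either. Both are guaranteed by the fact that all commutators $[q_a,q_b]$ and all powers $q_j^d$ belong to $J_S$ and are central in $\hwgrp$, so the accumulated phase factor $\lambda$ stays in $\langle J_S\rangle$ throughout the manipulation. The rest of the argument is bookkeeping.
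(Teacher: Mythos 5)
Your argument is correct and follows essentially the same route as the paper's proof: part (i) is the same direct computation, part (ii) uses the same "collect powers via central commutators and reduce exponents modulo $d$" normal-form argument, and part (iii) is the same rearrangement that trades the linear combination $v=\overline{K}w$ for a product of $\overline{K}_S$ generators up to phases in $\langle J_S\rangle$. The only surface-level difference is that you spell out the elimination of inverses explicitly via $q_j^{-1}=q_j^{d-1}(q_j^d)^{-1}$ and run the part (iii) rearrangement in the opposite direction, neither of which changes the substance.
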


\begin{proof}
\edit{See Appendix~\ref{app:proofs_sec6}.}
\end{proof}

\edit{The group $K=\langle\omega I\rangle$ is isomorphic to the additive group on $\mathbb{Z}_d$. Subgroups of $\mathbb{Z}_d$ have the property that they can always be generated by one element (in other words, $\mathbb{Z}_d$ is a principal ideal ring). Thus, the following lemma says the same is true of subgroups of $K$.}

\begin{lemma}
\label{lem:identity-gen}
 Let $S := \{\omega^{\mu_j} I: j=0,1,\dots,t, \; \mu_j \in \mathbb{Z}_d\} \subset \hwgrp$. Let $\mathcal{I}$ be the ideal generated by $\{\mu_0,\dots,\mu_t\}$. Then
 \begin{enumerate}[(i)]
     \item $\langle S \rangle = \{\omega^\mu I : \mu \in \mathcal{I}\}$.
     \item There exists $\mu \in \mathcal{I}$ such that $\mathcal{I}$ is generated by $\mu$, and $\langle S \rangle = \langle \omega^{\mu}I \rangle$. One can choose $\mu = \gcd(\mu_0,\mu_1,\dots,\mu_t,d)$, where $\gcd$ is the greatest common divisor evaluated over integers.
 \end{enumerate}
\end{lemma}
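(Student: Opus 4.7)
The plan is to treat the two parts in order, leaning on the fact that all generators in $S$ commute (since they are scalar multiples of $I$), so the group they generate is essentially an additive image in the exponent.

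For part (i), I would start by noting that any element of $\langle S\rangle$ is a finite product $\prod_{\ell}(\omega^{\mu_{j_\ell}}I)^{a_\ell}$ with $a_\ell\in\mathbb{Z}$. Since the factors all commute and $\omega^d=1$, this product equals $\omega^{\nu}I$, where $\nu=\sum_\ell a_\ell\mu_{j_\ell}\bmod d$. Collecting equal indices and using Lemma~\ref{lem:max-degree} to reduce each exponent modulo $d$, we may rewrite $\nu\equiv\sum_{j=0}^{t}b_j\mu_j\pmod d$ for some $b_j\in\mathbb{Z}_d$. By definition of the ideal generated by $\{\mu_0,\dots,\mu_t\}$ in the ring $\mathbb{Z}_d$, we have $\nu\in\mathcal{I}$. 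Conversely, any $\mu\in\mathcal{I}$ can by definition be written as $\sum_{j=0}^t b_j\mu_j\bmod d$ with $b_j\in\mathbb{Z}_d$, and the corresponding product $\prod_{j=0}^{t}(\omega^{\mu_j}I)^{b_j}$ equals $\omega^\mu I\in\langle S\rangle$. This establishes the set equality.

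For part (ii), the key input is that $\mathbb{Z}_d$ is a principal ideal ring (as recalled in Section~\ref{sssec:basic-def}), so $\mathcal{I}=\mu\mathbb{Z}_d$ for some $\mu\in\mathcal{I}$. To produce an explicit $\mu$, I would lift to $\mathbb{Z}$: let $\mathcal{J}\subseteq\mathbb{Z}$ be the ideal generated by $\mu_0,\dots,\mu_t,d$. Since $\mathbb{Z}$ is a PID, $\mathcal{J}=g\mathbb{Z}$ with $g=\gcd(\mu_0,\dots,\mu_t,d)$. The quotient map $\mathbb{Z}\to\mathbb{Z}_d$ sends $\mathcal{J}$ surjectively onto $\mathcal{I}$ (the generator $d$ maps to $0$), so $\mathcal{I}$ is generated by the image of $g$, which by Bezout's identity lies in $\mathcal{I}$. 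Setting $\mu:=g$ and combining with part (i) gives $\langle S\rangle=\{\omega^{k\mu}I:k\in\mathbb{Z}_d\}=\langle\omega^\mu I\rangle$.

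There is no genuinely hard step here; the only thing to be careful about is the passage between ideals of $\mathbb{Z}$ and $\mathbb{Z}_d$ when identifying the generator, and the reduction of products to canonical form $\omega^\nu I$, which uses commutativity of the scalar elements together with Lemma~\ref{lem:max-degree}.
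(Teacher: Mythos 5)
Your proof is correct and follows essentially the same route as the paper: for part (i) both arguments write an arbitrary element of $\langle S\rangle$ as a product of powers of the commuting scalar generators and read off that the resulting exponent lies in $\mathcal{I}$ (and conversely), and for part (ii) both invoke that $\mathbb{Z}_d$ is a principal ideal ring and recover the explicit generator from B\'ezout/gcd over $\mathbb{Z}$. One very minor observation: citing Lemma~\ref{lem:max-degree} to reduce exponents modulo $d$ is overkill here, since for scalar Paulis $(\omega^{\mu}I)^a=\omega^{a\mu}I$ and $\omega^d=1$ already give the reduction directly; this does not affect correctness.
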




\begin{algorithm}
\caption{Return a generator for $I_S$ given $S \subseteq \hwgrp$}
\begin{algorithmic}[1]
    \Procedure{Identity\_Generator}{$S:=\{q_0,q_1,\dots,q_{k-1}\}, d$}
        \State $\mu \leftarrow d$
        \If{$d$ is even} \Comment{Generate phases of $I$ from $d^{\text{th}}$ powers of generators}
            \For{$0 \leq j \leq k-1$}
                \State $t \leftarrow \pi_1(q_j^d)$
                \If{$t=d/2$}
                    \State $\mu \leftarrow d/2$; \; Exit for loop
                \EndIf
            \EndFor
        \EndIf
        \If{$\mu = 1$}
            \State \Return $\omega I$
        \EndIf
        \State
        \For{$0 \leq j < \ell \leq k-1$} \Comment{Generate phases of $I$ from group commutators}
            \State $t \leftarrow \pi_1([q_j,q_{\ell}])$
            \State $\mu \leftarrow \gcd(t,\mu)$
            \If{$\mu = 1$}
                \State \Return $\omega I$
            \EndIf
        \EndFor
        \State
        \State Compute $\overline{K} \in \mathbb{Z}_d^{k \times s}$ such that $\pi_2(S)\overline{K}=0$, columns of $\overline{K}$ span $\ker(\pi_2(S))$
        \State
        \For{$0 \leq j \leq s-1$} \Comment{Generate all other phases of $I$}
            \State $q \leftarrow \prod_{\ell=0}^{k-1} q_{\ell}^{\overline{K}_{\ell j}}$
            \State $t \leftarrow \pi_1(q)$
            \State $\mu \leftarrow \gcd(t,\mu)$
            \If{$\mu = 1$}
                \State \Return $\omega I$
            \EndIf
        \EndFor
        \State
        \State \Return $\omega^{\mu}I$ 
    \EndProcedure
\end{algorithmic}
\label{alg:generator-IS}
\end{algorithm}

Lemma~\ref{lem:KS-generators} and Lemma~\ref{lem:identity-gen} leads to a simple algorithm (Algorithm~\ref{alg:generator-IS}) to find a generator for $I_S$ given a multiset $S \subseteq \hwgrp$. The basic idea is to get a generating set for $I_S$ and then use Lemma~\ref{lem:identity-gen}(ii). In the algorithm, $\gcd(a,b)$ refers to the evaluation of the greatest common divisor over integers, for integers $0 \leq a,b \leq d$, which takes $O(M(d)\log(\log (d)))$ operations, where $M(d)$ is the number of operations needed to multiply two integers no greater than $d$ (so $M(d)=O(\log^2 d)$ for elementary-school arithmetic, for instance) \cite{storjohann2000algorithms}. In order to facilitate early termination, instead of evaluating all the generators and then evaluating the $\gcd$ using Lemma~\ref{lem:identity-gen}(ii) of the whole list, we use the property $\gcd(a,b,c)= \gcd(a,\gcd(b,c))$ recursively, to update the $\gcd$ everytime we have a new generator. If at any stage the $\gcd$ becomes $1$, we can terminate the algorithm (Lines~8-9, Lines~14-15, Lines~23-24). In Lines~3-7, we use the fact from Lemma~\ref{lem:max-degree} that for odd $d$, any $p \in \hwgrp$ satisfies $p^d=I$, and for even $d$ satisfies $p^d = \pm I$. Lines~6-7 exploits this fact, and terminates checking the $d^{\text{th}}$ powers of the remaining generators, if a generator $q_j \in S$ is detected such that $q_j^d = -I$. In Lines~11-15, notice that we exploit the fact that $\pi_1([q_j,q_{\ell}]) + \pi_1([q_{\ell},q_j]) \equiv 0 \pmod{d}$, or equivalently $[q_j,q_{\ell}] = [q_{\ell},q_j]^{-1}$; thus only the commutators $[q_j,q_{\ell}]$ need to be considered for $j < \ell$. In Line~17, the computation of the kernel matrix $\overline{K}$ can be carried out using the techniques in \cite[Chapter~5]{storjohann2000algorithms}. In terms of computational cost, the computation of $q_j^d$ and $[q_j,q_{\ell}]$ in Line~5 and Line~12 respectively involves $O(n)$ operations, while computation of $q$ is Line~20 involves $O(kn)$ operations.

\subsection{Minimal and near-minimal generating sets}
\label{ssec:minimal-gen-sets}

We are now in a position to answer the following question: given a multiset $S \subseteq \hwgrp$, can we find a non-empty generating set of $\langle S \rangle$ of the smallest size? Such a set is called a \textit{minimal generating set} of $\langle S \rangle$. The ``non-empty'' condition is only relevant for the case $\langle S \rangle = \{I\}$. This subsection is dedicated to providing a \textit{nearly complete} solution to this problem. We begin by stating a result from commutative algebra that will be needed below, whose proof can be found in \cite{span-equality-relation} (we thank Jeremy Rickard for the outline of the proof):

\begin{lemma}
\label{lem:span-same}
Let $A, B \in \mathbb{Z}_d^{k \times \ell}$. Then the following conditions are equivalent.
\begin{enumerate}[(i)]
    \item The submodules of $\mathbb{Z}_d^k$ generated by the columns of $A$ and $B$ are the same.
    \item There exists an invertible matrix $C \in \mathbb{Z}_d^{\ell \times \ell}$ such that $A = BC$.
\end{enumerate}
\end{lemma}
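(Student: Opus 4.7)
The plan is to prove the two implications separately. The direction (ii) $\Rightarrow$ (i) is immediate: if $A = BC$ with $C$ invertible in $\mathbb{Z}_d^{\ell \times \ell}$, then every column of $A$ is a $\mathbb{Z}_d$-linear combination of the columns of $B$, giving $M_A \subseteq M_B$; applying the same reasoning to $B = AC^{-1}$ yields $M_B \subseteq M_A$.

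For the substantive direction (i) $\Rightarrow$ (ii), I would first reduce $B$ to its Smith normal form. By Theorem~\ref{thm:smith-normal-form}, write $B = PDQ$ with $P \in \mathbb{Z}_d^{k \times k}$, $Q \in \mathbb{Z}_d^{\ell \times \ell}$ invertible, and $D$ diagonal with invariant factors $d_1 \mid d_2 \mid \cdots \mid d_s$ (nonzero) followed by zeros. Since both the hypothesis and the desired conclusion are preserved by replacing $(A, B)$ with $(P^{-1}A, D)$ (absorbing $Q^{-1}$ into the unknown $C$), one may assume $B = D$. In this setting, $M_A \subseteq M_D$ forces each column of $A$ to have its last $k - s$ entries zero and its $i$-th entry (for $i \leq s$) divisible by $d_i$ in $\mathbb{Z}_d$. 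Hence $A = DC_0$ for some $C_0 \in \mathbb{Z}_d^{\ell \times \ell}$, with $C_0$ determined only up to addition of any $E$ satisfying $DE = 0$, i.e., any $E$ whose $i$-th row lies in $(d/d_i)\mathbb{Z}_d^\ell$ for $i \leq s$ and is arbitrary for $i > s$. The reverse inclusion $M_D \subseteq M_A$ produces $F \in \mathbb{Z}_d^{\ell \times \ell}$ with $D = AF$, so $D(I - C_0 F) = 0$, encoding $(I - C_0 F)_{ij} \in (d/d_i)\mathbb{Z}_d$ for all $i \leq s$ and all $j$.

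The core step is to exploit this freedom in $C_0$ to obtain an invertible choice, which I would do prime by prime using the CRT splitting $\mathbb{Z}_d \cong \prod_j \mathbb{Z}_{p_j^{\alpha_j}}$ together with the standard fact that a square matrix over $\mathbb{Z}_d$ is invertible iff its reduction modulo $p$ is invertible over $\mathbb{F}_p$ for every prime $p \mid d$. Fix such a $p$: a row $i \leq s$ of $C_0$ is determined modulo $p$ precisely when $p \mid d/d_i$, and for these $i$ the relation $D(I - C_0 F) = 0$ forces $(C_0 F)_i \equiv e_i^T \pmod{p}$. This implies the determined rows of $C_0$ are linearly independent in $\mathbb{F}_p^\ell$, since any nontrivial dependence among them would, after right-multiplication by $F$, produce a nontrivial dependence among the distinct standard basis rows $e_i^T$. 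The remaining rows of $C_0$ are free modulo $p$ and can be chosen to extend the determined rows to a basis of $\mathbb{F}_p^\ell$, making $C_0 \bmod p$ invertible. Since the mod-$p$ choices are independent across primes by CRT, they stitch into a single $C \in \mathbb{Z}_d^{\ell \times \ell}$ satisfying $A = DC$ with $C$ invertible; unwinding the Smith normal form reduction gives the desired $A = BC$.

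The main obstacle will be the linear-independence claim at each prime and the bookkeeping needed to assemble the prime-wise choices into a coherent $C$ over $\mathbb{Z}_d$. A natural alternative would use the Howell normal form---since $[A \mid 0]$ and $[B \mid 0]$ share a common Howell normal form, there exists an invertible $(k+\ell) \times (k+\ell)$ matrix $M$ with $[A \mid 0] = [B \mid 0] M$ whose upper-left $\ell \times \ell$ block $C$ already satisfies $A = BC$---but forcing this $C$ itself to be invertible appears to require an essentially similar prime-wise adjustment.
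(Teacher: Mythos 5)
Your proof is correct; note, though, that the paper does not contain its own argument for this lemma --- it defers to the external reference \cite{span-equality-relation} (crediting Jeremy Rickard), so there is no in-paper proof to compare against. Your route --- reduce $B$ to Smith normal form $D$, derive $A = DC_0$ and $D = AF$ from the two inclusions $M_A \subseteq M_D$ and $M_D \subseteq M_A$, obtain $D(I - C_0F) = 0$, and then adjust $C_0$ by elements of $\ker D$ prime-by-prime via CRT so that every reduction modulo $p \mid d$ is invertible --- is sound and self-contained. The crucial step checks out: reducing $D(I - C_0F) = 0$ mod $p$ at a ``determined'' index $i$ (one with $p \mid d/d_i$) gives $(C_0F)_i \equiv e_i^T \pmod p$, so right-multiplication by $F$ carries any $\mathbb{F}_p$-relation among the determined rows of $C_0$ to a relation among distinct standard basis row vectors, forcing it to be trivial. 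There are exactly $\ell$ minus (number of determined rows) free rows, so one can indeed complete to a basis of $\mathbb{F}_p^{\ell}$. The CRT gluing also works because $\{E : DE = 0\}$, given row-wise by $E_i \in (d/d_i)\mathbb{Z}_d^{\ell}$ for $i \le s$ and arbitrary for $i > s$, decomposes as a product over the prime-power factors of $d$. The hedging in your last paragraph is unnecessary: both the linear-independence claim and the prime-wise assembly are already handled by what you wrote.
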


Also note the following simple corollary of this lemma, which we prove for completeness in Appendix~\ref{app:proofs_sec6}.

\begin{corollary}
\label{cor:equal-span-same-inv-factor}
Let $A \in \mathbb{Z}_d^{k \times p}$ and $B \in \mathbb{Z}_d^{k \times q}$. If the columns of $A$ and $B$ generate the same submodule of $\mathbb{Z}_d^k$, then the number of invariant factors of $A$ and $B$ are equal.
\end{corollary}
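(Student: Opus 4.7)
The plan is to invoke Lemma~\ref{lem:snf-nonzeros}, which ties the number of non-zero invariant factors of a matrix (over a principal ideal ring) directly to the minimal number of generators of the submodule generated by its columns. This reduces the corollary to a one-line consequence of the hypothesis.

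Concretely, I would proceed as follows. Let $M_A$ and $M_B$ denote the submodules of $\mathbb{Z}_d^k$ generated by the columns of $A$ and $B$ respectively. By Theorem~\ref{thm:smith-normal-form}, since $\mathbb{Z}_d$ is a principal ideal ring, both $A$ and $B$ admit Smith normal forms; let the number of non-zero invariant factors of $A$ and $B$ be $r_A$ and $r_B$. Applying Lemma~\ref{lem:snf-nonzeros} to each matrix separately gives $\Theta(M_A) = r_A$ and $\Theta(M_B) = r_B$.

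The hypothesis of the corollary is precisely that $M_A = M_B$ as submodules of $\mathbb{Z}_d^k$, so trivially $\Theta(M_A) = \Theta(M_B)$, and therefore $r_A = r_B$. This is exactly the desired equality of the number of invariant factors.

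There is no serious obstacle here; the corollary is essentially a restatement of Lemma~\ref{lem:snf-nonzeros} in a form convenient for later use. The only subtlety worth flagging is that $A$ and $B$ are allowed to have different numbers of columns ($p$ versus $q$), so Lemma~\ref{lem:span-same} does not apply directly without padding — but this is irrelevant because Lemma~\ref{lem:snf-nonzeros} is stated for rectangular matrices of arbitrary shape, and the quantity $\Theta(M_A)$ depends only on the submodule $M_A$, not on how it is presented.
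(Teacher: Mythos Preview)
Your proof is correct, but it takes a different route from the paper's. The paper actually does the padding you mention as unnecessary: it sets $\overline{A}=(A\;\;0)\in\mathbb{Z}_d^{k\times q}$, invokes Lemma~\ref{lem:span-same} to obtain an invertible $C$ with $\overline{A}=BC$, and then argues via the uniqueness clause of the Smith normal form (Theorem~\ref{thm:smith-normal-form}) that the SNF of $B$ must agree with that of $\overline{A}$, hence of $A$. Your approach through Lemma~\ref{lem:snf-nonzeros} bypasses Lemma~\ref{lem:span-same} entirely by passing through the intrinsic invariant $\Theta(M_A)$, which depends only on the submodule and not on its presentation. This is shorter and arguably more conceptual; the paper's route, on the other hand, yields the slightly stronger conclusion that the invariant factors themselves (not just their count) coincide, though only the count is actually used downstream.
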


Now for the rest of this subsection, suppose $S := \{s_0,s_1,\dots,s_{k-1}\} \subseteq \hwgrp$ is an ordered mutltiset. The case when $\pi_2(S)$ has zero invariant factors is special. In this case, we have $\langle S \rangle = I_S$, and then a minimal generating set of $I_S$ can be obtained using Lemma~\ref{lem:identity-gen}(ii). Thus, through the remainder of this section we assume that $\pi_2(S)$ has at least one invariant factor. The following lemma is easy to establish, which gives us a lower and upper bound on the size of a minimal generating set of $\langle S \rangle$.

\edit{
\begin{lemma}[The size of a smallest generating set]\label{lem:min-gen-set-first-bound-simp}
For any ordered multiset $S\subseteq\hwgrp$, if $\pi_2(S) \in \mathbb{Z}_d^{2n \times k}$ has $r$ invariant factors in its Smith normal form, then the smallest generating set of $\langle S\rangle$ has either $r$ or $r+1$ elements.
\end{lemma}
\begin{proof}
This follows from Lemma~\ref{lem:min-gen-set-first-bound} in Appendix~\ref{app:proofs_sec6}. The crux is similar to the canonical generating set in \cite{gheorghiu2014standard}, but finds a canonical generating set for an arbitrary group rather than just a qudit stabilizer group, and does not change the Pauli basis using Clifford operators.
\end{proof}
}

Based on this result, we make the following definition.
\begin{definition}
\label{def:near-min-gen}
Given an ordered multiset $S := \{s_0,s_1,\dots,s_{k-1}\} \subseteq \hwgrp$, such that $r$ is the number of invariant factors of $\pi_2(S)$, we call a subset $T \subseteq \langle S \rangle$ a \textit{near-minimal generating set} of $\langle S \rangle$, if $T$ is of the form $T := T' \cup \{p\}$ and  satisfies (i) $\langle T \rangle = \langle S \rangle$, (ii) $\langle p \rangle = I_S$, and (iii) $|T'|=r$. Note that if $r=0$, then $T'$ is empty; so this definition also works for that case.
\end{definition}

Lemma~\ref{lem:min-gen-set-first-bound} gives us a way to compute a near-minimal generating set of $\langle S \rangle$. At this point, it begs the question of whether a near-minimal generating set of $\langle S \rangle$ is also a minimal generating set of $\langle S \rangle$ or not. While we do not completely resolve this question here, we prove some partial results below in the remainder of this subsection. Let us first show that indeed there are cases where a near-minimal generating set is a minimal generating set. An easy example is the case when the number of invariant factors of $\pi_2(S)$ is zero. In this case, a near-minimal generating set of $\langle S \rangle$ has size one, and hence it is a minimal generating set. We give another example below.
\begin{example}
Suppose $S = \{X, \omega I\} \subseteq \hwgrp$, for any $d > 2$ and number of qudits $n$. Then one checks easily that $\langle S \rangle = \{\omega^j X^a : j,a \in \mathbb{Z}_d\}$. Thus $|\langle S \rangle| = d^2$. Computing the Smith normal form of $\pi_2(S)$ shows that the number of its invariant factors is one. Thus a near-minimal generating set of $\langle S \rangle$ has size two. Suppose for contradiction that there exists a minimal generating set $T$ of $\langle S \rangle$ with $|T|=1$. Since the maximum possible order of any element of $\hwgrp$ is $2d$ (Lemma~\ref{lem:max-degree}), this implies that $|\langle T \rangle| \leq 2d < d^2$. So $T$ cannot generate $\langle S \rangle$. Thus all near-minimal generating sets of $\langle S \rangle$ are also minimal generating sets of $\langle S \rangle$ in this example.
\end{example}

Let us now give an example where a near-minimal generating set is not a minimal generating set, i.e. there exists generating sets of $\langle S \rangle$ of exactly size $r \geq 1$, where $r$ is the number of invariant factors of $\pi_2(S)$. \edit{One such example is when $r=k$ in Lemma~\ref{lem:min-gen-set-first-bound-simp} (then clearly $S$ is itself minimal).} Another one is given below in Example~\ref{ex:stabilizer}.

\begin{definition}[{\cite{gottesman1997stabilizer}}]
\label{def:stabilizer}
A stabilizer group is a subgroup $G \subseteq \hwgrp$ such that  $\{\omega^jI:j \in \mathbb{Z}_d\}\cap G=\{I\}$, \edit{or, equivalently, $I_G=\{I\}$.}
\end{definition}

\begin{example}
\label{ex:stabilizer}
Assume that $\langle S \rangle$ is a stabilizer group and $\langle S \rangle \neq \{I\}$. In this case, if $S$ is any generating set of $\langle S \rangle$, we know the size of a minimal generating set of $\langle S \rangle$ is exactly the number of invariant factors $r$ of $\pi_2(S)$. This is because a near-minimal generating set $T' \cup \{p\}$ of $G$ must satisfy $p = I$, and thus $\langle T' \cup \{p\} \rangle = \langle T' \rangle$. 
\end{example}

How far are we from computing a minimal generating set of $\langle S \rangle$, given we have a near-minimal generating set of $\langle S \rangle$? The next result gives a nice structure theorem to find minimal generating sets from near-minimal ones. 
\begin{theorem}
\label{thm:min-gen-set-simple-form}
Given $S \subseteq \hwgrp$, suppose that $T := T' \cup \{p\}$ is a near-minimal generating set of $\langle S \rangle$ with $\langle p \rangle = I_S$. Let the number of invariant factors of $\pi_2(S)$ be $r \geq 1$, and suppose $T' = \{q_0,q_1,\dots,q_{r-1}\}$. Then the following conditions are equivalent.
\begin{enumerate}[(i)]
    \item $T$ is not a minimal generating set of $\langle S \rangle$.
    \item There exist integers $\gamma_0, \gamma_1,\dots, \gamma_{r-1} \in \mathbb{Z}_d$, such that $T'' := \{p^{\gamma_0} q_0, p^{\gamma_1} q_1, \dots, p^{\gamma_{r-1}} q_{r-1}\}$ is a minimal generating set of $\langle S \rangle$.
    \item There exist integers $\gamma_0, \gamma_1,\dots, \gamma_{r-1} \in \mathbb{Z}_d$, such that $p \in \langle T'' \rangle$, with $T''$ defined in part (ii).
\end{enumerate}
\end{theorem}

\begin{proof}
Denote by $M$ the submodule of $\mathbb{Z}_d^{2n}$ generated by the columns of $\pi_2(S)$. We first prove that (i) implies (ii). Suppose that $U := \{u_0,u_1,\dots,u_{r-1}\}$ is a minimal generating set of $\langle S \rangle$. By Lemma~\ref{lem:min-gen-set-first-bound}(ii), (v), we know that the columns of both $\pi_2(T')$ and $\pi_2(U)$ generate $M$. Thus by Lemma~\ref{lem:span-same} we can conclude that there exists an invertible matrix $C \in \mathbb{Z}_d^{r \times r}$ such that $\pi_2(T') = \pi_2(U) C$. Now define a new set $T'' := \{t_0,t_1,\dots,t_{r-1}\}$ such that $t_i = \prod_{j=0}^{r-1} u_j^{C_{ji}}$, for every $i$. Lemma~\ref{thm:equiv-gen} then implies that $\langle T'' \rangle = \langle U \rangle = \langle S \rangle$. It also follows from the equality $\pi_2(T') = \pi_2(U) C$ that $\pi_2(t_i) = \pi_2(q_i)$ for every $i$. Thus each $t_i$ is equivalent to $q_i$ up to some phase factor, and since $\langle p \rangle = I_S$, we immediately conclude that $t_i = p^{\gamma_i} q_i$ for some $\gamma_i \in \mathbb{Z}_d$, for every $i$.

Now assume (ii) is true. As $\langle T'' \rangle = \langle S \rangle$, this implies that $p \in \langle T'' \rangle$, proving (iii).

Finally assume that (iii) is true, and we want to prove (i). Clearly  the definition of $T''$ in (iii) implies that $\langle T'' \rangle \subseteq \langle S \rangle$, because $p, q_i \in T$, and hence $p^{\gamma_i} q_i \in \langle S \rangle$ for every $i$. To prove the reverse containment, note that $p \in \langle T'' \rangle$ implies that $q_i \in \langle T'' \rangle$ for every $i$ (since $p_i^{\gamma_i}q_i \in T''$ by definition). This shows that $T \subseteq \langle T'' \rangle$ and thus  $\langle S \rangle \subseteq \langle T'' \rangle$. Thus $T''$ is a minimal generating set of $\langle S \rangle$ (since it has size $r$), and this proves (i).
\end{proof}
\begin{algorithm}
\caption{Given a near-minimal generating set $T = T' \cup \{p\}$, find a minimal generating set}
\begin{algorithmic}[1]
    \Procedure{Find\_Minimal\_Generating\_Set}{$T':=\{q_0,q_1,\dots,q_{r-1}\}, p, d$}
        \For{$(\gamma_0,\gamma_1,\dots, \gamma_{r-1}) \in \mathbb{Z}_d^r$}
            \State $T'' \leftarrow \{p^{\gamma_0} q_0, p^{\gamma_1} q_1, \dots, p^{\gamma_{r-1}} q_{r-1}\}$
            \State $\tilde{p} \leftarrow \textsc{Identity\_Generator} (T'',d)$
            \If{$p \in \langle \tilde{p} \rangle$}
            \Comment{Check membership of $p$ in $\langle \tilde{p} \rangle$}
                \State \Return $T''$
            \EndIf
        \EndFor
        \State \Return $T' \cup \{p\}$
    \EndProcedure
\end{algorithmic}
\label{alg:check-min-gen-set}
\end{algorithm}
Condition (iii) of Theorem~\ref{thm:min-gen-set-simple-form} can be used to obtain a simple (but inefficient) algorithm to test whether a near-minimal generating set of $\langle S \rangle$ is also a minimal generating set or not, and then output a minimal generating set of $\langle S \rangle$. This is given in Algorithm~\ref{alg:check-min-gen-set}. We assume that $\pi_2(S)$ has $r \geq 1$ invariant factors, so that a computed near-minimal generating set (using Lemma~\ref{lem:min-gen-set-first-bound}) of $\langle S \rangle$ has size $r+1$. Suppose here that $r < |S|$, so that we are in the setting of Lemma~\ref{lem:min-gen-set-first-bound}(v). Let $T = T' \cup \{p\}$ be one such near-minimal generating set with $\langle p \rangle = I_S$, and let $T' = \{q_0,q_1,\dots,q_{r-1}\}$. The algorithm proceeds by looping over each $r$-tuple $(\gamma_0,\gamma_1,\dots, \gamma_{r-1}) \in \mathbb{Z}_d^r$, and then constructing the set $T'' \subseteq \langle S \rangle$ of size $r$. If $T''$ is a minimal generating set of $\langle S \rangle$, then by Theorem~\ref{thm:min-gen-set-simple-form} we must have $p \in I_{T''}$. To check this condition, in Lines~3-5 we use Algorithm~\ref{alg:generator-IS} to compute $\tilde{p} := \omega^\beta I$ with the property that $\beta \in \mathbb{Z}_d$ is the smallest possible integer satisfying $\langle \tilde{p} \rangle = I_{T''}$. Then if $p = \omega^\delta I$ for some $\delta \in \mathbb{Z}_d$, it follows that $p \in I_{T''}$ if and only if $\beta$ divides $\delta$. If this condition check succeeds for any $(\gamma_0,\gamma_1,\dots, \gamma_{r-1}) \in \mathbb{Z}_d^r$, then we have found a minimal generating set $T''$ of $\langle S \rangle$ of size $r$, and otherwise we conclude that $T = T' \cup \{p\}$ is a minimal generating set of $\langle S \rangle$ of size $r+1$.

Another thing to note about Algorithm~\ref{alg:check-min-gen-set} is that the computational work in Line~4 can be significantly reduced by precomputing and storing a few quantities. Note that due to Lemma~\ref{lem:helper-lem-phase-mult}(iv), the results of Lines~2-17 (of Algorithm~\ref{alg:generator-IS}) in the execution of {\small{IDENTITY\_GENERATOR}} do not change irrespective of $T''$ (or equivalently irrespective of the choice of the $r$-tuple $(\gamma_0,\gamma_1,\dots, \gamma_{r-1}) \in \mathbb{Z}_d^r$ in Line~2 of Algorithm~\ref{alg:check-min-gen-set}) --- thus one can compute and store $\overline{K}$ and the $\mu$ value (let us call this $\mu_0$) up to this point (Line~17), with the choice $\gamma_0 = \gamma_1 = \dots = \gamma_{r-1}=0$ (i.e. $T'' = T'$). Subsequently, every time {\small{IDENTITY\_GENERATOR}} gets called in Line~4 of Algorithm~\ref{alg:check-min-gen-set}, we can initialize Algorithm~\ref{alg:generator-IS} at Line~19, using the precomputed $\overline{K}$ and setting $\mu = \mu_0$. In fact, if $\mu_0 = 1$, then there is nothing to compute -- we know that {\small{IDENTITY\_GENERATOR}} will return $\tilde{p} = \omega I$, which also means that the membership check in Line~5 of Algorithm~\ref{alg:check-min-gen-set} will succeed. For the case $\mu_0 \neq 1$, there are also ways to speed up the execution of Lines~19-26 of Algorithm~\ref{alg:generator-IS}. The main bottleneck is Line~20 which has a complexity of $O(rn)$ operations. But this can be reduced to $O(r)$ using another precomputation step: note that we need to compute $q \leftarrow \prod_{\ell=0}^{r-1} (p^{\gamma_{\ell}}q_{\ell})^{\overline{K}_{\ell j}}$ in Line~20, which can be simplified as $\prod_{\ell=0}^{r-1} (p^{\gamma_{\ell}}q_{\ell})^{\overline{K}_{\ell j}} = \left( \prod_{\ell=0}^{r-1} q_{\ell}^{\overline{K}_{\ell j}} \right) p^{\sum_{\ell=0}^{r-1} \gamma_{\ell} \overline{K}_{\ell j}}$, and thus the quantity $\left( \prod_{\ell=0}^{r-1} q_{\ell}^{\overline{K}_{\ell j}} \right)$ can be precomputed for each $j=0,1,\dots,s-1$ (here $s$ is the number of columns of $\overline{K}$). Putting everything together, and ignoring the precomputation step, we see that Algorithm~\ref{alg:check-min-gen-set} has a run time complexity of $O(sd^r (r + M(d)\log\log(d)))$, where $M(d)$ is the cost of integer multiplication of non-negative integers less than $d$. The computational complexity of the precomputation step is upper bounded by the complexity of Algorithm~\ref{alg:generator-IS}. The exponential factor of $d^r$ in the complexity of Algorithm~\ref{alg:check-min-gen-set} is undesirable, and coming up with a more efficient algorithm is left for future work.

We finish this subsection by stating a special case of Theorem~\ref{thm:min-gen-set-simple-form} below, when $d$ is prime. In this case, the theorem simplifies.

\begin{lemma}
\label{lem:d-prime-min-gen-sets}
Given $S \subseteq \hwgrp$, suppose that $T := T' \cup \{p\}$ is a near-minimal generating set of $\langle S \rangle$ with $\langle p \rangle = I_S$. Let $d$ be prime. Then the following conditions are equivalent.
\begin{enumerate}[(i)]
    \item $T$ is a minimal generating set of $\langle S \rangle$.
    \item $\langle T' \rangle$ is a stabilizer subgroup of $\hwgrp$, and $I_S \neq \{I\}$.
\end{enumerate}
\end{lemma}
\begin{proof}
\edit{See Appendix~\ref{app:proofs_sec6}.}
\end{proof}

\subsection{A Gram-Schmidt generating set of a subgroup}
\label{ssec:canonical-form-gen-set}
The symplectic Gram-Schmidt procedure (described in \cite{wilde2009logical} for qubits and easily extended to prime $d$) takes a generating set $S$ for a Pauli subgroup $G=\langle S\rangle$ and returns another generating set $S'=S_1\cup S_2\cup U$, where $S_1$, $S_2$ is a collection of non-commuting pairs (in the qubit case, anti-commuting pairs), $U$ is a subset of the center $\mathcal{Z}(G)$ of $G$,\footnote{Recall that, for a group $G$, its center $\mathcal{Z}(G)$ is the subgroup of elements in $G$ that commute with everything in $G$.} and $G=\langle S'\rangle$. Note, in particular, that for such a generating set $S_1$, $S_2$, and $U$ must be disjoint. Here we describe a procedure achieving the same result, a Gram-Schmidt generating set, for qudit Pauli groups.

\begin{lemma}[\edit{Gram-Schmidt generators}]\label{lem:gs_generators}
Let $S=\{s_0,s_1,\dots,s_{k-1}\}\subseteq\hwgrp$ and let $A_{ij}=\llbracket s_i,s_j\rrbracket_d$ define a matrix $A\in\mathbb{Z}_d^{k\times k}$. The group $G:=\langle S\rangle$ has a Gram-Schmidt generating set $S_1\cup S_2\cup U$ where $|S_1|=|S_2|=\Theta(M_A)/2$, $M_A$ is the submodule of $\mathbb{Z}_d^k$ generated by the columns of $A$, and moreover there does not exist a Gram-Schmidt generating set with smaller $S_1$ or $S_2$.
\end{lemma}

\begin{proof}
Constructing a Gram-Schmidt generating set makes use of the alternating Smith normal form from Lemma~\ref{lem:zdasnf}. Denote by $Q \in \mathbb{Z}_d^{k \times 2n}$ the matrix whose rows correspond to each Pauli in $S$, ignoring the phase factors. Then $A = Q \Lambda Q^T \in \mathbb{Z}_d^{k \times k}$. Now Lemma~\ref{lem:zdasnf} implies the existence of $L, B \in \mathbb{Z}_d^{k \times k}$ such that $A = LBL^T$, where $L$ is invertible and $B$ is of the form specified by the lemma, whose top-left block has the form $\bigoplus_{i=1}^r\left(\begin{smallmatrix}0&\beta_i\\-\beta_i&0\end{smallmatrix}\right)$ for non-zero $\beta_1,\beta_2,\dots,\beta_r$, and all other elements are zero. Here $r=\Theta(M_A)/2$. Let $S'=\{s'_0,\dots,s'_{k-1}\}$, where $s'_i := \prod_{j=0}^{k-1} s_j^{L^{-1}_{ij}}$ with ordering of the product irrelevant to our result. Because the commutation relations of $S$ are given by $A=Q\Lambda Q^T$, the commutation relations of $S'$ are given by $(L^{-1}Q)\Lambda(L^{-1}Q)^T=B$. Therefore, in particular it is a Gram-Schmidt generating set $S' = S_1 \cup S_2 \cup U$ with $|S_1| = |S_2| = r$. Moreover, the invertibility of $L$ gives $\langle S \rangle = \langle S' \rangle$ via Theorem~\ref{thm:equiv-gen}. Thus, this procedure gives the required Gram-Schmidt generating set.

Now we prove that there is no Gram-Schmidt generating set containing fewer non-commuting pairs. Suppose for contradiction, there exists sets $T, H \subseteq G$ such that  $\langle T, H \rangle = G$, $|T| < 2r$, and $H\subseteq\mathcal{Z}(G)$. Construct a multiset $E = T \cup J \cup H$, where $J$ is a multiset of $2r-|T|$ phaseless identity Paulis. Let $P, P' \in \mathbb{Z}_d^{(2r+|H|+|U|) \times 2n}$ be matrices such that the rows of $P$ (resp.~$P'$) represent the Paulis in the multiset $E\cup U$ (resp.~$S'\cup H$), modulo the phase factors. We first note that since $\langle E, U \rangle = \langle S', H \rangle = G$, this implies that the submodules generated by the columns of $P^T$ and $P'^T$ are equal. Thus, by Lemma~\ref{lem:span-same} there exists an invertible matrix $V$ such that $P = VP'$. The number of non-zero entries in the ASNF of $B'=\left(\begin{smallmatrix}B&0\\0&0\end{smallmatrix}\right)=P'\Lambda P'^T$ is $2r$, and by the uniqueness part of the ASNF in Lemma~\ref{lem:zdasnf}, this should equal the number of non-zero entries in the ASNF of $P\Lambda P^T=VB'V^T$. However, by the assumptions on the sets $T$ and $H$,
\begin{equation}
C = P\Lambda P^T = \left(\begin{matrix} D & 0\\0 & 0\end{matrix}\right),
\end{equation}
where $D \in \mathbb{Z}_d^{|T| \times |T|}$, which in turn implies that $C$ cannot have more than $|T|<2r$ non-zero entries in its ASNF, obtaining our contradiction.
\end{proof}

With Lemma~\ref{lem:gs_generators}, we can make a Gram-Schmidt generating set $S_1\cup S_2\cup U$ with minimally sized $S_1$ and $S_2$. However, there is no such minimality guarantee on $U$. If $d$ is prime, the only elements of $\mathcal{Z}(G)$ that can be generated by $\langle S_1,S_2\rangle$ are exactly those in $K=\{\omega^jI:j\in\mathbb{Z}_d\}$. If $d$ is not prime, it is possible for non-trivial elements of $\mathcal{Z}(G)$ to be generated by $\langle S_1,S_2\rangle$, including elements of $U$. The ways this can happen is limited however.

\begin{lemma}
\label{lem:center_of_noncomm_pairs}
Suppose $S=\{s_0,\dots,s_{k-1}\},T=\{t_0,\dots t_{k-1}\}$ is a collection of non-commuting pairs, $f_i=\llbracket s_i,t_i\rrbracket$ for every $i$, and let $H=\langle S,T\rangle$. Then 
\begin{equation}
\mathcal{Z}(H)=\langle \omega^{f_i}I,s_i^{a_i},t_i^{a_i} : i=0,\dots,k-1,a_i\in\mathbb{Z}\text{\space s.t.~}a_if_i \equiv 0\pmod{d} \rangle.
\end{equation}
\end{lemma}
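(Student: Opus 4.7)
The plan is to establish both containments of the claimed equality. For the easy direction, I check directly that every listed generator lies in $\mathcal{Z}(H)$: $\omega^{f_i}I$ is a scalar and hence trivially central, and for $s_i^{a_i}$ with $a_if_i \equiv 0 \pmod d$ one computes $\llbracket s_i^{a_i},s_j\rrbracket_d = a_i\llbracket s_i,s_j\rrbracket_d = 0$, $\llbracket s_i^{a_i},t_j\rrbracket_d = 0$ for $j \neq i$, and $\llbracket s_i^{a_i},t_i\rrbracket_d = a_if_i \equiv 0 \pmod d$; the symmetric computation covers $t_i^{a_i}$. Observe that the choice $a_i = d$ is permitted (since $df_i \equiv 0 \pmod d$), so the $\pm I$ elements $s_i^d$ and $t_i^d$ from Lemma~\ref{lem:max-degree} are already built into the generating set on the right.

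For the reverse direction, take $q \in \mathcal{Z}(H)$. Because $S\cup T$ is a collection of non-commuting pairs, the only non-commuting generator pairs are $(s_i,t_i)$ sharing a common index, with $t_is_i = \omega^{-f_i}s_it_i$; all other generator pairs commute. Starting from any word in the $s_i,t_i$ and their inverses representing $q$, I first eliminate inverses using $s_i^{-1} = \pm s_i^{d-1}$ and $t_i^{-1} = \pm t_i^{d-1}$ (Lemma~\ref{lem:max-degree}), then rearrange via the commutation relations to a normal form
\begin{equation*}
q \;=\; \omega^{\phi}\,\prod_{i=0}^{k-1} s_i^{b_i}t_i^{c_i}
\end{equation*}
for some integers $b_i,c_i \geq 0$. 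Each swap $t_is_i \leftrightarrow s_it_i$ performed along the way contributes a factor of $\omega^{\pm f_i}I$, and each inverse elimination contributes a factor of $s_i^d$ or $t_i^d$; the accumulated scalar $\omega^\phi I$ is thus a product of elements already shown to lie in the candidate group $G$ on the right-hand side.

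Centrality of $q$ now pins down the exponents. Using the non-commuting pair relations, $\llbracket q,s_j\rrbracket_d = -c_jf_j$ and $\llbracket q,t_j\rrbracket_d = b_jf_j$; setting both to zero gives $b_jf_j \equiv c_jf_j \equiv 0 \pmod d$ for every $j$. Therefore each $s_i^{b_i}$ and $t_i^{c_i}$ is itself one of the listed generators of $G$, so $\prod_i s_i^{b_i}t_i^{c_i} \in G$. Combined with $\omega^{\phi}I \in G$ from the preceding paragraph, this yields $q \in G$ and completes the containment $\mathcal{Z}(H) \subseteq G$.

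The subtlety I expect to be the main obstacle is the phase accounting during the normal-form reduction, particularly when $d$ is even: a $d$-th power $s_j^d$ may equal $-I = \omega^{d/2}I$, and a priori $\omega^{d/2}I$ need not lie in $\langle \omega^{f_i}I\rangle$. This is precisely why the lemma allows $a_i$ to range over all integers with $a_if_i \equiv 0 \pmod d$ rather than only multiples of $d/\gcd(f_i,d)$: including $a_i = d$ puts $s_j^d$ and $t_j^d$ themselves among the listed generators, so every accumulated phase is captured by $G$ by construction.
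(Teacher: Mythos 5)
Your proof is correct and follows essentially the same approach as the paper's: the easy containment by checking each generator is central, and the reverse containment by normalizing a central element to $\omega^\phi\prod_i s_i^{b_i}t_i^{c_i}$, using commutators with $t_j$ and $s_j$ to force $b_jf_j\equiv c_jf_j\equiv 0\pmod d$, and noting that the accumulated phase lies in the candidate group. Your explicit phase accounting (tracking $\omega^{\pm f_i}I$ from swaps and $s_i^d,\,t_i^d$ from inverse elimination, with the observation that $a_i=d$ is always permitted) makes visible what the paper compresses into the one-line remark $J_H\subseteq A$.
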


\begin{proof}
Let $A=\langle \omega^{f_i}I,s_i^{a_i},t_i^{a_i} : i=0,\dots,k-1,a_i\in\mathbb{Z}\text{\space s.t.~}a_if_i \equiv 0 \pmod{d} \rangle$ and note that $J_H\subseteq A$. We see that $\omega^{f_i}I=[s_i,t_i]$, $s_i^{a_i}$ and $t_i^{a_i}$ are all in $\mathcal{Z}(H)$ because they are products of generators of $H$ and they commute with all generators of $H$. Thus, $A\subseteq\mathcal{Z}(H)$.

To show containment in the other direction, let $p\in\mathcal{Z}(H)$. Therefore, $p$ can be written in terms of the generating set of $H$ as
\begin{equation}
p=\omega^c\prod_{i=0}^{k-1}s_i^{a_i}t_i^{b_i}
\end{equation}
for some $a_i,b_i\in\mathbb{Z}_d$ and $\omega^cI\in J_H\subseteq A$. Evaluate the commutator: $\llbracket p,t_j\rrbracket=\llbracket s_j^{a_j},t_j\rrbracket=a_j\llbracket s_j,t_j\rrbracket=a_jf_j$. Since $p\in\mathcal{Z}(H)$ the value of this commutator modulo $d$ must be 0, implying $a_jf_j\equiv 0 \pmod{d}$. This argument was independent of $j$, and an analogous argument works for commutators with $s_j$, implying $b_jf_j\equiv0 \pmod{d}$. Therefore, $p\in A$, completing the proof.
\end{proof}

Let $U=\{u_0,\dots,u_{k-1}\}$. One can use the Howell normal form and the approach outlined in Section~\ref{sssec:howell-normal-forms} to check membership of $\pi_2(u_0)$ in $\mathcal{Z}(\langle S_1,S_2\rangle)$ and, if it is, check that the phase can be corrected by an element of $I_{\langle S_1,S_2\rangle}$. If that is also the case, then $u_0$ is redundant and can be removed from $U$. Otherwise, continue by checking whether $u_1$ is in $\mathcal{Z}(\langle S_1,S_2,u_0\rangle)$ and so forth.

\subsection{\texorpdfstring{Sizes of subgroups of $\hwgrp$}{}}
\label{ssec:subgroups-noncomm-pairs}

In this section, we investigate the sizes of subgroups of $\hwgrp$ given their generating sets. First, let's assume nothing about the generating set. Using the Smith normal form, we can prove the following.

\begin{lemma}[\edit{Pauli subgroup size}]
\label{lem:subgroup-size}
Given an ordered multiset $S := \{q_0,q_1,\dots,q_{k-1}\} \subseteq \hwgrp$, suppose that the invariant factors of $\pi_2(S)$ are $d_0,d_1,\dots,d_{r-1}$. Then $|\langle S \rangle| = |I_S| \; \prod_{j=0}^{r-1} |d_i \mathbb{Z}_d| = |I_S| \; \prod_{j=0}^{r-1} \left( d / \gcd(d_i,d) \right)$.
\end{lemma}

\begin{proof}
Firstly, note that since invariant factors are unique up to units in $\mathbb{Z}_d$, the sizes of the ideals $|d_i \mathbb{Z}_d|$ do not depend on this choice; so the expression for $|\langle S \rangle|$ is well-defined. Thus, let $\pi_2(S)$ have a Smith normal form $D \in \mathbb{Z}_d^{2n \times k}$, a diagonal matrix satisfying $\pi_2(S) P = QD$, where $P \in \mathbb{Z}_d^{k \times k}$ and $Q \in \mathbb{Z}_d^{2n \times 2n}$ are invertible, and the first $r$ (and only) non-zero diagonal entries of $D$ are $d_0,d_1,\dots,d_{r-1}$. Let the first $r$ columns of $Q$ be $Q_0,Q_1,\dots,Q_{r-1}$. Then the submodule of $\mathbb{Z}_d^{2n}$ generated by the columns of $\pi_2(S)$ equals that generated by the set $\{d_i Q_i : i=0,1,\dots,r-1\}$; call this submodule $M$. It is then clear that $|\langle S \rangle| = |I_S| \; |\langle S \rangle / I_S| = |I_S| \; |M|$. 

It follows from the structure theorem of finitely generated modules over a principal ideal ring \cite[Chapter~15]{brown1993matrices} that $|M| = \prod_{j=0}^{r-1} |d_j \mathbb{Z}_d|$. But it is possible to prove the same in an elementary fashion. Since $Q$ is invertible, any non-empty subset of the columns of $Q$ generate a free submodule. Thus $Q_0,Q_1,\dots,Q_{r-1}$ form a basis, and it follows that $|M| = \prod_{j=0}^{r-1} |\{x d_j Q_j : x \in \mathbb{Z}_d\}| = \prod_{j=0}^{r-1} |\{x d_j : x \in \mathbb{Z}_d\}| = \prod_{j=0}^{r-1} \left( d / \gcd(d_j,d) \right)$.
\end{proof}
\edit{In the special case that $S$ is a qudit stabilizer group, i.e.~an abelian subgroup of the $n$-qudit Pauli group, there is a codespace consisting of qudit states that are $+1$ eigenvectors of all the elements of $S$. This codespace has dimension $d^n/|\langle S\rangle|$ \cite{gheorghiu2014standard}, which we can efficiently calculate using Lemma~\ref{lem:subgroup-size}.}

Next, consider subgroups of $\hwgrp$ that are generated by non-commuting pairs of qudit Paulis. Such subgroups enjoy some surprising properties. For example, if $S = \{s_0,s_1,\dots,s_{k-1}\}$ and $T = \{t_0,t_1,\dots,t_{k-1}\}$ is a set of non-commuting pairs, then $S \cup T$ is an independent set of generators, in the sense that the group generated by removing any generator from $S \cup T$ leads to a proper subgroup of $\langle S, T \rangle$. To see this note that if $s_0 \in \langle S \setminus \{s_0\}, T \rangle$, then it must imply that $\llbracket s_0,t_0\rrbracket_d = 0$, which is a contradiction. We could have argued the same with any other generator of $S \cup T$.

A particularly interesting result is Theorem~\ref{thm:square-free-max-pairs}, where it is shown that a maximum collection of non-commuting pairs generate the entire qudit Pauli group $\overline{\mathcal{P}}_n$ when $d$ is square free. The following lemma proves useful.

\begin{lemma}
\label{lem:group_counting}
Let $p,q\in\overline{\mathcal{P}}_n$ and $G\subseteq\overline{\mathcal{P}}_n$. If every element in $G$ commutes with both $p$ and $q$ and $\llbracket p,q\rrbracket_d=c$ is a unit in $\mathbb{Z}_d$, then $|\langle p,q,G\rangle/K|=d^2|\langle G\rangle/I_G|$ and $|\langle p,q,G\rangle|=d^3|\langle G\rangle/I_G|$.
\end{lemma}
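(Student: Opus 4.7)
The plan is to let $L := \langle p, q \rangle \subseteq \hwgrp$ and analyze three things in sequence: the size of $L$ itself, the product $L \cdot \langle G \rangle$, and the intersection $L \cap \langle G \rangle$. First I would observe that the commutator $[p,q] = \omega^c I$ lies in $L$, and since $c \in \mathbb{Z}_d^{\ast}$ is a unit, its powers generate all of $K$, so $K \subseteq L$. Using the identity $qp = \omega^{-c} pq$ to push all $p$'s left and $q$'s right, every element of $L$ can be written as $\omega^j p^a q^b$ with $(j,a,b) \in \mathbb{Z}_d^3$, giving $|L/K| \leq d^2$. For the matching lower bound, I would check that $\omega^j p^a q^b = I$ forces $j = a = b = 0$: taking group commutators with $p$ and $q$ and using Eq.~\eqref{eq:commutator} yields $-bc \equiv 0$ and $ac \equiv 0 \pmod{d}$, so the unit hypothesis on $c$ gives $a = b = 0$, and then $j = 0$. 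Hence $|L| = d^3$ and $|L/K| = d^2$.

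Next, since every element of $G$ commutes with both $p$ and $q$, so does every element of $\langle G \rangle$. Thus $L$ and $\langle G \rangle$ centralize each other, which means $L \langle G \rangle$ is already closed under multiplication and equals $\langle p, q, G \rangle =: H$. The standard subgroup product formula then gives
\begin{equation*}
|H| \;=\; |L \langle G \rangle| \;=\; \frac{|L|\,|\langle G \rangle|}{|L \cap \langle G \rangle|}.
\end{equation*}

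The crux, and main obstacle, is identifying $L \cap \langle G \rangle$ with $I_G$. The containment $I_G = K \cap \langle G \rangle \subseteq L \cap \langle G \rangle$ is immediate since $K \subseteq L$. For the reverse, let $x \in L \cap \langle G \rangle$, and write $x = \omega^j p^a q^b$ from the analysis of $L$. Because $x \in \langle G \rangle$, it commutes with both $p$ and $q$; applying Eq.~\eqref{eq:commutator} exactly as in the preceding paragraph yields $ac \equiv bc \equiv 0 \pmod{d}$, and the unit assumption on $c$ forces $a = b = 0$. Hence $x \in K \cap \langle G \rangle = I_G$. This is precisely the step where the hypothesis that $c$ is a unit does the real work; without it, some nontrivial $p^a q^b$ could slip into $\langle G \rangle$ and inflate the intersection, breaking the count.

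Substituting $|L| = d^3$ and $|L \cap \langle G \rangle| = |I_G|$ into the product formula yields $|H| = d^3\,|\langle G \rangle / I_G|$, and dividing by $|K| = d$ (valid since $K \subseteq L \subseteq H$) gives $|H/K| = d^2\,|\langle G \rangle / I_G|$, as required.
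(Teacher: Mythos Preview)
Your proof is correct and takes a different organizational route from the paper. The paper adjoins $p$ and $q$ to $\langle G\rangle$ one generator at a time, arguing that $p^j\notin\langle q,G\rangle$ and $q^j\notin\langle G\rangle$ up to phase whenever $j\not\equiv 0\pmod d$, which yields the factorization $|\langle p,q,G\rangle/K|=|\langle p\rangle/I_{\{p\}}|\cdot|\langle q\rangle/I_{\{q\}}|\cdot|\langle G\rangle/I_G|$. You instead compute $|L|=|\langle p,q\rangle|=d^3$ in one shot and then invoke the subgroup product formula together with the identification $L\cap\langle G\rangle=I_G$. Both arguments rest on exactly the same use of the unit hypothesis on $c$ (forcing $a=b=0$ from $ac\equiv bc\equiv 0$); your packaging via the product formula is arguably cleaner and makes the role of the intersection explicit, while the paper's stepwise approach generalizes more directly to the non-unit case handled in Lemma~\ref{lem:group_counting_nonunit}.

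One minor remark: the step ``$\omega^j p^a q^b=I$ forces $j=a=b=0$'' literally shows that the identity has a unique such representation, whereas what you need for $|L|=d^3$ is that all $d^3$ expressions $\omega^j p^a q^b$ are pairwise distinct. This is not a genuine gap, since the same commutator-with-$p$-and-$q$ computation applied to both sides of $\omega^j p^a q^b=\omega^{j'}p^{a'}q^{b'}$ immediately gives $a=a'$, $b=b'$, and hence $j=j'$.
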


\begin{proof}
The second equality differs from the first in that it includes arbitrary phases, which one can create from products of $p$ and $q$: $\{(pqp^\dag q^\dag)^j:j=0,1,\dots,d-1\}=\{\omega^{jc}I:j=0,1,\dots,d-1\}=K$, as $c$ is a unit in $\mathbb{Z}_d$. 

To prove the first equality, note $|\langle p\rangle/ I_{\{p\}}|=|\langle q\rangle/ I_{\{q\}}|=d$. If $|\langle p\rangle/ I_{\{p\}}|$ were not $d$, then $p^a\in K$ for $0<a<d$ and thus $0=\llbracket p^a,q\rrbracket_d \equiv ac \pmod{d}$, contradicting the fact that $c$ is a unit. Next, we see that for any $j \not \equiv 0 \pmod d$, $p^j$ is not in $\langle q,G\rangle$ even up to a phase. If it were, then $p^j$ would commute with $q$, but $\llbracket p^j,q\rrbracket_d=jc \bmod d$, which is not zero because $c$ is a unit and $j \not \equiv 0 \pmod d$. Likewise, $q^j$ is not in $\langle G\rangle$ up to a phase for any $j \not \equiv 0 \pmod{d}$. Thus, $|\langle p,q,G\rangle/K|=|\langle p\rangle/I_{\{p\}}| \; |\langle q,G\rangle / I_{\{q\} \cup G}| = |\langle p\rangle/ I_{\{p\}}| \; |\langle q\rangle/I_{\{q\}}| \; |\langle G\rangle/I_G| = d^2|\langle G\rangle/I_G|$.
\end{proof}

\begin{theorem}[\edit{The square-free theorem}]
\label{thm:square-free-max-pairs}
Suppose $d=p_0p_1\dots p_{m-1}$ is square-free. If $S,T$ is a collection of $nm$ non-commuting pairs, then $\langle S,T\rangle=\overline{\mathcal{P}}_n$.
\end{theorem}
\begin{proof}
A collection of $nm$ non-commuting pairs is a maximum size such collection, characterized by Theorem~\ref{thm:nm-case}. This implies that, for the appropriate order of the set $S\cup T$, the commutator matrix $C$ is block-diagonal with $2\times2$ blocks: $C=\bigoplus_{j=0}^{m-1}C_j=\bigoplus_{j=0}^{m-1}\bigoplus_{i=0}^{n-1}C_{ij}$, $C_{ij}=\left(\begin{smallmatrix}0&f_{ij}\\-f_{ij}&0\end{smallmatrix}\right)$, and $\gcd(f_{ij},d)=\prod_{h\neq j}p_h$ for all $i,j$.

Put $C$ into alternating Smith normal form $C=LBL^T$ using Lemma~\ref{lem:zdasnf}. Due to the simple form of $C$, it is easy to evaluate $d_{2i}$, the greatest common divisors of all $2i\times 2i$ matrix minors of $C$:
\begin{equation}
d_{2i}\equiv\bigg\{\begin{array}{ll}1,&i\le n\\d^{2(i-n)},&i>n\end{array}.
\end{equation}
Therefore, for appropriate invertible $L$, we have $B=\bigoplus_{i=1}^n\left(\begin{smallmatrix}0&1\\-1&0\end{smallmatrix}\right)$. 

The Smith normal form implies the existence of a collection of non-commuting pairs, $S'=\{s'_0,\dots,s'_{n-1}\}$ and $T'=\{t'_0,\dots,t'_{n-1}\}$ where $\llbracket s'_i,t'_i\rrbracket_d=1$, and $G\subseteq\overline{\mathcal{P}}_n$ that commutes with every element of $S'$ and $T'$. Moreover $\langle S',T',G\rangle=\langle S,T\rangle$. More specifically, $S',T',$ and $G$ can be found from $S$ and $T$ by taking appropriate products specified by $L$.

By Lemma~\ref{lem:group_counting}, $|\langle S',T'\rangle|=d^{2n+1}$. This is the size of $\overline{\mathcal{P}}_n$. Therefore, 
$G=\{I\}$ and $\overline{\mathcal{P}}_n=\langle S',T'\rangle=\langle S,T\rangle$.
\end{proof}

A collection of non-commuting pairs $S,T$ that generates the full Heisenberg-Weyl Pauli group $\hwgrp$ is a convenient generating set $B=S\cup T=\{s_0,t_0,\dots,s_{k-1},t_{k-1}\}$ for the entire group. If we have $p\in\hwgrp$, it has some decomposition in terms of the generators in $B$ which can be rearranged to
\begin{equation}
p=\omega^c\prod_{i=0}^{k-1}s_i^{a_i}t_i^{b_i},
\end{equation}
where $\omega^cI\in J_B$ and $a_i,b_i\in\mathbb{Z}_d$. We can write $\omega^cI$ in terms of a product of the generators by using a modified form of Algorithm~\ref{alg:generator-IS} that records which product of generators gives $\omega I$. Also, using the fact that $S,T$ are non-commuting pairs, we have $\llbracket p,t_i\rrbracket=a_i\llbracket s_i,t_i\rrbracket$, for every $i=0,1,\dots,k-1$. In particular, $a_i=\llbracket p,t_i\rrbracket/\llbracket s_i,t_i\rrbracket\mod{d}$ must be an integer. Likewise, we have $b_i=\llbracket s_i,p\rrbracket/\llbracket s_i,t_i\rrbracket\mod{d}$, for every $i$. This makes the decomposition of $p$ into \edit{generators from} $B$ rather simple, just amounting to calculation of $\llbracket s_i,p\rrbracket$ and $\llbracket p,t_i\rrbracket$, and the inclusion of an overall phase.


\edit{To conclude this section, we note that for more general non-square-free $d$, we can get lower bounds on the group size generated by non-commuting pairs. We state this as a corollary of Lemma~\ref{lem:group_counting_nonunit}, which we state and prove in Appendix~\ref{app:proofs_sec6}.}

\begin{corollary}
\label{cor:group-size-noncomm-pairs-lb}
Suppose $S=\{s_0,\dots,s_{k-1}\},T=\{t_0,\dots t_{k-1}\}$ is a collection of non-commuting pairs. For every $i$, let $f_i=\llbracket s_i,t_i\rrbracket_d$, and denote the order of $f_i$ in $\mathbb{Z}_d$ as $a_i := d/\gcd(f_i,d)$. Then we have the lower bound $|\langle S, T \rangle / I_{S\cup T} | \geq \prod_{i=0}^{k-1} a_i^2$.
\end{corollary}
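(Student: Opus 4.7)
The plan is to apply Lemma~\ref{lem:group_counting_nonunit}(ii) iteratively, adding one non-commuting pair at a time. Specifically, for each $i \in \{0, 1, \ldots, k\}$ I would define the nested subgroup $G_i := \langle s_0, t_0, \ldots, s_{i-1}, t_{i-1}\rangle \subseteq \hwgrp$, so that $G_0 = \{I\}$ and $G_k = \langle S, T\rangle$. The key verification is that every element of $G_i$ commutes with both $s_i$ and $t_i$: this holds because every generator of $G_i$, namely each $s_j$ and $t_j$ with $j < i$, commutes with $s_i$ and $t_i$ by the definition of non-commuting pairs (Definition~\ref{def:noncomm-pairs}). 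Hence hypothesis (a) of Lemma~\ref{lem:group_counting_nonunit} is satisfied with $G = G_i$, $p = s_i$, $q = t_i$.

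Next I would identify the parameter $a_1$ from Lemma~\ref{lem:group_counting_nonunit}(c) with $a_i$ from the corollary statement. With the choice above, we have $c = \llbracket s_i, t_i\rrbracket_d = f_i$, and part (i) of that lemma says that $a_1$ equals the order of $f_i$ as an element of $\mathbb{Z}_d$. Since the order of $f_i$ in the additive group $\mathbb{Z}_d$ is precisely $d/\gcd(f_i, d) = a_i$, we get $a_1 = a_i$. The third inequality in Lemma~\ref{lem:group_counting_nonunit}(ii) then reads
\begin{equation}
\left\lvert G_{i+1} / I_{G_{i+1}} \right\rvert \geq a_i^2 \; \left\lvert G_i / I_{G_i} \right\rvert.
\end{equation}

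Finally, chaining this inequality from $i = 0$ up to $i = k-1$, and using $|G_0 / I_{G_0}| = 1$, I obtain
\begin{equation}
\left\lvert \langle S, T\rangle / I_{S \cup T} \right\rvert \;=\; \left\lvert G_k / I_{G_k} \right\rvert \;\geq\; \prod_{i=0}^{k-1} a_i^2,
\end{equation}
which is the claim. There is no real obstacle here beyond bookkeeping: the substance lies entirely in Lemma~\ref{lem:group_counting_nonunit}, and what remains is to check that the non-commuting-pair structure of $S \cup T$ supplies exactly the commutation hypothesis required at each stage of the induction.
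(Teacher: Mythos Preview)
Your proposal is correct and follows exactly the approach the paper has in mind: the paper simply states that the corollary is ``immediate from Lemma~\ref{lem:group_counting_nonunit},'' and you have spelled out the natural inductive application of that lemma, adding one non-commuting pair at a time and using part~(i) to identify the lemma's $a_1$ with the additive order $a_i = d/\gcd(f_i,d)$ of $f_i$.
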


\editt{\section{Discussion}

We have applied the theory of modules over commutative rings to study properties of the qudit Pauli group. One problem we have left open is the maximum size of a non-commuting set on $n>1$ qudits. Though we have provided some lower bounds on the size, we are also lacking a non-trivial upper bound. A second problem that we left open is the efficient construction of minimal generating sets of qudit Pauli subgroups in all cases. The missing step to get from a near-minimal generating set to a minimal one involves determining whether appropriate phases can be added to each generator as in Theorem~\ref{thm:min-gen-set-simple-form}.

Further work on qudit group theory might involve applying techniques similar to those we developed here to selecting an element of the qudit Clifford group uniformly at random. If this line of work parallels similar developments in the qubit case \cite{koenig2014efficiently,bravyi2021hadamard}, interesting structure of the qudit Clifford group could be discovered.}
\section*{Acknowledgments}
\label{sec:ack}

The authors would like to acknowledge Sergey Bravyi, Andrew Cross, Robert K\"{o}nig, Seth Merkel, and John Smolin for motivating discussions. We are also extremely grateful to Prof.~Brian Conrad for bringing to our attention the existence of the ASNF for bilinear alternating forms (Theorem~\ref{thm:lang-bilinear-canonical-form}), which was key to this paper, and for clarifying several aspects of commutative algebra that arose in this work.

\appendix
\appendixpage
\section{An Alternating Smith Normal Form}
\label{app:alternating-smith-normal-form}

\subsection{Bilinear alternating forms}
\label{ssec:alternating-forms}

For a commutative ring $R$ with multiplicative identity, consider a $R$-module $M$. Let $g: M \times M \rightarrow R$ be a \textit{bilinear form}. This means that for all $x,y,z \in M$ and $a \in R$, we have (i) $g(x+y,z) = g(x,z) + g(y,z)$, (ii) $g(x,y+z)=g(x,y)+g(x,z)$, and (iii) $g(a \cdot x,y) = g(x,a \cdot y) = a g(x,y)$. We say that $g$ is \textit{alternating} if $g(x,x)=0$ for all $x \in M$, which implies $g(x,y)+g(y,x)=0$ for all $x,y \in M$. A key fact about alternating bilinear forms is the following theorem:

\begin{theorem}[{\cite[Ch~ XV, Exercise~17]{lang2012algebra}}]
\label{thm:lang-bilinear-canonical-form}
Let $M$ be a finitely generated free module of rank $k$ over a commutative ring $R$ with multiplicative identity, and let $g: M \times M \rightarrow R$ be a bilinear alternating form. If $R$ is a PIR, then there exists a basis $\{e_i : i = 0,1,\dots,k-1\}$ of $M$, and a non-negative integer $r$ with $2r \leq k$, such that 
\begin{enumerate}[(i)]
    \item $g(e_{2i}, e_{2i+1}) = - g(e_{2i+1}, e_{2i}) = a_i \neq 0$, for $i=0,1,\dots,r-1$, where $a_0,a_1,\dots,a_{r-1} \in R$, while $g(e_i,e_j)=0$ for all other pairs of indices $i, j$,
    \item $a_0 \mid a_1 \mid \dots \mid a_{r-1}$.
\end{enumerate}
Moreover, the ideals $\{b a_i : b \in R\}$ are uniquely determined, for every $i=0,1,\dots,r-1$, irrespective of the choice of the basis.
\end{theorem}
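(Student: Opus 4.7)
My plan is to prove existence by induction on the rank $k$ of $M$, working with the antisymmetric, zero-diagonal Gram matrix $A \in R^{k\times k}$ of $g$ as the central object, and to derive uniqueness from the classical invariance of determinantal ideals under the congruence action of $GL_k(R)$.

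The base case $k \le 1$ is immediate, since an alternating form on a rank-$\le 1$ free module is identically zero by $g(e,e)=0$ and bilinearity: take $r=0$. For $k \ge 2$, fix any basis and let $A$ be its Gram matrix. If $A = 0$, take $r = 0$; otherwise the ideal $J$ generated by the entries of $A$ is nonzero and, since $R$ is a PIR, principal, say $J = (a_0)$. The heart of the argument is to produce $P \in GL_k(R)$ such that
\[
P^T A P \;=\; \begin{pmatrix} 0 & a_0 & 0 \\ -a_0 & 0 & 0 \\ 0 & 0 & A' \end{pmatrix},
\]
where $A'$ is antisymmetric of size $k-2$ and every entry of $A'$ lies in the ideal $(a_0)$. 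Given this reduction, applying the induction hypothesis to the rank-$(k-2)$ free submodule carrying $A'$ produces invariants $a_1 \mid a_2 \mid \cdots \mid a_{r-1}$, and since every entry of $A'$ is a multiple of $a_0$, so is $a_1$ (which generates the ideal of entries of $A'$), securing the full chain $a_0 \mid a_1 \mid \cdots$.

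To construct $P$ I would use three kinds of elementary congruences $A \mapsto Q^T A Q$, each verified by a quick entrywise computation to preserve antisymmetry and the zero diagonal: simultaneous swap of rows $i,j$ and columns $i,j$; simultaneous scaling of row $i$ and column $i$ by a unit of $R$; and the symmetric shear $Q = I + \lambda E_{j,i}$ for $i \ne j$, which simultaneously adds $\lambda$ times row $j$ to row $i$ and $\lambda$ times column $j$ to column $i$. Using the Bezout property of a PIR --- that for any $a, b \in R$ with $(a,b) = (d)$ there exists an element of $GL_2(R)$ carrying $(a,b)^T$ to $(d,0)^T$ --- these operations suffice to run a variant of the Smith algorithm: by symmetric shears and swaps, bring a generator of $(a_0)$ to position $(0,1)$; then, since this pivot divides every remaining entry, further symmetric shears pivoting on row $1$ (resp.\ column $0$) clear $A_{0j}, A_{j0}$ and $A_{1j}, A_{j1}$ for $j \ge 2$ without disturbing the pivot block. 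Because congruence cannot enlarge the ideal of entries, the resulting $A'$ still has all entries in $(a_0)$.

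For uniqueness, the ideal $d_j$ generated by all $j\times j$ minors of the Gram matrix is invariant under the congruence action of $GL_k(R)$ (since $\det(P) \in R^\times$ rescales minors only by units), hence is an invariant of $g$ itself. A direct computation on the normal-form matrix gives $d_{2i} = (a_0^2 a_1^2 \cdots a_{i-1}^2)$ and $d_{2i+1} = (a_0^2 \cdots a_{i-1}^2 a_i)$, so that the ideal quotient $d_{2i+1}/d_{2i}$ recovers $(a_i)$ unambiguously. The main obstacle I foresee is the symmetric Bezout manipulation in constructing $P$: a symmetric shear mixes one row with one column at once, so one cannot perform independent row-column Gaussian elimination, and over a general PIR with zero divisors one must carefully realize each Bezout combination as a finite sequence of symmetric shears and swaps acting only on a chosen $2 \times 2$ window of $A$, with no collateral changes elsewhere.
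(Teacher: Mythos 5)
Your inductive strategy --- take $a_0$ to be a generator of the ideal $J$ of all entries of the Gram matrix, clear the rest of the first two rows and columns, then recurse --- is a genuinely different route from the paper's. The paper treats this statement as a citation to Lang and separately proves the equivalent Lemma~\ref{lem:asnf} over $\mathbb{Z}$ by an iterative pivoting algorithm: it first reduces $A$ to a block form $\bigoplus\left(\begin{smallmatrix}0&\gamma_i\\-\gamma_i&0\end{smallmatrix}\right)$ with no divisibility control at all, and only afterwards sorts the $\gamma_i$ into a divisibility chain using $4\times 4$ $\gcd$/$\lcm$ manipulations on pairs of blocks. Your gcd-pivot buys the divisibility chain for free, which is appealing. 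But the step you yourself flag --- placing a generator of $J$ at position $(0,1)$ by symmetric congruences alone --- is a real gap, not a detail: the iterated Bezout shears in the symmetric setting can undo one another, and over a general PIR the termination of that process requires the Noetherian (ascending chain) property, as the paper explicitly invokes when promoting Lemma~\ref{lem:asnf} to Theorem~\ref{thm:asnf}. Over $\mathbb{Z}$ one can argue with $|\cdot|$; over a general PIR this must be said carefully.

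The uniqueness argument has an unacknowledged error: the determinantal ideals $d_j$ do not determine the invariant-factor ideals over a PIR with zero divisors, because the ``quotient'' $d_{2i+1}/d_{2i}$ collapses once $d_{2i}=(0)$, which can happen well before index $2r$. For a concrete counterexample take $R=\mathbb{Z}_{16}$ and the two normal forms
\begin{equation*}
A_1=\left(\begin{smallmatrix}0&4&0&0\\-4&0&0&0\\0&0&0&4\\0&0&-4&0\end{smallmatrix}\right),\qquad
A_2=\left(\begin{smallmatrix}0&4&0&0\\-4&0&0&0\\0&0&0&8\\0&0&-8&0\end{smallmatrix}\right).
\end{equation*}
Both satisfy the divisibility chain ($4\mid 4$ and $4\mid 8$), and both have $d_1=(4)$ and $d_2=d_3=d_4=(0)$ in $\mathbb{Z}_{16}$. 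Yet $(a_1)=(4)\neq(8)=(a_1')$, and the two forms are not congruent: the radical $\{x:A_ix=0\}$ has $256$ elements for $A_1$ and $1024$ for $A_2$. So the determinantal ideals fail to recover $(a_i)$, and your claimed proof of uniqueness does not go through for a general PIR. The paper avoids this by evaluating every minor, $\gcd$, and division over $\mathbb{Z}$ (a PID, where your argument is sound) and only reducing modulo $d$ at the last step; see the remark after Lemma~\ref{lem:zdasnf}. For an arbitrary PIR one must reach for something else, e.g.\ the Zariski--Samuel decomposition of a PIR into a finite product of PIDs and special principal ideal rings, applying the determinantal argument factor by factor, or the module-theoretic uniqueness developed in \cite[Chapter~15]{brown1993matrices}.
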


Now suppose that $R$ is a principal ideal ring, and $A \in R^{k \times k}$ is an alternating matrix over $R$ (see beginning of Section~\ref{ssec:comm-matrix-relations} for definition of an alternating matrix). Then the module $R^k$ is a finitely generated free $R$-module of rank $k$, as mentioned previously. The matrix $A$ defines a bilinear alternating form $g: R^k \times R^k \rightarrow R$ as follows:
\begin{equation}
    g(x,y) := x^T A y, \;\; \text{for all } x,y \in R^k.
\end{equation}
We can then apply Theorem~\ref{thm:lang-bilinear-canonical-form} to this setting to extract a basis $\{e_i : i = 0,1,\dots,k-1\}$ of $R^k$, a non-negative integer $r \leq \lfloor k/2 \rfloor$, and $a_0,a_1,\dots,a_{r-1} \in R$ as in the theorem, such that $B \in R^{k \times k}$ defined by the equation $L B L^T = A$ is an alternating matrix, where the columns of $(L^{-1})^T \in R^{k \times k}$ are $e_0,e_1,\dots,e_{k-1}$, and 
\begin{equation}
    B_{jj'} = 
    \begin{cases}
        a_{j/2} \;\; & \text{if } (j,j') \in \{(2i, 2i+1): i=0,1,\dots,r-1\}, \\
        -a_{j'/2} \;\; & \text{if } (j',j) \in \{(2i, 2i+1): i=0,1,\dots,r-1\}, \\
        0 \;\; & \text{otherwise}.
    \end{cases}
\end{equation}
Furthermore, by applying \cite[Corollary~5.16]{brown1993matrices} we can conclude that $L$ is an invertible matrix over $R$, and then by Lemma~\ref{lem:min-genarators-modules} and Lemma~\ref{lem:min-number-gens-special-matrix} we also know that $\Theta(M_C) = \Theta(M_A) = 2r$, where $M_A$ and $M_C$ are the submodules of $R^k$ generated by the columns of $A$ and $C$ respectively. This decomposition $A = L B L^T$ is known as the \textit{alternating Smith Normal form} (ASNF), for example see \cite[Theorem~18]{kuperberg2002kasteleyn} where it is done assuming $R$ is a PID. In this paper, we are interested in the specific cases when $R$ is either $\mathbb{Z}$ or $\mathbb{Z}_d$. In these cases, one even has an explicit formula for the non-zero entries of $B$ in the ASNF, in terms of the minors of the matrix $C$. In the next subsection, Lemma~\ref{lem:asnf} summarizes all these facts when $R = \mathbb{Z}$, and provides a complete proof that has the added advantage of specifying an explicit algorithm to compute the ASNF. We also discuss how Lemma~\ref{lem:asnf} easily generalizes to the case $R = \mathbb{Z}_d$, leading to Lemma~\ref{lem:zdasnf}, which we already encountered before.

\subsection{An algorithm for the ASNF}

For the following lemma, for $a,b \in \mathbb{Z}$ we define the greatest common divisor of $a$ and $b$ to be the largest positive integer that divides both $a$ and $b$ over $\mathbb{Z}$, and denote it $\gcd (a,b)$. We also define $\gcd (0,0)=0$, and $\gcd(a)=a$ for any $a \in \mathbb{Z}$.


\begin{lemma}
\label{lem:asnf}
Suppose $A\in\mathbb{Z}^{k\times k}$ is an alternating matrix. Then, there are matrices $L,B\in\mathbb{Z}^{k\times k}$, where $B$ is alternating and has at most one non-zero entry per row and column, and $L$ is invertible, such that $A=LBL^T$. Moreover, we may further arrange $B$ so that it is non-zero only in the top-left $2r \times 2r$ block which has the form $\bigoplus_{i=1}^r\left(\begin{smallmatrix}0&\beta_i\\-\beta_i&0\end{smallmatrix}\right)$ for integer $r=\Theta(M_A)/2$, where $M_A$ is the $\mathbb{Z}$-submodule generated by the columns of $A$, and each $\beta_i \in \mathbb{Z}$ non-zero, satisfying $\beta_i\mid\beta_{i+1}$ for all $i < r$. Moreover, for all $i=1,2,\dots,r$, we have $|\beta_i|=d_{2i}/d_{2i-1}=d_{2i-1}/d_{2i-2}$, $\beta_i^2=d_{2i}/d_{2i-2}$, and for all $i > 2r$ we have $d_{i}= 0$, where $d_j$ is the greatest common divisor of all $j\times j$ minors of $A$ (with $d_0:=1$), which implies that the $\beta_i$ are unique up to choice of $\pm$ sign.
\end{lemma}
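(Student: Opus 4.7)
The plan is to establish the decomposition $A = L B L^T$ by a constructive algorithm that uses only \emph{congruence} operations $A \mapsto P A P^T$ with $P \in \mathrm{GL}_k(\mathbb{Z})$, so that the alternating property is preserved at every step, and to accumulate $L$ as the product of the elementary congruences performed. I will then read off the invariants $\beta_i$ from minors by exploiting the invariance of $d_j$ under congruence.

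The main reduction step works as follows. Assume $A \neq 0$ and let $\beta$ be a nonzero entry of smallest absolute value; after a symmetric permutation we may take $A_{12} = \beta$ and $A_{21} = -\beta$. For $k \geq 3$, the elementary congruence $A \mapsto P_k A P_k^T$ with $P_k := I - q E_{k2}$ sends $A_{1k} \mapsto A_{1k} - q\beta$, leaves $A_{2k}$ fixed, and keeps the matrix alternating (one checks $(k,k) \mapsto -q(A_{2k}+A_{k2}) + q^2 A_{22} = 0$ and $(k,1) \mapsto -(A_{1k}-q\beta)$). Choosing $q$ so that $A_{1k} = q\beta + r$ with $0 \le r < |\beta|$, the minimality of $|\beta|$ forces $r = 0$, so $A_{1k}$ is zeroed out; similarly $I - q' E_{k1}$ zeros $A_{2k}$. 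If some $A_{ij}$ with $i,j \ge 3$ is not divisible by $\beta$, the congruence $I + E_{1j}$ (adding row $j$ to row $1$ and column $j$ to column $1$) injects a non-$\beta$-divisible entry into row $1$, and a further Euclidean step on it produces a nonzero remainder of absolute value less than $|\beta|$ --- contradicting minimality by well-ordering. Hence at termination $\beta$ divides every entry of $A$, and we arrive at $A = \bigl(\begin{smallmatrix}\beta_1 J_2 & 0 \\ 0 & A'\end{smallmatrix}\bigr)$ with $J_2 = \bigl(\begin{smallmatrix}0 & 1\\ -1 & 0\end{smallmatrix}\bigr)$, where $A'$ is alternating of size $(k-2)\times(k-2)$ and $\beta_1$ divides every entry of $A'$. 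Recursing on $A'$ yields $\beta_2,\ldots,\beta_r$ with the divisibility chain $\beta_1 \mid \beta_2 \mid \cdots \mid \beta_r$ automatic from the divisibility of $\beta_1$ into $A'$.

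For the invariants, the key observation is that the gcd $d_j$ of all $j\times j$ minors is unchanged under $A \mapsto P A P^T$ for $P \in \mathrm{GL}_k(\mathbb{Z})$: by the Cauchy--Binet formula, every $j\times j$ minor of $P A P^T$ is an integer combination of the $j \times j$ minors of $A$, and vice versa. So it suffices to compute $d_j(B)$ directly from the block form. A $j \times j$ submatrix of $B$ has nonzero determinant only if, within each $2 \times 2$ block $s$, the number $n_s$ of selected rows equals the number of selected columns and lies in $\{0,1,2\}$, and each single-row contribution selects an off-diagonal entry. A short combinatorial accounting, using $\beta_1 \mid \beta_2 \mid \cdots$, then gives
\begin{equation*}
d_{2i}(B) = \beta_1^2 \beta_2^2 \cdots \beta_i^2, \qquad d_{2i-1}(B) = \beta_i \cdot \beta_1^2 \beta_2^2 \cdots \beta_{i-1}^2
\end{equation*}
for $1 \le i \le r$, and $d_j(B) = 0$ for $j > 2r$ (every such submatrix must involve the zero part of $B$). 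These formulas translate directly into $|\beta_i| = d_{2i}/d_{2i-1} = d_{2i-1}/d_{2i-2}$ and $\beta_i^2 = d_{2i}/d_{2i-2}$, and yield uniqueness up to sign. Finally, the equality $r = \Theta(M_A)/2$ follows by applying Lemma~\ref{lem:min-genarators-modules} to $B = L^{-1} A (L^{-1})^T$ together with Lemma~\ref{lem:min-number-gens-special-matrix} (each row and column of $B$ has at most one nonzero entry, and the $\beta_i$-chain supplies the divisibility hypothesis), so $\Theta(M_B) = 2r = \Theta(M_A)$.

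The main obstacle I anticipate is the bookkeeping in the reduction step: standard one-sided Smith normal form proofs freely perform row operations without matching column operations, but here alternation forces every elementary row operation to be coupled with a symmetric column operation, so the freedom in choosing pivots is reduced. The subtle point is propagating divisibility of $\beta_1$ across the \emph{entire} remaining block $A'$ rather than just the first row and column, which is handled by the additional congruence $I + E_{1j}$ that moves a problematic entry into row $1$ where it can be Euclidean-reduced against $\beta$.
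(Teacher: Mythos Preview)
Your proof is correct but takes a different route from the paper. You pick the pivot $\beta$ as an entry of minimal absolute value and obtain the divisibility chain $\beta_1\mid\beta_2\mid\cdots$ for free, since any non-divisible entry in the residual block can be congruence-moved into the pivot row and Euclidean-reduced to something smaller, triggering descent. (One phrasing issue: ``the minimality of $|\beta|$ forces $r=0$'' is not literally true when $\beta$ is merely minimal in the \emph{current} matrix; you should either take $\beta$ minimal among all matrices congruent to $A$, or invoke descent here exactly as you do in the $A'$ step.) The paper, by contrast, does not seek a minimum entry: it fixes a pivot position and alternately clears its row and column using the $2\times2$ Bezout matrix $\left(\begin{smallmatrix}x&y\\-w&z\end{smallmatrix}\right)$ with $ax+by=\gcd(a,b)$, arguing termination because the pivot's absolute value is non-increasing under these sweeps. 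It then enforces $\beta_i\mid\beta_{i+1}$ in a \emph{separate} pass via an explicit $4\times4$ congruence taking a pair of blocks with entries $(a,b)$ to one with entries $(\gcd(a,b),\lcm(a,b))$. Your approach is slicker as an existence proof; the paper's is more explicitly algorithmic and, importantly, extends verbatim to any commutative principal ideal ring (replacing Bezout by the Hermite property and termination by the Noetherian property, as in Theorem~\ref{thm:asnf}), whereas your well-ordering descent relies on a Euclidean norm. The computation of the invariants $d_j$ via invariance under congruence and direct evaluation on $B$ is essentially the same in both.
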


\begin{remark}
In the context of the lemma above, we have a chain of divisibilities $d_0 \mid d_1 \mid \dots \mid d_k$. This is because for any $j \geq 0$, any $(j+1) \times (j+1)$ minor of $A$ is divisible by $d_j$, and thus $d_j$ must also divide $d_{j+1}$. This means that if there is any $j' \leq k$ such that $d_{j'}=0$, then it automatically implies that $d_j = 0$ for all $j > j'$. Suppose such a $j'$ exists and moreover assume that $d_j \neq 0$ for all $j < j'$ (for example if $k$ is odd, then $d_k= \det(A)=0$ by property of alternating matrices). In other words, suppose $j'$ is the smallest integer such that all $j' \times j'$ minors of $A$ are zero. Then we can see from the above lemma that $j' = 2r+1$, and hence must be odd. Thus we also conclude that $j' = \Theta(M_A) + 1$.
\end{remark}

This lemma appears in more or less general forms elsewhere. As alluded in the previous section, it is a consequence of Theorem~\ref{thm:lang-bilinear-canonical-form} from \cite{lang2012algebra}. Part of this lemma is also Theorem 18 in \cite{kuperberg2002kasteleyn} applied to the ring of integers. However, we go further and include the formulas for $\beta_i$. Finally, the analogous lemma for matrices over $\mathbb{F}_2$ appears as Lemma~B.1 in \cite{sarkar2021graph}. After the proof, we discuss two variations on the theme: Lemma~\ref{lem:zdasnf}, which replaces the ring $\mathbb{Z}$ with $\mathbb{Z}_d$, and a generalization to matrices over any principal ideal ring (of which $\mathbb{Z}_d$ is just one).


\begin{figure}
\centering
\includegraphics[width=0.4\textwidth]{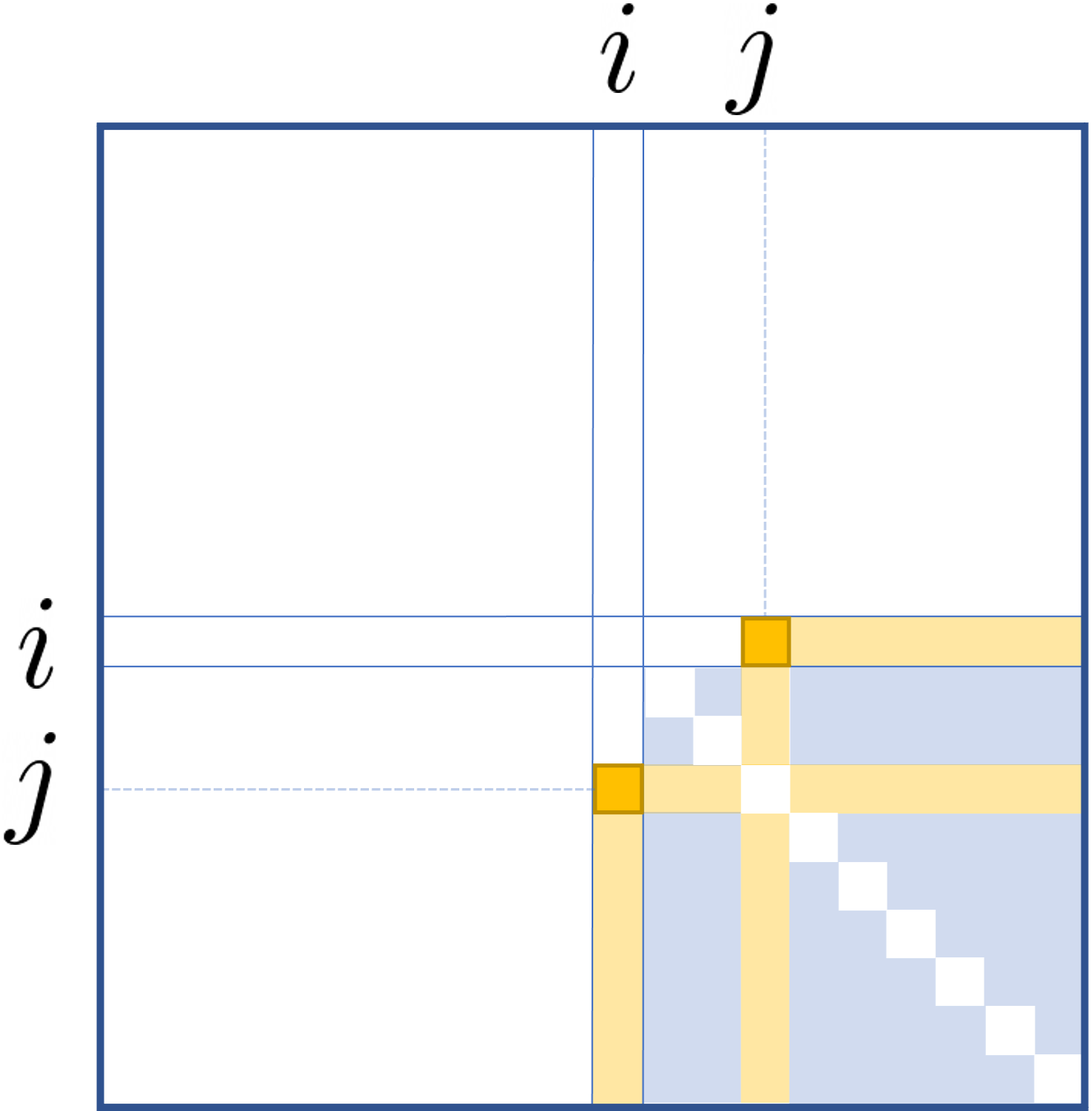}
\caption{A schematic of the matrix $B_{i-1}$ at the start of iteration $i$ of the alternating Smith normal form construction. Non-shaded regions are guaranteed to be zero except for at most $2(i-1)$ existing pivotal entries. Dark yellow squares indicate the entries that iteration $i$ is making into pivotal entries. Yellow-shaded regions contain entries that must be zeroed-out to complete iteration $i$.}
\label{fig:asnf}
\end{figure}

\begin{proof}[Proof of Lemma~\ref{lem:asnf}]
The construction of $B$ and $L$ is iterative. We let $B_0=A$.  Entries of a matrix that are the only non-zero entry in both their row and column we call ``pivotal" entries. Rows and columns containing those entries are also called pivotal. Suppose $B_{i-1}$ is an antisymmetric matrix with zero diagonal and each of its first $i-1$ rows and columns are either pivotal or all zero. Because it is antisymmetric with zero diagonal, the total number of pivotal rows (counting also those outside the first $i-1$ rows, if any) is even and equals the number of pivotal columns. See Fig.~\ref{fig:asnf} for a depiction of the structure of $B_{i-1}$.

Starting with $B_{i-1}$, we will construct a matrix $B_{i}=L_{i}B_{i-1}L_{i}^T$ with the same properties for one additional row and column, and $L_i$ is invertible. The sets of pivotal rows and columns of $B_i$ will contain the sets of pivotal rows and columns of $B_{i-1}$. For notational simplicity we denote $B_{i-1}=B'$ below. 

If row $i$ of $B'$ is already pivotal or all zero, then we can just set $B_i=B'$ and $L_i$ to the identity matrix. Otherwise, there exists a smallest $j$ such that $B'_{ji}=-B'_{ij}\neq0$. Note that $j>i$; otherwise, row $i$ must already be pivotal.

We now alternate two subroutines until a termination condition is met. The first subroutine zeroes out entries in the column $i$ other than $B'_{ji}$, and the second zeroes out entries in the column $j$ other than $B'_{ij}$. Updates are made to $B'$ throughout each subroutine, but it always remains antisymmetric with zero diagonal. Thus, the subroutines also simultaneously zero out entries in row $i$ and row $j$, respectively.

The fundamental task in both subroutines is the same --- we have two non-zero entries $a$ and $b$ in the same column, and wish to take linear combinations of their rows so that $a$ is replaced by $\gcd(a,b)>0$ and $b$ is replaced by $0$. By Bezout's identity, there exist integers $x,y$ such that $ax+by=\gcd(a,b)$ (and these can be found by performing the extended version of Euclid's GCD algorithm). Let $z=a/\gcd(a,b)$ and $w=b/\gcd(a,b)$. Then,
\begin{equation}\label{eq:Bezout_matrices}
\left(\begin{array}{cc}x&y\\-w&z\end{array}\right)\left(\begin{array}{c}a\\b\end{array}\right)=\left(\begin{array}{c}\gcd(a,b)\\0\end{array}\right),\quad\text{and}\quad \left(\begin{array}{cc}x&y\\-w&z\end{array}\right)^{-1}=\left(\begin{array}{cc}z&-y\\w&x\end{array}\right).
\end{equation}
Let $K\in\mathbb{Z}^{k\times k}$ be the identity matrix except for the entries $K_{\alpha\alpha}=x$, $K_{\alpha\beta}=y$, $K_{\beta\alpha}=-w$, $K_{\beta\beta}=z$, where $\alpha$ is the row index of $a$ and $\beta$ the row index of $b$. Then $B'\leftarrow KB'K^T$ has replaced entry $a$ with $\gcd(a,b)$ and $b$ with $0$, while also preserving the antisymmetry of $B'$ (since we likewise took linear combinations of columns by acting via $K^T$ on the right).

The first subroutine sequentially zeros out all entries $B'_{j'i}\neq0$ for $j'\neq j$, leaving a non-zero entry at just position $(j,i)$ in column $i$. The second subroutine sequentially zeros out all entries $B'_{i'j}$ for $i'\neq i$, leaving a non-zero entry at just position $(i,j)$ in column $j$. Note, the second subroutine may cause entries $B'_{j'i}$ to become non-zero again, undoing some of the work of the first subroutine, and likewise the first subroutine may cause entries $B_{i'j}$ to become non-zero. For this reason we must alternate application of subroutines until rows $i$ and $j$ are both pivotal. This termination is guaranteed -- the absolute values of entries at $(i,j)$ and $(j,i)$ are non-increasing because $\gcd(a,b)$ divides $|a|$. Moreover, if $\gcd(a,b)=|a|$ (and this holds for all entries $b$ we are trying to zero out in this iteration), then we may choose $y=0$ in Bezout's identity and the row containing $a$ remains unchanged (in absolute value) -- that is, the work of the previously applied subroutine is not undone, the rows $i$ and $j$ have been made pivotal, and we have constructed $B_i$ and $L_i$ (the latter a product of the appropriate sequence of matrices $K$ above).

After $k$ iterations, we have obtained invertible matrices $L_1,\dots,L_k$ such that
\begin{equation}
(L_kL_{k-1}\dots L_1)A(L_kL_{k-1}\dots L_1)^T=B_k
\end{equation}
where $B_k$ is antisymmetric with zero diagonal and at most one non-zero entry per row and column. There is a permutation matrix $P$ such that $B_{k+1}=PB_kP^T=\bigoplus_{i=1}^r\left(\begin{smallmatrix}0&\gamma_i\\-\gamma_i&0\end{smallmatrix}\right)$ for some integers $r$ and $\gamma_i$. We are still lacking the divisibility condition on entries of $B$, which we rectify next.

Consider the matrix of integers
\begin{equation}
T=\left(\begin{array}{cccc}0&a&0&0\\-a&0&0&0\\0&0&0&b\\0&0&-b&0\end{array}\right).
\end{equation}
Suppose we have Bezout's identity $ax+by=\gcd(a,b)$. We claim there exists invertible matrix $H$ such that 
\begin{equation}
HTH^T=\left(\begin{array}{cccc}0&\gcd(a,b)&0&0\\-\gcd(a,b)&0&0&0\\0&0&0&\lcm(a,b)\\0&0&-\lcm(a,b)&0\end{array}\right).
\end{equation}
This is exemplified by the following sequence of row and column operations:
\begin{align}\label{eq:T_to_HTH}
T\rightarrow\left(\begin{array}{cccc}0&a&0&b\\-a&0&0&0\\0&0&0&b\\-b&0&-b&0\end{array}\right)\rightarrow\left(\begin{array}{cccc}0&\gcd(a,b)&0&0\\-\gcd(a,b)&0&-by&0\\0&by&0&\lcm(a,b)\\0&0&-\lcm(a,b)&0\end{array}\right)\rightarrow HTH^T.
\end{align}
The first step adds row $3$ to row $1$ (and likewise on columns). The second uses Eq.~\eqref{eq:Bezout_matrices} to zero out entries $(1,4)$ and $(4,1)$. The third step zeros entries at $(2,3)$ and $(3,2)$ by subtracting $by/\gcd(a,b)$ times row $1$ from row $3$ (and likewise on columns).

Apply this procedure iteratively to pairs of $2\times2$ blocks of $B_{k+1}$. In iteration $i$, apply it sequentially $r-i$ times to block $i$ paired with each of blocks $j=i+1,\dots,r$. After iteration $i$, the non-zero entry of the $i^{\text{th}}$ block divides all entries in blocks $i+1,\dots,r$. After iteration $r-1$, we have an invertible $H$ so that $B_{k+2}=HB_{k+1}H^T=\bigoplus_{i=1}^r\left(\begin{smallmatrix}0&\beta_i\\-\beta_i&0\end{smallmatrix}\right)$ where $\beta_i\mid\beta_{i+1}$.

Set $L=(HPL_kL_{k-1}\dots L_1)^{-1}=L_1^{-1}L_2^{-1}\dots L_k^{-1}P^{-1}H^{-1}$ and $B=B_{k+2}$, so that $A=LBL^T$. Note that by combination of Lemmas~\ref{lem:min-genarators-modules} and \ref{lem:min-number-gens-special-matrix}, we deduce that $r=\Theta(M_A)/2$.

To show that $|\beta_i|=d_{2i}/d_{2i-1}=d_{2i-1}/d_{2i-2}$ and $\beta_i^2=d_{2i}/d_{2i-2}$, we note that invertible row and column operations do not change the greatest common divisor of matrix minors (see Corollary 4.8 in \cite{brown1993matrices} or the proof of Proposition 8.1 in \cite{miller2009differential}). Thus, since the formulas for $\beta_i$ are easily verified using matrix minors of $B=\bigoplus_{i=1}^r\left(\begin{smallmatrix}0&\beta_i\\-\beta_i&0\end{smallmatrix}\right)$, they hold also for $A$. 

We summarize the construction in this proof in Algorithm~\ref{alg:asnf}.
\end{proof}

\begin{algorithm}
\edit{
\caption{\edit{Given alternating matrix $A\in\mathbb{Z}^{k\times k}$, find $L,B\in\mathbb{Z}^{k\times k}$ such that $A=LBL^T$ and $B,L$ have the properties stated in Lemma~\ref{lem:asnf}.}\label{alg:asnf}}
\begin{algorithmic}[1]
    \Procedure{Alternating\_Smith\_Normal\_Form}{$A$}
        \State $B \leftarrow A$
        \State $L \leftarrow \mathsf{IdentityMatrix}(k\times k)$
        \For{$1 \leq i \leq k$}
            \If{Row $i$ of $B$ is all 0}
                \State Continue
            \EndIf
            \State $j \leftarrow $ the smallest index such that $B_{ij}\neq0$
            \While{$S=\{h>j:B_{ih}\neq0\}$ is not empty or $T=\{l>i:B_{lj}\neq0\}$ is not empty }
                \For{$h\in S$}
                    \State $x,y,g=\mathsf{ExtendedGCD}(B_{ij},B_{ih})$
                    \State $K \leftarrow \mathsf{IdentityMatrix}(k\times k)$
                    \State $K_{jj}\leftarrow x$, $K_{jh}\leftarrow y$, $K_{hj}\leftarrow -B_{ih}/g$, $K_{hh}\leftarrow B_{ij}/g$
                    \State $B\leftarrow KBK^T$
                    \State $L\leftarrow LK^{-1}$
                \EndFor
                \For{$l\in T$}
                    \State $x,y,g=\mathsf{ExtendedGCD}(B_{ij},B_{lj})$
                    \State $K \leftarrow \mathsf{IdentityMatrix}(k\times k)$
                    \State $K_{ii}\leftarrow x$, $K_{il}\leftarrow y$, $K_{li}\leftarrow -B_{lj}/g$, $K_{ll}\leftarrow B_{ij}/g$
                    \State $B\leftarrow KBK^T$
                    \State $L\leftarrow LK^{-1}$
                \EndFor
            \EndWhile
            \State $P\leftarrow $ permutation matrix so that $PBP^T$ is block diagonal with $r$ non-zero $2\times 2$ blocks 
            \State $B=PBP^T$
            \State $L=LP^{-1}$
            \For{$1\le i\le r-1$}
                \For{$i+1\le j\le r$}
                    \State $H\leftarrow $ the matrix enacting the transformation in Eq.~\eqref{eq:T_to_HTH} on blocks $i$ and $j$
                    \State $B=HBH^T$
                    \State $L=LH^{-1}$
                \EndFor
            \EndFor
        \EndFor
        \State Return $B,L$
    \EndProcedure
    
\# Note: $\mathsf{ExtendedGCD}(a,b)$ finds $x,y\in\mathbb{Z}$ and $0<g\in\mathbb{Z}$ such that $xa+yb=g=\text{gcd}(a,b)$. Also, if $\text{gcd}(a,b)=|a|$, $y=0$.
\end{algorithmic}}
\end{algorithm}

To start the commentary on Lemma~\ref{lem:asnf}, let us apply it more specifically to the ring $\mathbb{Z}_d$ that is relevant for $d$-dimensional qudits. 
Suppose we have an alternating matrix $\bar{A} \in\mathbb{Z}_d^{k\times k}$. Treating $\bar{A}$ as an element of $\mathbb{Z}^{k\times k}$, we use Lemma~\ref{lem:asnf} to find matrices $\bar{L},\bar{B}\in\mathbb{Z}^{k\times k}$ so that $\bar{A}=\bar{L} \bar{B} \bar{L}^T$. We note that invertibility of $\bar{L}$ implies that $\det (\bar{L}) = \pm 1$ over integers, since $\pm 1$ are the only units in $\mathbb{Z}$. We may reduce this expression modulo $d$ to obtain the equality $A=L B L^T$, where $A,B$, and $L$ are obtained by reducing $\bar{A}, \bar{B}$, and $\bar{L}$ modulo $d$, respectively. Note that $\bar{A} = A$, and $L$ is an invertible matrix in $\mathbb{Z}_d^{k\times k}$ since $\det (L) = \det (\bar{L}) \mod d = \pm 1$. The matrix $B$ is alternating in $\mathbb{Z}_d^{k\times k}$ since $\bar{B}$ was alternating in $\mathbb{Z}^{k\times k}$. Moreover, any zero entry in $\bar{B}$ remains zero in $B$, while some non-zero entries of $\bar{B}$ may become zero in $B$, due to the modulo $d$ operation. The divisibility and uniqueness parts of Lemma~\ref{lem:asnf} are also modified in accordance with the following observations on divisibility in $\mathbb{Z}_d$.


A greatest common divisor $\gcd(a,b)$ for $a,b\in\mathbb{Z}_d$ is any element of $\mathbb{Z}_d$ that divides $a$ and $b$ and has the property that if $c\in\mathbb{Z}_d$ divides both $a$ and $b$, then $c$ divides $\gcd(a,b)$. The group of units $U_d$ is the multiplicative group consisting of elements of $\mathbb{Z}_d$ that have multiplicative inverses in $\mathbb{Z}_d$. Elements of $U_d$ are the integers less than $d$ that are relatively prime to $d$. Multiplication by units does not affect divisibility, i.e.~if $a\mid b$, then $ua\mid b$ and $a\mid ub$ for all $u\in U_d$. Therefore, the value of $\gcd(a,b)$ is not unique, and if $g$ and $h$ are both greatest common divisors of $a$ and $b$, then there is a unit $u\in U_d$ such that $g=uh$, a fact that holds over commutative principal ideal rings \cite{kaplansky1949elementary,anderson2004associates}. These observations lead to the ASNF over the ring $\mathbb{Z}_d$ as stated in Lemma~\ref{lem:zdasnf} in the main text.

Finally, we note that Lemma~\ref{lem:asnf} can be generalized to matrices over commutative principal ideal rings. A commutative principal ideal ring $R$ is defined by two properties
\begin{enumerate}
\item Noetherian property (e.g.~Chapter 10 of \cite{lang2012algebra}): all ideals are finitely generated, or, equivalently, any ascending chain of ideals $I_0\subseteq I_1\subseteq I_2\subseteq\dots$ terminates with $I_n=I_m$ for all $m\ge n$.
\item Hermite property \cite{kaplansky1949elementary}: for any two elements $a,b\in R$, there exists $c\in R$ and an invertible matrix $T\in R^{2\times 2}$ such that $(a\text{\space}b)T=(c\text{\space}0)$.
\end{enumerate}

These properties generalize key steps in the proof of Lemma~\ref{lem:asnf}. The Hermite property replaces Eq.~\eqref{eq:Bezout_matrices}, which is used to zero out matrix entries. Note that it also implies $(a\text{\space}b)=(c\text{\space}0)T^{-1}$ and therefore $c\mid a$ and $c\mid b$. The Noetherian property is used to argue that the repeated application of the zeroing subroutines eventually terminates with a new pivotal row and column. Because $c\mid a$, the ideal generated by $c$ contains the ideal generated by $a$. Since the Noetherian property means the chain of ideals must terminate, the algorithm must also terminate. These modifications to the proof of Lemma~\ref{lem:asnf} lead to a more general theorem.

\begin{theorem}
\label{thm:asnf}
Let $R$ be a commutative principal ideal ring, and suppose $A\in R^{k\times k}$ is an alternating matrix. Then, there are matrices $L,B\in R^{k\times k}$, where $B$ is alternating and has at most one non-zero entry per row and column, and $L$ is invertible, such that $A=LBL^T$. We may further arrange $B$ so that it is non-zero only in the top-left $2r \times 2r$ block which has the form $\bigoplus_{i=1}^r\left(\begin{smallmatrix}0&\beta_i\\-\beta_i&0\end{smallmatrix}\right)$ for integers $r=\Theta(M_A)/2$, where $M_A$ is the $R$-submodule generated by the columns of $A$, and each $\beta_i \in R$ non-zero, satisfying $\beta_i\mid\beta_{i+1}$ for all $i < r$. Also, $\beta_i$ is uniquely determined up to multiplication by a unit and satisfies the formula $\beta_i d_{2i-1}=d_{2i}$ (or, alternatively, the formula $\beta_id_{2i-2}=d_{2i-1}$), where $d_j$ is a greatest common divisor in $R$ of all $j\times j$ minors of $A$ (and $d_0:=1$).
\end{theorem}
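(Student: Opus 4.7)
The plan is to mirror the constructive proof of Lemma~\ref{lem:asnf}, replacing the two ring-theoretic ingredients specific to $\mathbb{Z}$ -- Bezout's identity and the well-ordering of $|\cdot|$ -- with the Hermite property and the Noetherian property, both of which hold in any commutative principal ideal ring $R$. Existence of $L$ and $B$ will follow from essentially the same iterative algorithm with two modifications, and the uniqueness and minor formulas will come from invariance of determinantal ideals under the operation $A\mapsto LAL^T$.

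For existence, I would build $B_i$ and accumulate $L$ as in Lemma~\ref{lem:asnf}: at step $i$, if row $i$ of $B_{i-1}$ is not yet pivotal or zero, pick the smallest $j>i$ with $(B_{i-1})_{ji}\neq 0$ and alternate two subroutines that zero out all other entries in columns $i$ and $j$. The fundamental step inside each subroutine -- the analog of Eq.~\eqref{eq:Bezout_matrices} -- becomes: given $a,b\in R$ in the same column, the Hermite property supplies $c\in R$ and invertible $T\in R^{2\times 2}$ with $T(a,b)^T=(c,0)^T$, and the identity $(a,b)=(c,0)T^{-1}$ forces $c\mid a$ and $c\mid b$, so $c$ is a gcd. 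Conjugating the working matrix by the appropriate $k\times k$ embedding of $T$ preserves alternation, replaces $a$ by $c$, and zeros $b$.

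The delicate point is termination of the alternating subroutines: in $\mathbb{Z}$ we used strict decrease of $|a|$ unless $a\mid b$, but $R$ lacks such a size function. Instead I would use the ascending chain condition. After each round, the ideal generated by the entry at position $(j,i)$ (or $(i,j)$) only grows, since the new entry divides the previous one, and Noetherianness forces this chain to stabilize after finitely many rounds; once it does, the Hermite coefficient $y$ can be chosen so that the pivotal row is left unchanged, at which point the complementary subroutine completes and both rows $i,j$ become pivotal. After $k$ iterations, a permutation puts the result in the form $\bigoplus_{i=1}^r\left(\begin{smallmatrix}0&\gamma_i\\-\gamma_i&0\end{smallmatrix}\right)\oplus 0$ with $r=\Theta(M_A)/2$ by Lemma~\ref{lem:min-genarators-modules} and Lemma~\ref{lem:min-number-gens-special-matrix}. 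To impose $\beta_i\mid\beta_{i+1}$, I would reuse the $4\times 4$ block manipulation from the proof of Lemma~\ref{lem:asnf} verbatim: its only ring-theoretic content is the production of a gcd $c$ of two elements $a,b\in R$ together with the companion element $c' := ab/c$, which exists in $R$ because $c\mid ab$; iterating this pairwise outward from the top-left block yields $B$ as claimed.

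Finally, for the minor formulas and uniqueness of the $\beta_i$ up to units, I would invoke invariance of the $j$-th determinantal (Fitting) ideal under $A\mapsto LAL^T$ with $L$ invertible: the ideal generated by the $j\times j$ minors of $A$ equals that generated by the $j\times j$ minors of $B$. A direct computation on the canonical $B$ gives these ideals as $(d_{2i-1})=(\beta_i\prod_{s<i}\beta_s^2)$ and $(d_{2i})=(\prod_{s\le i}\beta_s^2)$, whence $\beta_i d_{2i-1}=d_{2i}$ and $\beta_i d_{2i-2}=d_{2i-1}$ up to units; this also pins down $\beta_i$ up to a unit, since $d_{2i-1}$ and $d_{2i-2}$ are themselves only defined up to units as generators of principal ideals. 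The main obstacle in the whole proposal is the termination argument for the alternating subroutines: without numerical size one must verify carefully that once the generating ideal of the entry at $(j,i)$ stabilizes, one further application of the Hermite property -- with a deliberately chosen coefficient -- really does hold the pivotal row fixed and let the opposite subroutine finish.
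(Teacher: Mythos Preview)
Your proposal is correct and takes essentially the same approach as the paper: the paper's own ``proof'' of Theorem~\ref{thm:asnf} consists precisely of the two modifications you identify---replacing Eq.~\eqref{eq:Bezout_matrices} by the Hermite property and replacing the decrease-of-$|a|$ termination argument by the ascending chain condition on the ideals generated by the pivotal entry---and then deferring everything else to the proof of Lemma~\ref{lem:asnf}. Your treatment of the termination step and of the minor/uniqueness argument via determinantal ideals is, if anything, more explicit than what the paper provides.
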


\edit{

\section{Technical lemmas and proofs}
\label{app:technical-proofs}

\subsection{Proofs for Section~\ref{sec:prelim}}\label{app:proofs_sec2}

In this section, we give proofs for some of the preliminary lemmas involving modules over commutative rings.

\begin{replemma}{lem:min-genarators-modules}
Suppose $C \in R^{k \times t}$ and $A \in R^{k \times k}$, $B \in R^{t \times t}$ are invertible matrices. Let $\bar{C} := ACB$. The minimal number of generators of the submodules $M_{C}$ and $M_{\bar{C}}$ of $R^k$ are equal.
\end{replemma}

\begin{proof}
Let $\hat{C} := CB$, let its columns be $\hat{c}_0,\hat{c}_1,\dots,\hat{c}_{t-1}$, and let $M_{\hat{C}}$ be the submodule of $R^k$ generated by the columns of $\hat{C}$. For every $i$, each column $\hat{c}_i \in M_{C}$ is a linear combination of the columns of $C$, from which we conclude that $M_{\hat{C}} \subseteq M_{C}$. By invertibility of $B$, we also have $C = \hat{C}B^{-1}$, and by the same argument we conclude that $M_{C} \subseteq M_{\hat{C}}$. Thus $M_{C} = M_{\hat{C}}$, and so we have $\Theta(M_{C}) = \Theta (M_{\hat{C}})$. It remains to show that $\Theta(M_{\bar{C}}) = \Theta(M_{\hat{C}})$, where $\bar{C} = A \hat{C}$.

Let $D \in R^{k \times t'}$, with $t' \leq t$, be a matrix whose columns form a minimal generating set of $M_{\hat{C}}$. Then each $\hat{c}_i$ can be expressed as a linear combination of the columns of $D$; so we can write $\hat{C} = D E$, for some $E \in R^{t' \times t}$. Letting $\bar{A} := AD \in R^{k \times t'}$, we thus have $\bar{C} = \bar{A} E$. This shows that $\Theta(M_{\bar{C}}) \leq t = \Theta(M_{\hat{C}})$. Again by invertibility of $A$, we also have $\hat{C} = A^{-1} \bar{C}$, and now repeating the argument gives $\Theta(M_{\hat{C}}) \leq \Theta(M_{\bar{C}})$.
\end{proof}

\begin{replemma}{lem:min-number-gens-special-matrix}
Let $C \in R^{k \times t}$ be a matrix such that each row and column has at most one non-zero element, and suppose one of those non-zero elements is divisible by all the others. Then the minimal number of generators of the $R$-module $M_{C}$, generated by the columns of $C$, is equal to the number of non-zero elements of the matrix $C$.
\end{replemma}

\begin{proof}
The lemma is clearly true if $C = 0$; so assume that $C$ has at least one non-zero element. Without loss of generality, we may assume that $C$ is a diagonal matrix so that $C_{00}, C_{11}, \dots, C_{r-1,r-1}$ all divide $C_{rr}\neq0$ (which implies $C_{ii} \neq 0$ for all $i=0,1,\dots,r$), where $r+1$ is the number of non-zero elements of $C$. If $C$ is not of this form, then one can permute the rows and columns of $C$ to bring it to this form, and Lemma~\ref{lem:min-genarators-modules} ensures that the minimal number of generators remain the same. Notice that if $x \in M_C$, then for $i=0,1,\dots,r$ we have $x_i = a_i C_{ii}$ for some $a_i \in R$, and $x_i = 0$ for all $i > r$.

Define the ideals $N_i := \{x \in R : x C_{ii} = 0\}$ for each $i=0,1,\dots,r$, and note that by the divisibility condition we have $N_0, N_1,\dots, N_{r-1} \subseteq N_r\neq R$. Let $N$ be a maximal ideal of $R$ containing $N_r$. Then $R/N$ is a field \cite[Chapter~2]{lang2012algebra}, and let $f: R \rightarrow R/N$ be the corresponding quotient map (which is a ring homomorphism) taking an element $y \in R$ to its coset in $R/N$. Treating $(R/N)^{r+1}$ as a $R$-module, we now define a $R$-module homomorphism $\kappa: M_C \rightarrow (R/N)^{r+1}$ as follows: if $x \in M_C$, then $(\kappa(x))_i = f(a_i)$ for every $i=0,1,\dots,r$, where $x_i = a_i C_{ii}$ (one can check that $\kappa$ is well-defined, i.e. it does not depend on the choice of $a_i$ as $N_i \subseteq N$, and that it is a module homomorphism). The map $\kappa$ is also clearly surjective. Thus if $M_C$ has a generating set $S$ of size $|S| < r+1$, then surjectivity of $\kappa$ implies that a generating set for $(R/N)^{r+1}$ is $\{\kappa(x) : x \in S\}$. But then $\{\kappa(x) : x \in S\}$ is also a generating set of the $R/N$-module $(R/N)^{r+1}$, which is a vector space of dimension $r+1$ as $R/N$ is a field, and thus it cannot have less than $r+1$ generators.
\end{proof}

\subsection{Proofs for Section~\ref{ssec:noncomm-set-max-size}}\label{app:proofs_sec4}

Recall that in Section~\ref{ssec:noncomm-set-max-size}, we studied a graph $G$ with a vertex for each single-qudit Pauli in $\mathcal{P}_1$ and edges between pairs that do not commute. A subset of vertices $W$ contains those corresponding to Paulis $X^aZ^b$ with $\gcd(a,b,d)=1$.
\begin{replemma}{lem:G_properties}
The graph $G=(V,E)$ has the following properties.
\begin{enumerate}[(i)]
\item If $C\subseteq V$ is a clique in $G$, then there is a clique $C'\subseteq W$ of the same size, $|C|=|C'|$.  
\item If $v_0,v_2\in V$, $v_1\in W$, $(v_0,v_1)\not\in E$, and $(v_1,v_2)\not\in E$, then $(v_0, 
v_2)\not\in E$.
\item If $(a,b)$ and $(s,t)$ are in $W$, then $((a,b),(s,t))\not\in E$ if and only if there exists $u\in\mathbb{Z}$ with $\gcd(u,d)=1$ such that $(a,b)=(us,ut)$, the right hand side evaluated modulo $d$.
\end{enumerate}
\end{replemma}
\begin{proof}
To prove (i), we show that if $C\subseteq V$ is a clique, then for any $v=(a,b)\in C$, $C'=(C\setminus\{v\})\cup\{v'\}$, where $v'=v/\gcd(a,b,d)\in W$, is also a clique. Suppose for contradiction, there is some $(s,t)\in C$ that is connected by an edge to $v$ but not to $v'$. Then, for $g=\gcd(a,b,d)$, we have $\det\left(\begin{smallmatrix}a/g&b/g\\s&t\end{smallmatrix}\right) \equiv 0 \pmod{d}$ which implies $\det\left(\begin{smallmatrix}a&b\\s&t\end{smallmatrix}\right) \equiv 0 \pmod{d}$, contradicting that $(s,t)$ is connected to $v$.

To prove (ii), let $v_i :=(a_i,b_i)$ for $i=0,1,2$. Note that by Bezout's identity, there are integers $x,y,z$ such that $xa_1+yb_1+zd=\gcd(a_1,b_1,d)=1$. Let
\begin{equation}
A :=\left(\begin{array}{ccc}a_0&b_0&xa_0+yb_0+zd\\a_1&b_1&1\\a_2&b_2&xa_2+yb_2+zd\end{array}\right) \in \mathbb{Z}^{3 \times 3}.
\end{equation}
Due to the last column of $A$ being a linear combination of the first two and $(d,d,d)^T$, we have $\det(A) \equiv 0 \pmod{d}$. Also, for $c_i=xa_i+yb_i+zd$, the cofactor expansion gives $\det(A)=c_0\det\left(\begin{smallmatrix}a_1&b_1\\a_2&b_2\end{smallmatrix}\right)-\det\left(\begin{smallmatrix}a_0&b_0\\a_2&b_2\end{smallmatrix}\right)+c_2\det\left(\begin{smallmatrix}a_0&b_0\\a_1&b_1\end{smallmatrix}\right) \equiv -\det\left(\begin{smallmatrix}a_0&b_0\\a_2&b_2\end{smallmatrix}\right) \pmod{d}$. Thus we have $\det\left(\begin{smallmatrix}a_0&b_0\\a_2&b_2\end{smallmatrix}\right) \equiv 0 \pmod{d}$, proving $(v_0,v_2)\not\in E$.

The reverse direction of part (iii) is a simple calculation. The forward direction is more involved. Apply the Smith normal form (Theorem~\ref{thm:smith-normal-form}) to obtain $A :=\left(\begin{smallmatrix}a&b\\s&t\end{smallmatrix}\right)=S^{-1}BT$ for invertible matrices $S,T\in\mathbb{Z}_d^{2\times2}$ and diagonal $B\in\mathbb{Z}_d^{2\times2}$ with $B_{00}=\gcd(a,b,s,t) \bmod{d}=\gcd(a,b,s,t)$ and $B_{11}=(\det(A)/B_{00}) \bmod{d}$. The formulas for $B_{00}$ and $B_{11}$ follow from \cite[Theorem~2.4]{stanley2016smith}, even though $\mathbb{Z}_d$ is not a unique factorization domain (as required to apply the theorem), but one may simply compute the Smith normal form of $A$ over $\mathbb{Z}$, which is a unique factorization domain, and then reduce the resulting matrices modulo $d$, leading to the formulas for $B_{00}$ and $B_{11}$. Since $((a,b),(s,t))\not\in E$, we know $\det(A) \equiv 0\pmod{d}$, and because we also have $\gcd(a,b,d)=1$, we get $B_{11}=0$. Rearrange to obtain $SA=BT$ and note that the last row of $BT$ is $(0,0)$. This implies
\begin{equation}\label{eq:linear_combo}
S_{10}(a,b)+S_{11}(s,t)=(0,0),
\end{equation}
evaluated over $\mathbb{Z}_d$. This equation means $\gcd(S_{10},d)$ divides $S_{11}s$ and $S_{11}t$ over integers. Because $\gcd(s,t,d)=1$, it must then be that $\gcd(S_{10},d)$ also divides $S_{11}$ over integers. Therefore, $\gcd(S_{10},d)$ divides $\det(S)$ over integers, and hence also over $\mathbb{Z}_d$. Since $S$ is invertible, $\det(S)$ (modulo $d$) is a unit of the ring $\mathbb{Z}_d$, so its only divisors are other units. This means $\gcd(S_{10},d)=1$. Likewise, we can argue $\gcd(S_{11},d)=1$. So $S_{10}$, $S_{11}$, and $S_{11} (S_{10})^{-1}$ are all units in $\mathbb{Z}_d$. Multiply Eq.~\eqref{eq:linear_combo} by $S_{10}^{-1}$ and rearrange to finish the proof.
\end{proof}

The other lemma we prove here involves the arithmetic function $H(d)=|W|/d$. We argued in Section~\ref{ssec:noncomm-set-max-size} that it is equal to
\begin{equation}
H(d)=\sum_{a\in\mathbb{Z}_d}\prod_{p\mid\gcd(a,d)}\left(1-\frac1p\right).
\end{equation}

\begin{replemma}{lem:H_is_multiplicative}
If $d_0,d_1\in\mathbb{Z}$ are relatively prime, then $H(d_0d_1)=H(d_0)H(d_1)$.
\end{replemma}
\begin{proof}
By Bezout's identity, there are integers $x_0,x_1$ such that $x_0d_0+x_1d_1=1$. Note that $x_0$ is an inverse of $d_0$ modulo $d_1$, so $\gcd(x_0,d_1)=1$. Likewise $\gcd(x_1,d_0)=1$. 

The Chinese remainder theorem implies there is an isomorphism between $\mathbb{Z}_{d_0d_1}$ and $\mathbb{Z}_{d_0}\times\mathbb{Z}_{d_1}$. Explicitly, it says that $a \equiv a_0\pmod{d_0}$ and $a \equiv a_1\pmod{d_1}$, if and only if $a \equiv a_0x_1d_1+a_1x_0d_0\pmod{d_0d_1}$. This means that 
$\gcd(a,d_0 d_1)= \gcd(a_0x_1d_1+a_1x_0d_0,d_0 d_1)$. We also have $\gcd(x,d_0d_1)=\gcd(x,d_0)\gcd(x,d_1)$ for any integer $x$, as $d_0,d_1$ are relatively prime.

Putting these facts together, we complete the proof:
\begin{align}
H(d_0d_1)&=\sum_{a\in\mathbb{Z}_{d_0d_1}}\prod_{p\mid\gcd(a,d_0d_1)}(1-1/p)\\&=\sum_{a_0\in\mathbb{Z}_{d_0}}\sum_{a_1\in\mathbb{Z}_{d_1}}\prod_{p|\gcd(a_0x_1d_1+a_1x_0d_0,d_0d_1)}(1-1/p)\\&= \left( \sum_{a_0\in\mathbb{Z}_{d_0}}\prod_{p|\gcd(a_0,d_0)}(1-1/p) \right) \left( \sum_{a_1\in\mathbb{Z}_{d_1}}\prod_{p\mid\gcd(a_1,d_1)}(1-1/p) \right) = H(d_0)H(d_1).
\end{align}
\end{proof}

\subsection{Proofs for Section~\ref{sec:achievable-patterns}}\label{app:proofs_sec5}

\begin{lemma}
\label{lem:helper-lem-2}
Let $d = p_0^{\alpha_0}p_1^{\alpha_1}\dots p_{m-1}^{\alpha_{m-1}}$ be the prime factorization of an integer $d \geq 2$, where $p_0,\dots,p_{m-1}$ are primes. Let $d', d'' \geq 2$ be two integers. Define $J := \{j \in \{0,\dots,m-1\} : d' \not \equiv 0 \pmod{p_j^{\alpha_j}}\}$, and suppose that $d'' \not \equiv 0 \pmod{p_j}$ for every $j \in J$. Then we have the set equality
\begin{equation*}
    \{x \bmod d: x \in d' \mathbb{Z}\} = \{x \bmod d: x \in d' d'' \mathbb{Z}\}.
\end{equation*}
\end{lemma}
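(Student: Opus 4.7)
The plan is to observe that each side equals the cyclic subgroup of $(\mathbb{Z}_d,+)$ generated by the reduction modulo $d$ of $d'$ (respectively $d'd''$), so the statement reduces to showing that these two elements generate the same subgroup of $\mathbb{Z}_d$. The inclusion $\supseteq$ is immediate because $d'd''\mathbb{Z}\subseteq d'\mathbb{Z}$, so only the reverse inclusion requires work.

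For the reverse inclusion, I would use the elementary fact that the subgroup of $\mathbb{Z}_d$ generated by an element $a \bmod d$ has order exactly $d/\gcd(a,d)$. Applied to the two generators, this reduces the problem to the purely number-theoretic claim $\gcd(d',d)=\gcd(d'd'',d)$: combined with $\supseteq$, the equality of sizes then forces equality of the subgroups.

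I would verify $\gcd(d',d)=\gcd(d'd'',d)$ prime by prime using the factorization $d = p_0^{\alpha_0}\dots p_{m-1}^{\alpha_{m-1}}$. For each index $j$, let $e_j$ denote the exponent of $p_j$ in the prime factorization of an integer. Then $\gcd(d',d)$ has $p_j$-exponent $\min(e_j(d'),\alpha_j)$ and $\gcd(d'd'',d)$ has $p_j$-exponent $\min(e_j(d')+e_j(d''),\alpha_j)$. If $j\notin J$, then $e_j(d')\ge\alpha_j$, so both exponents equal $\alpha_j$. If $j\in J$, then by hypothesis $e_j(d'')=0$, so the two exponents both equal $e_j(d')$. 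Thus the $p_j$-exponents match for every $j$, giving the desired equality of gcds.

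No step presents a genuine obstacle; the only thing to be careful about is correctly interpreting the hypothesis that $d''\not\equiv 0\pmod{p_j}$ for $j\in J$ as forcing $e_j(d'')=0$ precisely at those indices where it matters, i.e.~where $e_j(d')<\alpha_j$ so that adding $e_j(d'')$ could otherwise change the minimum. An equivalent, slightly more constructive route would be to show directly that $d'\bmod d$ lies in the subgroup generated by $d'd''\bmod d$: this reduces to solving $d''y\equiv 1\pmod{D}$ with $D:=d/\gcd(d',d)$, which is possible because $D$ is only divisible by the primes $p_j$ with $j\in J$ and $d''$ is coprime to all such primes. Either route works, but the gcd-comparison version is marginally cleaner.
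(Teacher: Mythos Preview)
Your proof is correct. Both you and the paper establish the easy inclusion $\supseteq$ and then argue that the two subgroups have the same size, so the overall architecture is the same. The difference is in how the size comparison is carried out: the paper takes $\ell$ to be the additive order of $d'$ in $\mathbb{Z}_d$, observes that $\ell$ is supported only on the primes $p_j$ with $j\in J$, and then shows directly that $0\cdot d'd'',\,1\cdot d'd'',\,\dots,(\ell-1)\cdot d'd''$ are pairwise incongruent modulo $d$ (using that $d''$ is coprime to each such $p_j$ to cancel it). You instead invoke the standard formula $|\langle a\rangle|=d/\gcd(a,d)$ and reduce to the clean number-theoretic identity $\gcd(d',d)=\gcd(d'd'',d)$, which you verify one prime at a time. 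Your packaging is a bit crisper, and your alternative constructive route (solving $d''y\equiv 1\pmod{d/\gcd(d',d)}$) is exactly the inverse of the paper's cancellation step, so the two arguments are really the same content viewed from opposite ends.
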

\begin{proof}
If $J$ is empty, then $d'$ is divisible by $d$, so the statement is trivial. Thus assume $J \neq \emptyset$, so $d'$ is not divisible by $d$. It is clear that $\{x \bmod d: x \in d' d'' \mathbb{Z}\} \subseteq \{x \bmod d: x \in d' \mathbb{Z}\}$. Let $\ell$ be the smallest non-negative integer such that $\ell d' \equiv 0 \pmod{d}$. Then $\ell$ must be of the form $\ell = \prod_{j \in J} p_j^{\beta_j}$, where $1 \leq \beta_j \leq \alpha_j$ for each $j$. Note that $\ell$ is an upper bound on the size of the set $\{x \bmod d: x \in d' \mathbb{Z}\}$. Now consider the multiset of elements $K := \{t d' d'' \mod{d}: t = 0,\dots,\ell - 1\}$. We will show that all elements of $K$ are distinct, which will prove the lemma as it implies that $\ell$ is a lower bound on the size of the set $\{x \bmod d: x \in d' d'' \mathbb{Z}\}$.

Suppose this is not the case, and there exist distinct $t,t'  \in \{0,\dots,\ell-1\}$, with $t > t'$, such that $td' d'' \equiv t' d' d'' \pmod{d}$, or equivalently $(t-t') d' d'' \equiv 0 \pmod{d}$. By assumptions on $J$ and $d''$, we then conclude that $(t-t') d' \equiv 0 \pmod{d}$. Since $0 < t-t' \leq \ell - 1$, this implies that $\ell$ is not the smallest non-negative integer satisfying $\ell d' \equiv 0 \pmod{d}$, which is a contradiction.
\end{proof}

The following corollary now follows immediately from the above lemma by setting $d'=l_J$ and $d''=l_J^{r-1}$, for $r \geq 1$.
\begin{corollary}
\label{cor:helper-cor-2-1}
Let $d = p_0^{\alpha_0}p_1^{\alpha_1}\dots p_{m-1}^{\alpha_{m-1}}$ be the prime factorization of an integer $d \geq 2$, where $p_0,\dots,p_{m-1}$ are primes. Let $J \subseteq \{0,\dots,m-1\}$, and define $\ell_J := \prod_{j \in J} p_j^{\alpha_j}$. Then for every integer $r \geq 1$, we have  $\{x \bmod d: x \in \ell^r_J \mathbb{Z}\} = \{x \bmod d: x \in \ell_J \mathbb{Z}\} = \{k \ell_J : 0 \leq k < d/\ell_J \}$.
\end{corollary}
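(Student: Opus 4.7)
The plan is to apply Lemma~\ref{lem:helper-lem-2} directly, with the substitutions $d' := \ell_J$ and $d'' := \ell_J^{r-1}$, and then verify the elementary second equality by hand. Before invoking the lemma, I would handle the degenerate cases: if $J = \emptyset$ then $\ell_J = 1$, so $\ell_J^r\mathbb{Z} = \mathbb{Z}$ and both sides trivially equal $\mathbb{Z}_d = \{k\cdot 1 : 0 \le k < d\}$; and if $r = 1$ the first equality is an identity. So I may assume $r \ge 2$ and $J \ne \emptyset$ (in which case $d'' = \ell_J^{r-1} \ge 2$ and $d' \ge 2$, as required by the lemma).

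Next I would check that the hypotheses of Lemma~\ref{lem:helper-lem-2} are satisfied for these choices. The set playing the role of $J$ in the statement of that lemma is $\{j : d' \not\equiv 0 \pmod{p_j^{\alpha_j}}\}$; by the definition $d' = \ell_J = \prod_{j \in J} p_j^{\alpha_j}$, this set is precisely the complement $J^c = \{0,\dots,m-1\} \setminus J$. The remaining hypothesis is that $d'' \not\equiv 0 \pmod{p_j}$ for each $j \in J^c$, which is immediate: $d'' = \ell_J^{r-1}$ is a product of prime powers $p_j^{\alpha_j(r-1)}$ with $j \in J$ only, so no prime $p_j$ with $j \in J^c$ divides $d''$. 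Lemma~\ref{lem:helper-lem-2} then yields
\begin{equation*}
\{x \bmod d : x \in \ell_J\mathbb{Z}\} = \{x \bmod d : x \in \ell_J \cdot \ell_J^{r-1}\mathbb{Z}\} = \{x \bmod d : x \in \ell_J^r\mathbb{Z}\},
\end{equation*}
establishing the first equality of the corollary.

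For the second equality, the key observation is that $\ell_J \mid d$ by construction, so one may write $d = \ell_J \cdot (d/\ell_J)$ with $d/\ell_J$ a positive integer. Reducing the multiples $k\ell_J$ (for $k \in \mathbb{Z}$) modulo $d$ produces exactly the $d/\ell_J$ distinct residues $0, \ell_J, 2\ell_J, \dots, (d/\ell_J - 1)\ell_J$, which is the set on the right. There is no real obstacle here: the whole argument is essentially bookkeeping on top of Lemma~\ref{lem:helper-lem-2}, with the only minor care needed being the $J = \emptyset$ and $r = 1$ edge cases where the lemma's $d', d'' \ge 2$ requirement is not met.
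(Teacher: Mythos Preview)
Your proof is correct and follows exactly the paper's approach, which simply states that the corollary follows from Lemma~\ref{lem:helper-lem-2} with $d' = \ell_J$ and $d'' = \ell_J^{r-1}$. You are in fact more careful than the paper, explicitly handling the edge cases $J = \emptyset$ and $r = 1$ (where the lemma's $d',d'' \ge 2$ hypothesis fails) and spelling out the elementary second equality.
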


\subsection{Proofs for Section~\ref{sec:group_theory}}\label{app:proofs_sec6}

We start by giving a full proof of the following theorem on transforming generating sets of Pauli groups.
\begin{reptheorem}{thm:equiv-gen}\edit{(Equivalent generating sets)}
Suppose $S := \{q_0,q_1,\dots q_{k-1}\}$ is an ordered multiset of elements of $\hwgrp$. Let $A \in \mathbb{Z}_d^{k \times k}$ be an invertible matrix, and consider the ordered multiset $T := \{q'_0,q'_1,\dots,q'_{k-1}\} \subseteq \hwgrp$, where we define $q'_i := \prod_{j=0}^{k-1} q_j^{A_{ij}}$. Then $\langle S \rangle = \langle T \rangle$.
\end{reptheorem}
\begin{proof}
It is clear that $\langle T \rangle \subseteq \langle S \rangle$, so we only need to show that $\langle S \rangle \subseteq \langle T \rangle$. Since $A$ is invertible, there exists a matrix $B \in \mathbb{Z}_d^{k \times k }$ such that $BA = I$. Now define the multiset $U := \{r_0,r_1,\dots,r_{k-1}\} \subseteq \hwgrp$, where $r_{\ell} := \prod_{j=0}^{k-1} (q'_j)^{B_{\ell j}}$ for each $\ell = 0,1,\dots,k-1$. Then 
 clearly we have $\langle U \rangle \subseteq \langle T \rangle$, and we will show that $\langle S \rangle \subseteq \langle U \rangle$. Corresponding to the sets $S$ and $U$, we also define the sets $J_S$ and $J_U$, according to the notation above.

Now fix any $\ell$, and note that $r_\ell = \prod_{i=0}^{k-1} \left( \prod_{j=0}^{k-1} q_j^{A_{ij}} \right)^{B_{\ell i}}$, and this product can be rearranged using group commutators as
\begin{equation}
\label{eq:equiv-gen-proof-2}
     r_\ell = \lambda_{\ell} \prod_{j=0}^{k-1} q_j^{\sum_{i=0}^{k-1} B_{\ell i} A_{ij}}, \;\; \lambda_{\ell} \in \langle J_S \rangle.
\end{equation}
Notice that the fact $BA =I$ over the ring $\mathbb{Z}_d$ implies that $\sum_{i=0}^{k-1} B_{\ell i} A_{ij} \equiv 1  \pmod{d}$ if $\ell = j$, and otherwise $\sum_{i=0}^{k-1} B_{\ell i} A_{ij} \equiv 0  \pmod{d}$. Thus using Eq.~\eqref{eq:equiv-gen-proof-2} and the definition of $J_S$, we can conclude that $r_{\ell} = \overline{\lambda}_{\ell} q_{\ell}$ for some $\overline{\lambda}_{\ell} \in \langle J_S \rangle$, for every $\ell$. We can now use Lemma~\ref{lem:helper-lem-phase-mult}(i),(iv) to deduce that $\langle J_S \rangle = \langle J_U \rangle \subseteq \langle U \rangle$. It follows that for every $\ell$, we have $\left(\overline{\lambda}_{\ell}\right)^{-1} \in \langle U \rangle$, and this implies $q_{\ell} \in \langle U \rangle$. We thus conclude that $\langle S \rangle \subseteq \langle U \rangle$, completing the proof.
\end{proof}

The next lemma characterizes, for generating set $S\subseteq \hwgrp$, the group of elements proportional to identity that $S$ generates.

\begin{replemma}{lem:KS-generators}
Suppose $S := \{q_0,q_1,\dots,q_{k-1}\} \subseteq \hwgrp$ is an ordered multiset. Consider the matrix $\pi_2(S) \in \mathbb{Z}_d^{2n \times k}$, and let $\overline{K} \in \mathbb{Z}_d^{k \times \ell}$ be such that the columns of $\overline{K}$ is a generating set for $\ker(\pi_2(S))$. Then we have the following:
\begin{enumerate}[(i)]
    \item Let $v \in \ker(\pi_2(S))$, and define $q' := \prod_{j=0}^{k-1}q_j^{v_j}$. Then $\pi_2(q')=0$.
    \item Define the set $N_S := \left \{ \prod_{j=0}^{k-1}q_j^{v_j} : v \in \ker(\pi_2(S)) \right\}$. Then $I_S = \langle N_S, J_S \rangle$.
    \item Define the multiset $\overline{K}_S := \left \{\prod_{j=0}^{k-1} q_j^{\overline{K}_{ji}} : i=0,1,\dots,\ell-1 \right\}$. Then $I_S = \langle \overline{K}_S, J_S \rangle$.
\end{enumerate}
\end{replemma}

\begin{proof}
For (i) note that since $v \in \ker(\pi_2(S))$, we have $\pi_2(q') = \sum_{j=0}^{k-1} v_j \pi_2(q_j) = \pi_2(S) v = 0$, where all expressions are evaluated modulo $d$.

For part (ii) we only need to prove that $I_S \subseteq \langle N_S,J_S \rangle$, as the reverse containment is obvious by part (i) and definition of $J_S$. Take any $q \in I_S$. Then one can write $q = \prod_{j=0}^{\ell} h_{j}$, where each $h_j \in S$. Using group commutators we can rearrange this product to obtain $q = \lambda \prod_{j=0}^{k-1} q_j^{r_j}$, for $\lambda \in \langle J_S \rangle$, and non-negative integers $r_j$ for each $j$. Using Lemma~\ref{lem:max-degree}, we can further simplify this expression by reducing the powers $r_j$ modulo $d$, to obtain $q = \lambda \lambda' \prod_{j=0}^{k-1} q_j^{s_j}$, for some $\lambda' \in \langle J_S \rangle$, and each $s_j \in \mathbb{Z}_d$. Define $s:=(s_0,s_1,\dots,s_{k-1}) \in \mathbb{Z}_d^{k}$. Now we know that $\pi_2(q) = 0$. This then implies that $\sum_{j=0}^{k-1} s_j \pi_2(q_j) = 0$, evaluated over $\mathbb{Z}_d$, or equivalently $s \in \ker(\pi_2(S))$. We can then conclude that $\prod_{j=0}^{k-1} q_j^{s_j} \in N_S$, and thus $q \in \langle N_S,J_S \rangle$.

We now prove part (iii). Each column of $\overline{K}_S$ is an element of $\ker(\pi_2(S))$. Thus it is clear that $\overline{K}_S \subseteq N_S$, and so $\langle \overline{K}_S,J_S \rangle \subseteq \langle N_S, J_S \rangle$. We will now show that $N_S \subseteq \langle \overline{K}_S,J_S \rangle$, which will prove $\langle N_S,J_S \rangle = \langle \overline{K}_S,J_S \rangle$, and then by (ii) we will obtain $\langle \overline{K}_S,J_S \rangle = I_S$. For this, take any $v \in \ker(\pi_2(S))$. Since the columns of $\overline{K}$ is a generating set for $\ker(\pi_2(S))$, we have $v = \overline{K}w$, for some $w \in \mathbb{Z}_d^{\ell}$. Then using the group commutators one obtains for $\lambda, \lambda', \lambda'' \in \langle J_S \rangle$,
\begin{equation}
\begin{split}
\prod_{j=0}^{k-1}q_j^{v_j} &= \prod_{j=0}^{k-1}q_j^{(\sum_{i=0}^{\ell-1} \overline{K}_{ji}w_i) \bmod{d}} = \lambda \prod_{j=0}^{k-1}q_j^{\sum_{i=0}^{\ell-1} \overline{K}_{ji}w_i} \\
&= \lambda \lambda' \prod_{i=0}^{\ell -1} \left( \prod_{j=0}^{k-1}q_j^{\overline{K}_{ji}w_i} \right) = \lambda \lambda' \lambda'' \prod_{i=0}^{\ell -1} \left( \prod_{j=0}^{k-1}q_j^{\overline{K}_{ji}} \right)^{w_i}.
\end{split}
\end{equation}
Now since $\prod_{j=0}^{k-1}q_j^{\overline{K}_{ji}} \in \langle \overline{K}_S \rangle$ for every $i$, we deduce that $\prod_{j=0}^{k-1}q_j^{v_j} \in \langle \overline{K}_S,J_S \rangle$, and since $v$ is arbitrary, we conclude that $N_S \subseteq \langle \overline{K}_S,J_S \rangle$.
\end{proof}

Next, we prove the corollary of Lemma~\ref{lem:span-same}.

\begin{repcorollary}{cor:equal-span-same-inv-factor}
Let $A \in \mathbb{Z}_d^{k \times p}$ and $B \in \mathbb{Z}_d^{k \times q}$. If the columns of $A$ and $B$ generate the same submodule of $\mathbb{Z}_d^k$, then the number of invariant factors of $A$ and $B$ are equal.
\end{repcorollary}
\begin{proof}
Without loss of generality, assume that $p \leq q$. Define $\overline{A} = \begin{pmatrix}
    A & 0
\end{pmatrix} \in \mathbb{Z}_d^{k \times q}$,
where we have added $q-p$ zero columns to $A$. Then the columns of $\overline{A}$ and $B$ still generate the same submodule of $\mathbb{Z}_d^k$. Thus by Lemma~\ref{lem:span-same} we may conclude that there exists an invertible matrix $C \in \mathbb{Z}_d^{q \times q}$ such that $\overline{A} = BC$. Next, suppose that $A$ has the Smith normal form $A P = QD$ where $P,Q$ are invertible matrices, and $D$ is a diagonal matrix (all matrices are over the ring $\mathbb{Z}_d$). Let $D$ have $r$ non-zero elements. Then we can note that $\overline{A} \left( \begin{smallmatrix} P & 0 \\ 0 & I \end{smallmatrix} \right) = Q \begin{pmatrix} D & 0 \end{pmatrix}$, or equivalently $B C \left( \begin{smallmatrix} P & 0 \\ 0 & I \end{smallmatrix} \right) = Q \begin{pmatrix} D & 0 \end{pmatrix}$, and since $\left( \begin{smallmatrix} P & 0 \\ 0 & I \end{smallmatrix} \right)$ is invertible, we can conclude by the uniqueness part of the Smith normal form (Theorem~\ref{thm:smith-normal-form}) that $\begin{pmatrix} D & 0 \end{pmatrix}$ is the Smith normal form of $B$, and hence $B$ also has $r$ invariant factors.
\end{proof}

There is a generalization of Lemma~\ref{lem:group_counting}, when the commutator $\llbracket p,q\rrbracket_d=c$, as appearing in the lemma, is not a unit in $\mathbb{Z}_d$. In this case, instead of exact equality, we get lower bounds:
\begin{lemma}
\label{lem:group_counting_nonunit}
Let $p,q\in\overline{\mathcal{P}}_n$, $G\subseteq\overline{\mathcal{P}}_n$, $\llbracket p,q\rrbracket_d=c \in \mathbb{Z}_d \setminus \{0\}$, and assume the following:
\begin{enumerate}[(a)]
    \item Each element in $G$ commutes with both $p$ and $q$.
    \item $a, b$ are the smallest positive integers such that $p^a$ and $q^b$ are equivalent to $I$ up to phase. 
    \item $a_1, b_1$ are the smallest positive integers such that $\llbracket p^{a_1},q\rrbracket_d = \llbracket p,q^{b_1}\rrbracket_d = 0$.
\end{enumerate}
Then the following statements are true:
\begin{enumerate}[(i)]
    \item $a_1 = b_1$, $a_1$ divides both $a$ and $b$, and both $a, b$ divide $d$, where division is over integers. Moreover, the order of $c$ in $\mathbb{Z}_d$ is $a_1$.
    \item We have the lower bounds $\left \lvert \langle G,p \rangle / I_{G \cup \{p\}} \right \rvert \geq a_1 \; |\langle G \rangle / I_G|$, $\left \lvert \langle G,q \rangle / I_{G \cup \{q\}} \right \rvert \geq a_1 \; |\langle G \rangle / I_G|$, and $\left \lvert \langle G,p,q \rangle / I_{G \cup \{p,q\}} \right \rvert \geq a_1^2 \; |\langle G \rangle / I_G|.$
\end{enumerate}
\end{lemma}

\begin{proof}
We first prove (i). It is clear that $a_1 \leq a$: if not, since $p^a$ is equivalent to $I$ up to phase factors, we have $\llbracket p^{a},q\rrbracket_d = 0$, which contradicts (c). Moreover, if $a_1$ does not divide $a$, i.e. $a = ta_1 + s$ for some integers $t \geq 0$ and $0 < s < a_1$, then we also obtain a contradiction to (c) as $0 = \llbracket p^{a},q\rrbracket_d = \llbracket p^{s},q\rrbracket_d$. Similarly, we can argue that $b_1$ divides $b$. Also it is easy to see that $a$ divides $d$: if this is not true, then $d = t a + s$ for some integers $t \geq 0$ and $0 < s < a$, which implies $p^d = (p^{a})^t p^s$, and since both $p^d$ and $p^a$ are equivalent to $I$ up to phase factors, we conclude that the same is true for $p^s$, which contradicts (b). Similarly we may also conclude that $b$ divides $d$.

Next, suppose that the order of $c$ in $\mathbb{Z}_d$ is $0 < \gamma < a_1$. Thus $c\gamma \bmod d = 0$, which implies $\llbracket p^{\gamma},q\rrbracket_d = c\gamma \bmod d = 0$, contradicting (c). Thus the order of $c$ in $\mathbb{Z}_d$ is at least $a_1$. On the other hand, $ca_1 \bmod{d} = \llbracket p^{a_1},q\rrbracket_d = 0$, implying the order of $c$ is at most $a_1$. From this we conclude that the order of $c$ is $a_1$. By repeating the same argument using the fact that $b_1$ be the smallest positive integer such that $\llbracket p,q^{b_1}\rrbracket_d = 0$, we also get that the order of $c$ equals $b_1$, and thus $a_1 = b_1$.

Now we prove (ii). All quotient groups appearing in the following argument will be treated as subgroups of $\hwgrp / K$. Define the sets $P = \{p^j: j=0,1,\dots,a-1\}$, and $Q = \{q^j : j=0,1,\dots,b-1\}$. By (b) we know that the elements in the set $P$ (or $Q$) are all distinct, even up to phase factors. This implies that the quotient groups $\langle p \rangle / I_{\{p\}}$ and  $\langle q \rangle / I_{\{q\}}$ are in one-to-one correspondence with the sets $P$ and $Q$ respectively, where the identification is made by taking equivalence up to phase factors. Next, we make two claims, which we prove at the end:

\textbf{Claim~A:} Let $H \subseteq \hwgrp$ is a subgroup such that $p$ commutes with every element of $H$. Define $H_Q := \{q^j \in Q: \omega^t q^j \in H, \text{ for some } t \in \mathbb{Z}_d\}$. Then $H_Q \subseteq \{q^j: 0 \leq j \leq b-1, j \equiv 0 \pmod{a_1}\}$. 

\textbf{Claim~B:} Let $H \subseteq \hwgrp$ is a subgroup such that $q$ commutes with every element of $H$. Define $H_P := \{p^j \in P: \omega^t p^j \in H, \text{ for some } t \in \mathbb{Z}_d\}$. Then $H_P \subseteq \{p^j: 0 \leq j \leq a-1, j \equiv 0 \pmod{a_1}\}$.

Consider the group $\langle G \rangle$, each of whose elements commute with $q$ (resp. $p$) by (a). Then it follows from Claim~B and Claim~A respectively, that
\begin{equation}
\label{eq:Gpq-int-count}
\begin{split}
& \left\lvert \frac{\langle G \rangle}{I_G} \bigcap \frac{\langle p \rangle}{I_{\{p\}}} \right \rvert \leq | \{a_1 t \bmod{a} : t \in \mathbb{Z}_{a}\}| = a / a_1, \\
& \left\lvert \frac{\langle G \rangle}{I_G} \bigcap \frac{\langle q \rangle}{I_{\{q\}}} \right \rvert \leq | \{a_1 t \bmod{b} : t \in \mathbb{Z}_{b}\}| = b / a_1.
\end{split}
\end{equation}
This now implies
\begin{equation}
\label{eq:Gpq-int-count1}
\begin{split}
\left \lvert \langle G,p \rangle / I_{G \cup \{p\}} \right \rvert &= \frac{|\langle G 
\rangle / I_G| \; |\langle p \rangle / I_{\{p\}}|}{|\langle G 
\rangle / I_G \cap \langle p \rangle / I_{\{p\}}|} \geq \frac{|\langle G 
\rangle / I_G| \; a}{a / a_1} = a_1 \; |\langle G 
\rangle / I_G|, \\
\left \lvert \langle G,q \rangle / I_{G \cup \{q\}} \right \rvert &= \frac{|\langle G 
\rangle / I_G| \; |\langle q \rangle / I_{\{q\}}|}{|\langle G 
\rangle / I_G \cap \langle q \rangle / I_{\{q\}}|} \geq \frac{|\langle G 
\rangle / I_G| \; b}{b / a_1} = a_1 \; |\langle G 
\rangle / I_G|.
\end{split}
\end{equation}
For the bound on $\left \lvert \langle G,p,q \rangle / I_{G \cup \{p,q\}} \right \rvert$, consider the subgroup generated by $\overline{G} := G \cup \{p \}$. Then every element of $\langle \overline{G} \rangle$ commutes with $p$. Again by Claim~A, we have that $\left \lvert \langle \overline{G} \rangle / I_{\overline{G}} \cap \langle q \rangle / I_{\{q\}} \right \rvert \leq b / a_1$, and then it follows that 
\begin{equation}
\label{eq:Gpq-int-count2}
\left \lvert \langle G,p,q \rangle / I_{G \cup \{p,q\}} \right \rvert = \frac{|\overline{G} \rangle / I_{\overline{G}}| \; |\langle q \rangle / I_{\{q\}}|}{\left \lvert \langle \overline{G} \rangle / I_{\overline{G}} \cap \langle q \rangle / I_{\{q\}} \right \rvert} \geq a_1 \; |\langle\overline{G} \rangle / I_{\overline{G}}| \geq a_1^2 |\langle G 
\rangle / I_G|,
\end{equation}
where we used Eq.~\eqref{eq:Gpq-int-count1} in the last inequality.

We now prove Claim A, and then proof of Claim B is exactly similar. For contradiction, assume that there exists $q^j \in H_Q$ such that $j \not \equiv 0 \pmod{a_1}$. Then $0 = \llbracket p,q^{j}\rrbracket_d = cj \bmod{d}$, as $q^j$ is equivalent to some element of $H$ up to phase factors. But this implies that $j$ must be a multiple of $a_1$, as $a_1$ is the order of $c$ (specifically order implies that for any positive integer $0 < \gamma < a_1$, $c\gamma \bmod{d} \neq 0$). Thus we have a contradiction, and the claim is proved.
\end{proof}

Lemma~\ref{lem:min-gen-set-first-bound-simp} follows from parts (iii)-(v) of the next lemma. Several parts of this stronger lemma are used in subsequent proofs in Section~\ref{ssec:minimal-gen-sets}.

\begin{lemma}
\label{lem:min-gen-set-first-bound}
Let $\pi_2(S) \in \mathbb{Z}_d^{2n \times k}$ have the Smith normal form $\left(\begin{smallmatrix} D & 0 \\ 0 & 0 
\end{smallmatrix}\right)$, so that $\pi_2(S) P = Q 
\left(\begin{smallmatrix}
D & 0 \\
0 & 0
\end{smallmatrix}\right)$, for $D \in \mathbb{Z}_d^{r \times r}$ a diagonal matrix with all diagonal entries non-zero, $r \geq 1$, and invertible matrices $P \in \mathbb{Z}_d^{k \times k}$, $Q \in \mathbb{Z}_d^{2n \times 2n}$. Let $M$ denote the submodule of $\mathbb{Z}_d^{2n}$ generated by the columns of $\pi_2(S)$. Then we have the following:
\begin{enumerate}[(i)]
    \item $v \in \pi_2(\langle S \rangle)$ if and only if $v \in M$, where $\pi_2(\langle S \rangle)$ can be regarded as a multiset of size $|\langle S \rangle|$.
    \item If $T \subseteq \langle S \rangle$ is a generating set of $\langle S \rangle$, then the submodule of $\mathbb{Z}_d^{2n}$ generated by the columns of $\pi_2(T)$ equals $M$.
    \item If $T \subseteq \langle S \rangle$ is a generating set of $\langle S \rangle$, then $T$ contains a subset $T'$, with $|T'| \geq r$, such that $u \in T'$ implies $\pi_2(u) \neq 0$. Moreover, the number of invariant factors of $\pi_2(T)$ is $r$.
    \item If $r=k$, then $S$ is a generating set of $\langle S \rangle$ of the smallest size.
    \item If $r \leq k$, there exists a generating set $T = T' \cup \{p\}$ of $\langle S \rangle$ such that $\langle p \rangle = I_S$ and $|T'|=r$. Moreover, any such generating set has the following properties: (a) the columns of $\pi_2(T')$ generate the submodule $M$, (b) the matrix $\pi_2(T')$ has $r$ invariant factors, and (c) for distinct elements $q,r \in T'$, $\pi_2(q)$ and $\pi_2(r)$ are distinct and non-zero.
\end{enumerate}
\end{lemma}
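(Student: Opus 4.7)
The plan is to prove the five parts in order, using each as input to the next.

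First, for (i), I would observe that $\pi_2$ is additive over products in $\overline{\mathcal{P}}_n$, since $\pi_2(pq) = \pi_2(p) + \pi_2(q)$ in $\mathbb{Z}_d^{2n}$ (the cocycle correction is absorbed into $\pi_1$, and powers satisfy $\pi_2(p^t) = t\pi_2(p)$). Therefore any product $\prod_j s_j^{a_j} \in \langle S \rangle$ has $\pi_2 = \sum_j a_j \pi_2(s_j) \in M$, and conversely every element of $M$ is realized this way. Part (ii) then follows directly: if $T$ generates $\langle S \rangle$, then $\pi_2(\langle T \rangle) = \pi_2(\langle S \rangle)$, and by (i) each side equals the submodule generated by $\pi_2(T)$, respectively $\pi_2(S)$, so both equal $M$.

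Next, for (iii), Lemma~\ref{lem:snf-nonzeros} applied to $\pi_2(S)$ gives $\Theta(M) = r$. By (ii), the columns of $\pi_2(T)$ generate $M$; dropping zero columns preserves this, so at least $r$ columns must be non-zero, i.e., $|T'| \geq r$ for $T' = \{u \in T : \pi_2(u) \neq 0\}$. Corollary~\ref{cor:equal-span-same-inv-factor} then tells us $\pi_2(T)$ and $\pi_2(S)$ share the same number $r$ of invariant factors. Part (iv) is an immediate corollary: if $r = k$, every generating set $T$ of $\langle S \rangle$ satisfies $|T| \geq r = k = |S|$, so $S$ is minimal.

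The main construction is in (v). The SNF decomposition reads $\pi_2(S) P = Q \left( \begin{smallmatrix} D & 0 \\ 0 & 0 \end{smallmatrix} \right)$ with $P, Q$ invertible. Applying Theorem~\ref{thm:equiv-gen} with the invertible matrix $A = P^T$ produces an ordered multiset $\tilde{S} = \{\tilde{s}_0, \ldots, \tilde{s}_{k-1}\}$ with $\langle \tilde{S} \rangle = \langle S \rangle$ and $\pi_2(\tilde{S}) = \pi_2(S) P$. The last $k - r$ columns of $\pi_2(\tilde{S})$ vanish, placing $\tilde{s}_r, \ldots, \tilde{s}_{k-1} \in I_S$. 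Lemma~\ref{lem:identity-gen}(ii) then yields a single $p$ generating $I_S$; set $T' = \{\tilde{s}_0, \ldots, \tilde{s}_{r-1}\}$ and $T = T' \cup \{p\}$. Since $\langle p \rangle = I_S$ contains $\tilde{s}_r, \ldots, \tilde{s}_{k-1}$, we recover $\langle T \rangle = \langle \tilde{S} \rangle = \langle S \rangle$.

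For the uniqueness claims in (v), consider any $T = T' \cup \{p\}$ as described. Since $p \in I_S$ has $\pi_2(p) = 0$, part (ii) forces $\pi_2(T')$ to generate $M$, which is (a). Corollary~\ref{cor:equal-span-same-inv-factor} then gives (b). Since $|T'| = r = \Theta(M)$ and $\pi_2(T')$ generates $M$, it is a minimal generating set of $M$, so no column can be zero or repeated---else deletion would yield a generating set of size $r-1 < \Theta(M)$, contradicting (iii) applied to the shortened $T$---which gives (c). The main obstacle will be the bookkeeping in (v): correctly transposing $P$ versus acting on columns when invoking Theorem~\ref{thm:equiv-gen}, and verifying that the consolidated phase generator $p$ really recovers all of $\tilde{s}_r, \ldots, \tilde{s}_{k-1}$ together with every other phase element of $\langle S \rangle$.
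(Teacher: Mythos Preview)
Your proposal is correct and follows essentially the same approach as the paper's proof: additivity of $\pi_2$ for part (i), the module-generation consequence for (ii), Lemma~\ref{lem:snf-nonzeros} and Corollary~\ref{cor:equal-span-same-inv-factor} for (iii), and the SNF construction via Theorem~\ref{thm:equiv-gen} with $A=P^T$ together with Lemma~\ref{lem:identity-gen}(ii) for (v). The only cosmetic wrinkle is your closing phrase ``contradicting (iii) applied to the shortened $T$'' in the proof of (c)---the actual contradiction is simply with $\Theta(M)=r$, not with (iii), since the shortened set need not generate $\langle S\rangle$; but your preceding clause already states the correct argument.
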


\begin{proof}
For part (i), suppose that $v \in \pi_2(\langle S \rangle)$. This means that there exists $p \in \langle S \rangle$ such that $\pi_2(p)=v$. Now one can write $p = \prod_{j=0}^{\ell} h_j$, where each $h_j \in S$, which upon rearrangement of the order of the product using group commutators and simplifying the result using $s_j^d = \pm I$ for each $j$ (by Lemma~\ref{lem:max-degree}), gives $p = \lambda \prod_{j=0}^{k-1} s_j^{w_j}$, where $w := (w_0,w_1,\dots,w_{k-1}) \in \mathbb{Z}_d^k$, and $\lambda \in \langle S \rangle$ with $\pi_2(\lambda)=0$. We now conclude that $v = \pi_2(p) = \sum_{j=0}^{k-1} w_j \pi_2(s_j) = \pi_2(S) w \in M$. For the other direction, assume $v \in M$, which then implies that there exists $w \in \mathbb{Z}_d^k$, such that $v = \pi_2(S) w$. Now define $p := \prod_{j=0}^{k-1} s_j^{w_j}$. Then clearly $p \in \langle S \rangle$, and $\pi_2(p)=v$, proving that $v \in \pi_2(\langle S \rangle)$.

For part (ii), let the ordered multiset $T := \{q_0,q_1,\dots,q_{\ell -1}\} \subseteq \langle S \rangle$ be a generating set of $\langle S \rangle$, and let $M'$ be the module generated by the columns of $\pi_2(T) \in \mathbb{Z}_d^{2n \times \ell}$. Take $v \in M$. Then by part (i), $v \in \pi_2(\langle S \rangle)$. Since $\langle T \rangle = \langle S \rangle$ by assumption, we again get $v \in M'$ by part (i). Thus $M \subseteq M'$, and running the argument backwards gives $M' \subseteq M$. 

Note that by Lemma~\ref{lem:snf-nonzeros}, the minimal number of generators of $M$ is $\Theta(M) = r$. Part (ii) implies that the columns of $\pi_2(T)$ generate the submodule $M$, and so by Corollary~\ref{cor:equal-span-same-inv-factor}, $\pi_2(M)$ has $r$ invariant factors. Also, if $|\{ v \in T: \pi_2(v) \neq 0 \}| < r$, then this leads to a contradiction as it implies that the minimal number of generators of $M$ is less than $r$.

Part (iv) follows because $S$ is a generating set for $\langle S \rangle$ of size $k$, and part (iii) implies that any generating set of $\langle S \rangle$ must have size $k$.

For part (v), if $r=k$, we can take $T' = S$ and any $p \in I_S$ such that $\langle p \rangle = I_S$. So assume $r < k$, and let us denote $\hat{Q} := Q 
\left(\begin{smallmatrix}
D & 0 \\
0 & 0
\end{smallmatrix}\right) = \left( \overline{Q} \;\;\; 0 \right)$, where $\overline{Q} \in \mathbb{Z}_d^{2n \times r}$ with no non-zero columns and all columns distinct (by assumption on invertibility of $Q$, and uniqueness of Smith normal form). We define the ordered multiset $U := \{u_0,u_1,\dots,u_{k-1}\} \subseteq \langle S \rangle$, with $u_i := \prod_{j=0}^{k-1} s_j^{P_{ji}}$. The equation $\pi_2(S) P = Q 
\left(\begin{smallmatrix}
D & 0 \\
0 & 0
\end{smallmatrix}\right)$ then implies $\pi_2(u_i)$ equals the $i^{\text{th}}$ column of $\hat{Q}$ for every $i$, from which we deduce that for every $i = r,r+1,\dots,k-1$, $u_i = \omega^{\mu_i} I \in I_S$ for some $\mu_i \in \mathbb{Z}_d$. Now Theorem~\ref{thm:equiv-gen} implies that $\langle S \rangle = \langle U \rangle$, as $P$ is invertible. Also Lemma~\ref{lem:identity-gen}(ii) implies that there exists $\mu \in \mathbb{Z}_d$ such that $\langle \omega^{\mu} I \rangle = I_S$. Combining these facts we deduce that $\langle S \rangle = \langle u_0,u_1,\dots,u_{r-1}, \omega^{\mu} I\rangle$. Thus we may choose $T' = \{u_0,u_1,\dots,u_{r-1}\}$ and $p=\omega^{\mu} I$. We now prove properties (a)-(c) for any such generating set. Note that the columns of $\pi_2(T)$ and $\pi_2(T')$ both generate the same submodule. Then property (a) follows by part (ii), while property (b) follows by part (iii) and Corollary~\ref{cor:equal-span-same-inv-factor}. Property (c) follows because if $\pi_2(T')$ had two columns which are the same or if one column was zero, then the number of invariant factors of $\pi_2(T')$ (which equals the minimal number of generators for the submodule generated by the columns of $T'$) would be less than $r$, which would contradict property (b).
\end{proof}

Finally, we present the proof of Lemma~\ref{lem:d-prime-min-gen-sets}.

\begin{replemma}{lem:d-prime-min-gen-sets}
Given $S \subseteq \hwgrp$, suppose that $T := T' \cup \{p\}$ is a near-minimal generating set of $\langle S \rangle$ with $\langle p \rangle = I_S$. Let $d$ be prime. Then the following conditions are equivalent.
\begin{enumerate}[(i)]
    \item $T$ is a minimal generating set of $\langle S \rangle$.
    \item $\langle T' \rangle$ is a stabilizer subgroup of $\hwgrp$, and $I_S \neq \{I\}$.
\end{enumerate}
\end{replemma}
\begin{proof}
Let the number of invariant factors of $\pi_2(S)$ be $r \geq 1$, and suppose $T' = \{q_0,q_1,\dots,q_{r-1}\}$.
We first note a few facts about $T'$. Since $d$ is prime, $\mathbb{Z}_d$ is a field. By Lemma~\ref{lem:min-gen-set-first-bound}(v), we know that $\pi_2(T')$ has $r \geq 1$ invariant factors, and since $|T'| = r$, this implies that the kernel of the matrix $\pi_2(T')$ is trivial (as $\mathbb{Z}_d$ is a field). Now let $T'' := \{p^{\gamma_0} q_0, p^{\gamma_1} q_1, \dots, p^{\gamma_{r-1}} q_{r-1}\}$, for some $(\gamma_0,\gamma_1,\dots,\gamma_{r-1}) \in \mathbb{Z}_d^r$. Then by Lemma~\ref{lem:helper-lem-phase-mult}(iv), we know that $\langle J_{T'} \rangle = \langle J_{T''} \rangle$. Also since $\pi_2(T') = \pi_2(T'')$, we conclude that the kernel of $\pi_2(T'')$ is trivial. Combining these observations, and by using Lemma~\ref{lem:KS-generators}(iii) we deduce that $I_{T'} = I_{T''}$. Let us call this Fact~(a). The other fact we need is that, since $d$ is prime, $\langle \omega^j I \rangle = \{\omega^\ell I: \ell \in \mathbb{Z}_d\}$, for every $j \in \mathbb{Z}_d \setminus \{0\}$. Let us call this Fact~(b). We now return to the proof.

First we prove that (i) implies (ii). So assume that $T$ is a minimal generating set of $\langle S \rangle$. For contradiction, assume that $I_S = \{I\}$, which then implies $p = I$. Hence $T$ cannot be a minimal generating set as $\langle T' \rangle = \langle T', p \rangle$. Thus $I_S \neq \{I\}$. Next for contradiction again assume that $\langle T' \rangle$ is not a stabilizer group. This means that there exists $\omega^j I \in \langle T' \rangle$, for some $j \in \mathbb{Z}_d \setminus \{0\}$, and then Fact~(b) implies that $p \in \langle T' \rangle$. Thus again we conclude that $T$ is not a minimal generating set of $\langle S \rangle$.

Next we prove that (ii) implies (i). So assume that $T'$ is a stabilizer group and $I_S \neq \{I\}$, which implies that $I_{T'} = \{I\}$ and $p \neq I$, respectively. For contradiction, assume that $T$ is not a minimal generating set of $\langle S \rangle$. Then by Theorem~\ref{thm:min-gen-set-simple-form}(iii), we can conclude that there exist integers $\gamma_0, \gamma_1,\dots, \gamma_{r-1} \in \mathbb{Z}_d$, such that $p \in \langle T'' \rangle$, where $T'' := \{p^{\gamma_0} q_0, p^{\gamma_1} q_1, \dots, p^{\gamma_{r-1}} q_{r-1}\}$. Now by Fact~(a) we also have that $I_{T''} = I_{T'} = \{I\}$. Hence we can conclude that $p = I$, giving a contradiction. This proves the lemma.
\end{proof}

}
\edit{

\section{Maximum collections of non-commuting pairs}
\label{ssec:case-max-number-pairs}

While subsection~\ref{ssec:achievable_tuples} gave both necessary and sufficient conditions for a $k$-tuple $(f_0,\dots,f_{k-1}) \in (\mathbb{Z}_d\setminus\{0\})^{k}$ to be an achievable non-commuting pair relation on $n$ qudits, it is possible to give a more direct characterization of which $k$-tuples are achievable non-commuting pair relations for the case $k=nm$, i.e. when we have the maximum number of non-commuting pairs on $n$ qudits.

To simplify the presentation, let us introduce the following notation.

\textbf{Notation.} Suppose $f := (f_0,\dots,f_{k-1})$ is an achievable non-commuting pair relation on $n$ qudits. Suppose $\beta := (\beta_0,\dots,\beta_{k-1}) \in (\mathbb{Z}_d\setminus\{0\})^{k}$ be such that $\beta_i f_i \not \equiv 0 \pmod{d}$, for every $i=0,\dots,k-1$. For a fixed $f$, we will denote the set of all such $\beta$ by $\pmb{\beta}(f)$. If $\beta \in \pmb{\beta}(f)$, we will denote $f_{\beta} := (f'_0,\dots,f'_{k-1})$, where $f'_i = \beta_i f_i \mod{d}$, for every $i$. Let $\text{Sym}(k)$ denote the permutation group on the set of indices $\{0,\dots,k-1\}$. If $\sigma \in \text{Sym}(k)$, we will denote $f_\sigma := (f_{\sigma(0)},\dots,f_{\sigma(k-1)})$.

Given $f_i \in \mathbb{Z}_d\setminus\{0\}$ as in the above notation, it is useful to explicitly write down for which values of $\beta_i \in \mathbb{Z}_d\setminus\{0\}$, we have $\beta_i f_i \not \equiv 0  \pmod{d}$. This is easy to calculate. We start by writing $f_i = h \gcd (f_i, d)$, the $\gcd$ evaluated over integers, so that we have $\gcd (h,d)=1$. Now consider the set $\delta_i := \{t d / \gcd (f_i,d) : t \in \mathbb{Z}, 1 \leq t < \gcd (f_i,d)\}$. Then we claim that $\beta_i f_i \not \equiv 0  \pmod{d}$ if and only if $\beta_i \not \in \delta_i$. The forward direction is easy (we prove the contrapositive): if $\beta_i \in \delta_i$, then we have $\beta_i f_i = htd$ for some integer $t$, and so $\beta_i f_i \equiv 0 \pmod{d}$. For the converse direction, assume for contradiction that $\beta_i f_i \equiv 0 \pmod{d}$, and $\beta_i \in (\mathbb{Z}_d\setminus\{0\}) \setminus \delta_i$. This means that $d$ divides $h \gcd (f_i,d) \beta_i$, and since $\gcd (h,d)=1$ this implies that $\beta_i$ is a multiple of $d/ \gcd (f_i,d)$, which is a contradiction.

With the notation above, we state a helper lemma:

\begin{lemma}
\label{lem:helper-lem-1}
If $f := (f_0,\dots,f_{k-1})$ is an achievable non-commuting pair relation on $n$ qudits, then
\begin{enumerate}[(i)]
    \item $f_{\sigma}$ is an achievable non-commuting pair relation on $n$ qudits, for every $\sigma \in \text{Sym}(k)$.
    \item $f_{\beta}$ is an achievable non-commuting pair relation on $n$ qudits, for every $\beta \in \pmb{\beta}(f)$.
    \item $\bar{f} := (f_0,\dots,f_{k-1}, f_k)$ is an achievable non-commuting pair relation on $n+1$ qudits, for every $f_k \in \mathbb{Z}_d\setminus\{0\}$.
\end{enumerate}
\end{lemma}

\begin{proof}
Let $S=\{s_0,\dots,s_{k-1}\}$ and $T=\{t_0,\dots,t_{k-1}\}$ be a non-commuting pair that achieves $f$. For part (i), let $\sigma \in \text{Sym}(k)$. Define the ordered sets $S_{\sigma} := \{s_{\sigma(0)},\dots,s_{\sigma(k-1)}\}$ and $T_{\sigma} := \{t_{\sigma(0)},\dots,t_{\sigma(k-1)}\}$. Then $S_{\sigma}$ and $T_{\sigma}$ achieves $f_{\sigma}$.

For part (ii), let $u_0,\dots,u_{k-1} \in \mathbb{Z}_d^{2n}$ be such that $s_i = P(u_i)$, for every $i$, and let $\beta \in \pmb{\beta}(f)$. Define the ordered set $\tilde{S} := \{P(\tilde{u}_0),\dots,P(\tilde{u}_{k-1})\}$, where $\tilde{u}_i = \beta_i u_i \bmod{d}$. Then $\tilde{S},T$ achieves $\tilde{f}$, because for all $i, j \in \{0,\dots,k-1\}$, we have
\begin{equation}
    \llbracket P(\tilde{u}_i),P(\tilde{u}_j)\rrbracket_d = \llbracket s_i,s_j\rrbracket_d = \llbracket t_i,t_j\rrbracket_d = 0, \;\;\;\;\;
    \llbracket P(\tilde{u}_i),t_j\rrbracket_d =
    \begin{cases}
        \beta_i f_i \bmod{d} \;\;\; & \text{if} \;\; i=j \\
        0 \;\;\; & \text{if} \;\; i \neq j.
    \end{cases}
\end{equation}

For part (iii), let $f_k \in \mathbb{Z}_d\setminus\{0\}$ be arbitrary. Now construct the ordered sets $\bar{S} := \{\bar{s}_0,\dots,\bar{s}_k\} \subseteq \mathcal{P}_{n+1}$ and $\bar{T} := \{\bar{t}_0,\dots,\bar{t}_k\} \subseteq \mathcal{P}_{n+1}$, where $\bar{s}_i = s_i \otimes I$, $\bar{t}_i = t_i \otimes I$ (here $I$ is $d \times d$), for every $i=0,\dots,k-1$, and $\bar{s}_k = I \otimes X$, $\bar{t}_k = I \otimes Z^{f_k}$ (here $I$ is $d^n \times d^n$). Then $\bar{S}, \bar{T}$ achieves $\bar{f}$.
\end{proof}

We can now provide the complete characterization of all achievable non-commuting pair relations on $n$ qudits, when $k=nm$.
\begin{theorem}
\label{thm:nm-case}
Let $\text{Sym}(nm)$ denote the permutation group on the set of indices $\{0,1,\dots,nm-1\}$. Define $f := (f_0,\dots,f_{nm-1}) \in (\mathbb{Z}_d\setminus\{0\})^{nm}$ given by
\begin{equation}
\label{eq:nm-case-1}
    f_{i + jn} = \frac{d}{p_j^{\alpha_j}}, \;\;\; \text{for all } i=0,1,\dots,n-1, \;\; j = 0,1,\dots,m-1,
\end{equation}
and also the set
\begin{equation}
\label{eq:nm-case-possible-tuples}
    \mathcal{F} := \{(f_{\beta})_{\sigma} \in (\mathbb{Z}_d\setminus\{0\})^{nm}: \beta \in \pmb{\beta}(f), \; \sigma \in \text{Sym}(nm) \}.
\end{equation}
Then the following hold:
\begin{enumerate}[(i)]
    \item If $\beta \in \pmb{\beta}(f)$, then for every $i,j$, we have $\beta_{i+jn} \in \mathbb{Z}_d\setminus\{0\} \setminus \{t p_j^{\alpha_j} : t \in \mathbb{Z}, 1 \leq t < d/ p_j^{\alpha_j}\}$, and $(f_{\beta})_{i+jn} \in \{t d/ p_j^{\alpha_j} : t \in \mathbb{Z}, 1 \leq t < p_j^{\alpha_j}\}$.
    \item Each element of $\mathcal{F}$ is an achievable non-commuting pair relation on $n$ qudits. Conversely, if any $f' \in (\mathbb{Z}_d\setminus\{0\})^{nm}$ is an achievable non-commuting pair relation on $n$ qudits, then $f' \in \mathcal{F}$.
\end{enumerate}
\end{theorem}

\begin{proof}
(i) The statement about $\beta_{i+jn}$ follows from the argument in the paragraph above Lemma~\ref{lem:helper-lem-1}. This now implies that $(f_{\beta})_{i+jn} \in \{t d/ p_j^{\alpha_j} : t \in \mathbb{Z}, 1 \leq t < p_j^{\alpha_j}\}$.

(ii) We first prove the forward direction. We choose the ordered sets $S := \{P(u_0),\dots,P(u_{nm-1})\}$ and $T := \{P(v_0),\dots,P(v_{nm-1})\}$, where the vectors $u_0,\dots,u_{nm-1},v_0,\dots,v_{nm-1} \in \mathbb{Z}^{2n}_d$ have components defined by
\begin{equation}
    (u_{i + jn})_{\ell} = 
    \begin{cases}
        d/p_j^{\alpha_j} &\;\;\; \text{if } \ell = i \\
        0 &\;\;\; \text{if } \ell \neq i
    \end{cases}\;\;, \;\;\;\; 
    (v_{i + jn})_{\ell} = 
    \begin{cases}
        -d/p_j^{\alpha_j} &\;\;\; \text{if } \ell = i+n \\
        0 &\;\;\; \text{if } \ell \neq i+n
    \end{cases}\;\;,
\end{equation}
for all $i \in \{0,1,\dots,n-1\}$, $j \in \{0,1,\dots,m-1\}$, and $\ell \in \{0,\dots,2n-1\}$. Then one can easily check that $S,T$ achieves the non-commuting pair relation $\bar{f} = (\bar{f}_0,\dots,\bar{f}_{nm-1}) \in (\mathbb{Z}_d\setminus\{0\})^{nm}$, where $\bar{f}_i = (f_i)^2 \bmod{d}$, for every $i=0,\dots,nm-1$. Now by Corollary~\ref{cor:helper-cor-2-1}, there exists $\beta_0,\beta_1,\dots,\beta_{m-1} \in \mathbb{Z}_d\setminus\{0\}$, such that $\beta_j (d / p_j^{\alpha_j})^2 \bmod{d} = d / p_j^{\alpha_j}$, for every $j$. By Lemma~\ref{lem:helper-lem-1}(ii), we then conclude that $f$ is also an achievable non-commuting pair relation on $n$ qudits. Finally, each element of $\mathcal{F}$ is an achievable non-commuting pair relation on $n$ qudits by Lemma~\ref{lem:helper-lem-1}(i),(ii).

For the converse, suppose that the ordered sets $S = \{s_0,\dots,s_{nm-1}\}$ and $T = \{t_0, \dots,t_{nm-1}\}$ are a collection of non-commuting pairs on $n$ qudits that generate $f' = (f'_0,\dots,f'_{nm-1}) \in (\mathbb{Z}_d\setminus\{0\})^{nm}$. For every $j=0,\dots,m-1$, define the multisets $F_j := \{f'_i : f'_i \not\equiv 0 \pmod{p_j^{\alpha_j}}, \; i \in \{0,\dots,nm-1\}\}$, and then by Lemma~\ref{lem:lower-bound} we know that $|F_j| \leq n$. We now claim that (i) $|F_j| = n$, for every $j$, and (ii) $F_{\ell} \cap F_j = \emptyset$, whenever $j \neq \ell$. To see this note that if any of these two conditions does not hold, then by a simple counting argument we conclude that there exists $f'_i \not \in \cup_{j=1}^{m} F_j$. But then that would imply that $f'_i$ is divisible by every $p_j^{\alpha_j}$, or equivalently $f'_i$ is divisible by $d$. This gives a contradiction, which proves the claim. The claim also implies that each $f'_i$ is an element of exactly one $F_j$. Now fix a $F_j$ and choose any $f'_i \in F_j$. Then we must have $f'_i = k (d / p_j^{\alpha_j})$, for some $k \not \equiv 0 \pmod{p_j^{\alpha_j}}$. This immediately implies $f' \in \mathcal{F}$, as $F_j$ is arbitrary.
\end{proof}

}
\bibliographystyle{quantum}
\bibliography{bibliography}
\end{document}